\newcommand{\N}{\mathbb{N}}
\newcommand{\R}{\mathbb{R}}
\newcommand{\C}{\mathbb{C}}
\newcommand{\Id}{\mathds{1}}
\newcommand{\Hi}{\mathscr{H}}
\newcommand{\eps}{\varepsilon}
\newcommand{\Tr}{\mathrm{Tr}}
\newcommand\norm[1]{\left\lVert#1\right\rVert}
\newcommand\abs[1]{\left\lvert#1\right\rvert}
\newcommand\pure[1]{| #1 \rangle \langle #1 |}
\newcommand{\di}{\mathrm{d}}
\newcommand{\qh}{\hat{q}}
\newcommand{\ph}{\hat{p}}
\numberwithin{equation}{section}
\theoremstyle{plain} 
\newtheorem{thm}[equation]{Theorem}
\newtheorem{cor}[equation]{Corollary}
\newtheorem{lem}[equation]{Lemma}
\newtheorem{prop}[equation]{Proposition}
\theoremstyle{definition}
\newtheorem{defn}[equation]{Definition}
\theoremstyle{remark}
\newtheorem{rem}[equation]{Remark}
\begin{document}

\title[Semiclassical limit of polaron dynamics]{Semiclassical limit of polaron dynamics with cutoff
}

\author{Raphaël Gautier}
\address{Dipartimento di
  Matematica\\Politecnico di Milano\\P.zza Leonardo da Vinci 32\\20133,
  Milano\\Italy\\\& Université de Rennes\\ Campus de Beaulieu\\263, avenue du
  Général Leclerc\\35042 RENNES CEDEX\\France}
\email{raphael.gautier@univ-rennes.fr}

\date{}

\subjclass[2020]{}
\keywords{}

\begin{abstract}
We define and study the global well-posedness of a classical polaron dynamics and its derivation from the quantum dynamics of the Fröhlich Hamiltonian from the viewpoint of Wigner measures. We make a crucial use of the Gross transform, allowing us to extract a diverging constant. Finally, we discuss the difficulty of defining the classical dynamics when removing the cutoff in our setting.
\end{abstract}

\maketitle
\tableofcontents

\maketitle

\section{Introduction}
The Fröhlich Hamiltonian describes a system consisting of an electron coupled with an induced polarization medium, that can be viewed as a (quasi-)particle, called the polaron. One can interpret the Fröhlich model in the so-called strong coupling limit as a system composed of a quantum particle coupled with a semiclassical bosonic field, and in this scaling, the self-energy and the effective mass of the polaron have been extensively studied, see \cite{lieb1958,lieb1997,Sei02,Mol06,MS07,Ric14,Sei19, BS23, BS24, BM24}, and at the classical level as well, i.e. for an effective evolution described by the Landau-Pekar equations, see, for instance, \cite{LR13a,LR13b,FG17,Gri17,GSS17,LRSS19,LMS20,FRS22}. For the study of quantum corrections, see \cite{FS19,FS21}, and for a probabilistic approach on the strong coupling situation using path measures, one can refer to \cite{DV83,DS20,BP23,Sell24,BSS25}, for example.
\\ \\
In this work, we consider the full semiclassical viewpoint by considering the particle also classical instead of quantum. This will lead on the classical side to a system of PDEs describing the system in this scaling limit. To establish the derivation, we make use of the Wigner measure approach in an infinite dimensional phase-space initiated in \cite{ammari2008ahp}. The latter led to multiple results on the semiclassical limit of quantum models, such as the Nelson model \cite{ammari2014jsp,farhat2024}, its renormalized version in the mean-field limit in \cite{ammari2017sima}, or Pauli-Fierz \cite{leopold2020sima,afh22}. This measure theoretical approach has also been developed in \cite{correggi2021jems} to analyze systems describing quantum particles (or fields) interacting with semiclassical bosonic fields. The Nelson, Fröhlich, and Pauli-Fierz dynamics in this scaling are studied in \cite{CF18,correggi2021jems}, and recent work studied further properties of such systems, such as ground state energy \cite{correggi2020arxiv} and decoherence \cite{fantechi2024arxiv}.
\\ \\
A particularity of the Fröhlich Hamiltonian, shared by the Nelson model, is the singularity lying in the interaction term. By a well-known trick of Lieb and Yamazaki, the Hamiltonian can be viewed as the one emerging from a closed and bounded from below quadratic form. However, to better understand the properties of such an operator, one needs to introduce an ultraviolet cutoff, which cuts high frequencies of the interaction.
\\ \\
In \cite{gross1962}, Gross introduced a ``dressing", a unitary transform allowing to extract the divergence of the operator. In this manner, one can define the renormalized Hamiltonian as a self-adjoint operator. A complete description of its domain of has been studied in \cite{griesemer2016}. We study here the dynamics generated by the dressed and undressed Hamiltonian in the semiclassical scaling, as well as the semiclassical limit of the dressing transform, and their convergence towards globally well-posed classical PDEs. We restrict our paper to the case with UV-cutoff, and discuss the difficulty of removing the cutoff in our scaling.

\section{Settings and main results}

We consider the classical phase space $$\Hi := \Hi_p \oplus \Hi_f,$$
where 
$\Hi_p := \R^{2 d} \cong \C^d$ describes a classical particle and $\Hi_f := L^2 (\R^d, \di k)$ a classical field. The space $\Hi$ is endowed with the norm $$\norm{(q,p,\alpha)} := \abs{q}^2 + \abs{p}^2 + \norm{\alpha}_{L^2}^2,$$ and the symplectic form $$\sigma((z_1,\alpha_1),(z_2,\alpha_2)) := \mathrm{Im} \langle z_1 , z_2 \rangle_{\C^d} + 2 \mathrm{Im} \langle \alpha_1, \alpha_2 \rangle_2.$$
The Hilbert space of the quantum theory is the bosonic Fock space over the phase space $\Hi$:
$$
\Gamma_s (\Hi) := \bigoplus_{n \in \N} \Hi^{\otimes_s n} \cong L^2 (\R^d, \di q) \otimes \Gamma_s (\Hi_f).
$$
The polaron Hamiltonian is:
$$
H^\Lambda_\hbar:= \frac{\abs{\ph}^2}{2} + N_\hbar + \phi_\hbar (f^\Lambda_{\qh}) = H_\hbar^0 + \phi (f^\Lambda_{\qh}).
$$
Here $ \ph = -i \hbar \nabla_q $ is the momentum operator, $\Lambda >0$ is a fixed cutoff and 
$$f^\Lambda_{\qh} (k) := \frac{e^{-i k . \qh} \Id_{\abs{k}\leq \Lambda} }{\abs{k}^{(d-1)/2}} = e^{-i k . \qh} f^\Lambda (k)$$ is viewed as a bounded form factor $f_{q} \in L^\infty_q (L^2_k) = L_q^\infty(\Hi_f)$. 

The operator $N_\hbar = \Id_{L^2 (\R^d)} \otimes \di \Gamma_\hbar (\Id_{\Hi_f})$ is called number operator, $\di \Gamma_\hbar (\Id_{\Hi_f})$ corresponds to the second quantization of $\Id_{\Hi_f}$, and $\phi (f^\Lambda_{\qh}) = \frac{1}{\sqrt 2} ( a_\hbar(f^\Lambda_{\qh}) + a_\hbar^*(f^\Lambda_{\qh}) )$ is the Segal field operator, where $a_\hbar(f^\Lambda_{\qh}),a_\hbar^*(f^\Lambda_{\qh})$ are the $\hbar$-dependent creation and annihilation operator satisfying the following commutation relations:
$$[a_\hbar(\xi),a_\hbar^* (\eta)] = \hbar \langle \xi, \eta \rangle \Id,      \ \ \
    [a_\hbar(\xi), a_\hbar(\eta)] =  [a^*_\hbar(\xi), a^*_\hbar(\eta)]=0.
    $$

Here $\phi (f^\Lambda_{\qh})$ models the interaction between the particle and the bosonic field.
We call $$H_\hbar^0 = \frac{\abs{\ph}^2}{2} + N_\hbar $$ the free (noninteracting) Hamiltonian. \newline \newline
The operator $H_\hbar^\Lambda$ is self-adjoint on $D(\abs{\ph}^2 + N) = D(\abs{\ph}^2) \cap D(N)$ by Kato-Rellich theorem, since $f^\Lambda_{\qh} \in L^2$. However, if one tries to remove the cutoff, i.e. substituing $\Lambda = \infty$, then $f^\infty_{\qh}$ is no longer square integrable. The Hamiltonian $H^\infty_\hbar$ still remains well defined as a quadratic form on $Q(H_\hbar^0)$, thanks to the interaction $\phi (f^\infty_{\qh})$ being a  KLMN perturbation of the free Hamiltonian $H_\hbar^0$. A short proof of these statements is provided in Appendix A for completeness. In \cite{griesemer2016}, it was shown that the polaron Hamiltonian could still be defined as a norm resolvent limit, but with a domain that is disjoint from the domain of the free Hamiltonian: namely,
$$
D(H^\infty_\hbar) \cap D (\abs{\ph}^2) = \{ 0 \}.
$$
To obtain such result, a crucial point was to use a unitary transform, called Gross transform, which is rescaled in our case with the semiclassical parameter $\hbar$:
\begin{equation}
    \label{Gross}
    U_\hbar^\Lambda := e^{\frac{i}{\hbar} \phi_\hbar (i B^\Lambda_{\qh,\hbar})}, 
\end{equation}
where 
$$
B^\Lambda_{\qh,\hbar}(k) = - \frac{\Id_{\abs{k} \geq K}}{1 + \hbar \abs{k}^2/2} f^\Lambda_{\qh}(k) = e^{- i k . \qh} B_\hbar^\Lambda (k).
$$
When $\hbar =0$, we will denote $B^\Lambda_q :=B^\Lambda_{q,0}$. One notices that for all $\hbar >0$, $B^\infty_{\qh,\hbar}$ is now square-integrable. The transformed Hamiltonian, called the dressed Hamiltonian, is self-adjoint on $D(H_\hbar^0)$ and has the following form:
\begin{equation}
    \label{dressed}
\begin{aligned}
\hat{H}_\hbar^\Lambda &= U_\hbar^\Lambda H_\hbar^\Lambda (U_\hbar^\Lambda )^* - C_\hbar^\Lambda. \\
&= \frac{\abs{\ph}^2}{2} + N + \phi(f^K_{\qh}) - \frac{1}{\sqrt{2}} ( a^* (k B_{\qh,\hbar}^\Lambda) . \ph +  \ph . a(k B_{\qh,\hbar}^\Lambda) ) + \frac{\phi (k B_{\qh,\hbar}^\Lambda)^2}{2},
\end{aligned}
\end{equation}
with $C_\hbar^\Lambda = \norm{B_\hbar^\Lambda}^2 + \langle B_\hbar^\Lambda , f^\Lambda \rangle$. The latter scalar quantity diverges in the limit $\hbar \rightarrow 0$, but is not seen by the quantum evolution. It acts as a vacuum renormalization of the model, and has also effect on the classical side. Define the undressed energy functional as
    $$  \mathscr E^\Lambda : \begin{array}{ll} \Hi &\longrightarrow \R \\
        (q,p,\alpha) &\longmapsto \frac{\abs{p}^2 }{2} + \norm{\alpha}_2^2 + \sqrt{2} \mathrm{Re} \langle \alpha, f_q^\Lambda \rangle_2.
        \end{array} 
        $$
Then it is easy to see that
$$\underset{\Hi}{\inf} \ \mathscr E^\Lambda = \underset{(q,p,\alpha) \in \Hi}{\inf} \left\{ \frac{\abs{p}^2}{2} + \norm{\alpha + \frac{1}{\sqrt{2}}f_q^\Lambda}_2^2 - \frac{\norm{f^\Lambda}^2_2}{2} \right\} =   - \frac{\norm{f^\Lambda}^2_2}{2}>-\infty. $$
Since $\norm{f^\Lambda} \underset{\Lambda \rightarrow +\infty}{\longrightarrow} \infty$, the undressed classical energy is unbounded from below in the limit of cutoff removal. Define now the dressed energy functional
\begin{align*}
    \hat{\mathscr E}^\Lambda : \begin{array}{ll} \Hi &\longrightarrow \R \\
        (q,p,\alpha) &\longmapsto \frac{\abs{p}^2 }{2} + \norm{\alpha}_2^2 + \sqrt{2} \mathrm{Re} \langle \alpha, f_q^K \rangle_2 - \sqrt{2}  \mathrm{Re} \langle \alpha, (k.p) B_{q}^\Lambda \rangle_2 + (\mathrm{Re} \langle \alpha, k B_{q}^\Lambda \rangle_2)^2.
    \end{array}
\end{align*}
Then,
\begin{align*}\underset{\Hi}{\inf} \ \hat{\mathscr E}^\Lambda &= \underset{(q,p,\alpha) \in \Hi}{\inf} \left\{ \frac{\abs{p - \sqrt{2} \mathrm{Re} \langle \alpha , k B_{q}^\Lambda \rangle_2}^2}{2} + \norm{\alpha + \frac{1}{\sqrt{2}}f_q^K}_2^2 - \frac{\norm{f^K}^2_2}{2} \right\} \\ &\geq  - \frac{\norm{f^K}^2_2}{2}, 
\end{align*}
so that $\hat{\mathscr E}^\Lambda$ is bounded from below uniformly in $\Lambda$. This means that using a dressing on the quantum side, it has been possible to renormalize the classical energy, uniformly with respect to the cutoff. Making sense of those energies in the limit $\Lambda \rightarrow \infty$ is a non trivial issue, motivating us to investigate the semiclassical limit of the dressed polaron Hamiltonian with cutoff. This will be done by factorizing the analysis, i.e. studying the undressed Hamiltonian as well as the dressing transform.
\\
\\
Consider the Hamilton equations associated respectively to the undressed and dressed energy:
\begin{equation}
    \label{undressedeulerlagrange}
    \begin{cases}
        \dot q_t &= p_t, \\
        \dot p_t &= \sqrt{2}\mathrm{Re} \langle \alpha_t, i k f_{q_t}^\Lambda \rangle_2, \\
        i \dot \alpha_t &= \alpha_t + \frac{1}{\sqrt{2}} f_{q_t}^\Lambda,
    \end{cases}
\end{equation}
and
\begin{equation}
    \label{dressedeulerlagrange}
    \begin{cases}
        \dot q_t &= p_t - \sqrt{2} \mathrm{Re} \langle \alpha_t, k B_{q_t}^\Lambda \rangle_2, \\
        \dot p_t &= \sqrt{2} \mathrm{Re} \langle \alpha_t, i k (f_{q_t}^K + (k.p_t) B_{q_t}^\Lambda) \rangle_2 + 2 \sum_{j=1}^d \mathrm{Re} \langle \alpha_t, k_j B_{q_t}^\Lambda \rangle_2 . \mathrm{Re} \langle \alpha_t, -i k_j  k B_{q_t}^\Lambda \rangle_2, \\
        i \dot \alpha_t &= \alpha_t + \frac{1}{\sqrt{2}} f_{q_t}^K -\frac{k.p_t}{\sqrt{2}} B_{q_t}^\Lambda + \sum_{j=1}^d k_j B_{q_t}^\Lambda \mathrm{Re} \langle \alpha_t, k_j B_{q_t}^\Lambda \rangle_2.
    \end{cases}
\end{equation}

To link these two PDEs, we will study the dressing more in depth in Section 2. To do so, we add a dynamical parameter $\theta \in \R$ and consider $U_\hbar^\Lambda (\theta) = (U_\hbar^\Lambda)^\theta$ as an evolution group. We will then study the semiclassical limit of the system when evolved with the generator of the group of isometries $U_\hbar^\Lambda (\theta)$, which is $\pi_\hbar(B_{\qh,\hbar}^\Lambda) := \phi_\hbar ( i B_{\qh,\hbar}^\Lambda)$. This leads to the definition of the dressing energy functional
\begin{align*}
    \mathscr E_{\text{D}}^\Lambda : \begin{array}{ll} \Hi &\longrightarrow \R \\
        (q,p,\alpha) &\longmapsto \sqrt{2} \mathrm{Re} \langle \alpha, i B_{q}^\Lambda \rangle_2.
    \end{array}
\end{align*}

The associated Euler-Lagrange equations are
\begin{equation}
    \label{dressingeulerlagrange} 
    \begin{cases}
        \dot q_\theta = 0 \\
        \dot p_\theta = -\sqrt{2} \mathrm{Re} \langle \alpha, k B_{q_\theta}^\Lambda \rangle_2  \\
        i \dot \alpha_\theta = \frac{i}{\sqrt{2}} B_{q_\theta}^\Lambda.
    \end{cases}
\end{equation}

As one can see, \eqref{undressedeulerlagrange} and \eqref{dressingeulerlagrange} look easier to study than \eqref{dressedeulerlagrange}. It is indeed the case, and we will prove in Section $2$ the following result :
\begin{thm} \label{clthm}
    The initial value problems \eqref{undressedeulerlagrange}, \eqref{dressedeulerlagrange} and \eqref{dressingeulerlagrange} are globally well posed in $\mathscr C^1(\R,\Hi)$. Furthermore the respective flows $\Phi(t)$, $\hat \Phi (t)$ and $\mathcal D (t)$
    are Borel measurable and are linked via the relation
    $$
    \hat \Phi(t) = \mathcal D(1) \circ \Phi(t) \circ \mathcal D(-1).
    $$
    Finally, for all $\theta \in \R$, $\mathcal D (\theta) $ is a symplectomorphism of $(\Hi,\sigma)$.
\end{thm}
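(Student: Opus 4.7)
The plan is to integrate \eqref{dressingeulerlagrange} explicitly, establish its symplectic nature, then reduce \eqref{dressedeulerlagrange} to \eqref{undressedeulerlagrange} by conjugation. For the dressing system, $\dot q_\theta = 0$ forces $q_\theta \equiv q_0$; the equation for $\alpha$ becomes a constant-source ODE yielding $\alpha_\theta = \alpha_0 + \frac{\theta}{\sqrt 2} B_{q_0}^\Lambda$, and substitution into the $p$-equation gives
$$p_\theta = p_0 - \sqrt 2\,\theta\,\mathrm{Re}\langle\alpha_0, k B_{q_0}^\Lambda\rangle_2 - \frac{\theta^2}{2}\,\mathrm{Re}\langle B_{q_0}^\Lambda, k B_{q_0}^\Lambda\rangle_2.$$
This defines an explicit, globally $\mathscr C^1$ flow $\mathcal D(\theta)$, smooth in $q$ (since $q\mapsto B_q^\Lambda$ is smooth into $L^2$ thanks to the cutoff), affine in $(p,\alpha)$ and polynomial in $\theta$, so Borel measurability is automatic. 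The symplectomorphism property can be read off from the explicit formula, or deduced from the general fact that Hamiltonian flows preserve $\sigma$, applied to the smooth Hamiltonian $\mathscr E_{\text{D}}^\Lambda$.

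For \eqref{undressedeulerlagrange}, the nonlinearity enters only through $q\mapsto f_q^\Lambda\in L^2$, which is globally Lipschitz because $k f^\Lambda\in L^2$, while the rest of the vector field is affine in $(p,\alpha)$; Picard--Lindel\"of then gives local well-posedness. Global well-posedness follows from conservation of $\mathscr E^\Lambda$: a Cauchy--Schwarz and Young estimate on the cross term $\sqrt 2\,\mathrm{Re}\langle\alpha, f_q^\Lambda\rangle$ yields a uniform bound on $\abs{p_t}^2 + \norm{\alpha_t}_2^2$, and $\abs{q_t}$ grows at most linearly via $\dot q = p$. Continuous dependence on initial data then provides Borel measurability of $\Phi(t)$.

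To handle \eqref{dressedeulerlagrange}, I would set $\tilde\Phi(t):=\mathcal D(1)\circ\Phi(t)\circ\mathcal D(-1)$ and establish the key pullback identity
$$\mathscr E^\Lambda \circ \mathcal D(-1) = \hat{\mathscr E}^\Lambda + C^\Lambda$$
for some scalar constant $C^\Lambda$ (the classical mirror of $C_\hbar^\Lambda$), by expanding $\mathscr E^\Lambda(q,p_{-1},\alpha_{-1})$ with the explicit Step~1 formulas and using the decomposition $f_q^\Lambda = f_q^K - B_q^\Lambda$. Since additive constants do not affect Hamilton's equations and $\mathcal D(-1)$ is a symplectomorphism, the standard transformation rule $X_{H\circ\Psi} = \Psi^*X_H$ forces $\tilde\Phi(t)$ to coincide with the Hamiltonian flow of $\hat{\mathscr E}^\Lambda$. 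This immediately yields the intertwining relation, the global $\mathscr C^1$ character of $\hat\Phi$, and Borel measurability by composition.

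The hard part will be the pullback computation, which requires matching several quadratic cross terms against those defining $\hat{\mathscr E}^\Lambda$; this is the classical counterpart of the algebraic manipulation leading to \eqref{dressed}, and is most cleanly organised by the splitting $f^\Lambda = f^K + (f^\Lambda-f^K)$. Every other assertion in the theorem then reduces to standard ODE theory and elementary symplectic geometry.
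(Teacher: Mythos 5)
Your overall plan matches the paper's: integrate \eqref{dressingeulerlagrange} explicitly, establish global well-posedness of \eqref{undressedeulerlagrange} by a fixed-point argument combined with conservation of $\mathscr E^\Lambda$, check the symplectomorphism property of $\mathcal D(\theta)$ explicitly, and transfer everything to \eqref{dressedeulerlagrange} by conjugating through the dressing. One small remark on Step~1: the quadratic-in-$\theta$ correction you write in $p_\theta$, namely $-\frac{\theta^2}{2}\mathrm{Re}\langle B_{q_0}^\Lambda, k B_{q_0}^\Lambda\rangle_2$, is actually identically zero because $|B^\Lambda(k)|^2$ is radial; this is worth noticing, since it is exactly what makes $\theta\mapsto\mathcal D(\theta)$ a one-parameter group.

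The genuine issue is the direction of the dressing in your pullback identity, which is the one step you deferred (``the hard part''). If you actually carry out the expansion you will find that $\mathscr E^\Lambda\circ\mathcal D(-1)$ does \emph{not} differ from $\hat{\mathscr E}^\Lambda$ by a constant: the kinetic term becomes $\frac{1}{2}\bigl|p+\sqrt 2\,\mathrm{Re}\langle\alpha,kB_q^\Lambda\rangle\bigr|^2$ (wrong sign on the cross term), and the field cross term becomes $\sqrt 2\,\mathrm{Re}\langle\alpha, 2f_q^\Lambda - f_q^K\rangle$ rather than $\sqrt 2\,\mathrm{Re}\langle\alpha, f_q^K\rangle$. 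The identity that does hold, using $f_q^\Lambda + B_q^\Lambda = f_q^K$ and completing the square, is
$$
\mathscr E^\Lambda\circ\mathcal D(1) \;=\; \hat{\mathscr E}^\Lambda \;-\; \tfrac12\norm{B^\Lambda}_2^2,
$$
and via the symplectic transformation rule this gives $\hat\Phi(t)=\mathcal D(-1)\circ\Phi(t)\circ\mathcal D(1)$ — the opposite bracketing from what you aim at (and from what the displayed statement of the theorem reads, but in agreement with the paper's own Proposition~3.17). You should prove the pullback through $\mathcal D(1)$ and flip the intertwining accordingly. On the positive side, you are right to allow for an additive constant $C^\Lambda$; the paper silently drops the constant $-\tfrac12\norm{B^\Lambda}_2^2 = \tfrac12(\norm{f^K}_2^2 - \norm{f^\Lambda}_2^2)$ in the last line of the proof of Proposition~3.17, and your formulation is the more careful one. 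Once the direction is corrected, the rest of your argument (Picard--Lindelöf plus energy conservation for $\Phi$, explicit formulas plus the $q$-smoothness of $B_q^\Lambda$ for $\mathcal D$, and the standard $X_{H\circ\Psi}=\Psi^*X_H$ rule for $\hat\Phi$) goes through as in the paper.
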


\begin{rem}
    We say that the initial value problem \eqref{undressedeulerlagrange} (resp. \eqref{dressedeulerlagrange} and \eqref{dressingeulerlagrange}) are globally well-posed in $\mathscr C^1(\R,\Hi)$ if for all initial condition $u_0 \in \Hi$, there exists a unique solution $u \in \mathscr C^1(\R, \Hi)$ satisfying \eqref{undressedeulerlagrange} (resp. \eqref{dressedeulerlagrange} and \eqref{dressingeulerlagrange}) and $u(0) = u_0$. This ensures that, for instance, the flow of \eqref{undressedeulerlagrange}, defined as
    \begin{align*}
    \Phi(t) : \begin{array}{ll} \Hi &\longrightarrow \mathscr C^1 (\R,\Hi) \\
        u_0 &\longmapsto u(t),
    \end{array}
\end{align*}
where $u \in \mathscr C^1(\R, \Hi)$ is the solution of \eqref{undressedeulerlagrange} starting from $u(0) = u_0$, evaluated at time $t$, is well defined.
\end{rem}

In other words, the dynamics of the dressed equation, which looks more complicated to study than the others, is in fact well described by the undressed dynamics.

\subsection{Difficulty of removing the cutoff}
The Gross transform is of crucial use when looking at ground states of the Hamiltonian since it removes the divergence of the energy going at $-\infty$. In most cases the particle is kept quantum, but in our semiclassical setting, the usual techniques are not straightforward to use when one tries to remove the cutoff. As an example, the Lieb-Yamazaki trick, see Proposition \ref{lemmaA1}, has no physical meaning when considering a classical particle since the bound, while uniform in $\Lambda$, explodes in the limit $\hbar \rightarrow 0$:
$$\abs{ \langle \psi, H_\hbar^\infty \psi \rangle } \leq  \frac{1}{\hbar} ( \varepsilon \langle \psi, H_\hbar^0 \psi \rangle +  C_\varepsilon \norm{\psi}^2).$$
One could also try to define the limit by considering, whenever it makes sense, 
$$
\hat{\mathscr E}^\infty(q,p,\alpha) = \underset{\Lambda \rightarrow \infty} \lim  \hat{\mathscr E}^\Lambda(q,p,\alpha).
$$
The only ill-behaved term in the latter expression would be
$$\abs{p - \sqrt{2} \mathrm{Re} \langle \alpha , k B_{q}^\infty \rangle_2}.$$
Since $ k B_{q}^\infty \notin L^2$, we cannot consider $\alpha \in L^2$, otherwise the classical particle would only be defined almost everywhere, due to the lack of regularity. Therefore we could ask $\alpha$ to be more regular. Then the Hamilton equations would give in the undressed case
$$
i \dot \alpha_t = \alpha_t + \frac{1}{\sqrt{2}} f_{q_t}^\infty,
$$
and in the dressed case,
$$
i \dot \alpha_t = \alpha_t + \frac{1}{\sqrt{2}} f_{q_t}^K -\frac{k.p_t}{\sqrt{2}} B_{q_t}^\infty + \sum_{j=1}^d k_j B_{q_t}^\infty \mathrm{Re} \langle \alpha_t, k_j B_{q_t}^\infty \rangle_2.
$$
Since $f^\infty$ and $B^\infty \notin L^2$ are source term for these equations, we would expect for $\alpha_t$ to have at most the same regularity, making it unclear which space of functions to choose for the field component. Nonetheless we observe that the dressed energy $\hat{\mathscr E}^\infty(q,p,\alpha)$ remains bounded from below, even though it is defined only almost everywhere in $(q,p)$; though we are unable to define it properly, we think that it would give a stable evolution.
\\ \\
In a forthcoming article \cite{Gau25}, we consider using the same techniques the mean-field limit for the polaron, that is, $n$ quantum particles interacting with a bosonic field in the limit $n \hbar \rightarrow 1$, and we will prove that in this regime the classical system of PDEs make sense in the limit $\Lambda \rightarrow \infty$ and can be well studied and derivated using Wigner measure tools . 

\subsection{The quantum-classical transition} 
We define the set of \textit{density matrices} to be
$$
\mathfrak S^1 (\Gamma_s (\Hi))_{+,1} = \{ \rho \in \mathscr B (\Gamma_s (\Hi)) | \rho = \rho^*, \rho \geq 0, \Tr \rho = 1 \},
$$
where $\mathscr B (\Gamma_s (\Hi))$ denotes the set of bounded operators acting on $\Gamma_s (\Hi)$. The main result of this paper is the following :
\begin{thm} \label{wignertheorem}
    Let $(\rho_\hbar)_\hbar \subset \mathfrak S^1 (\Gamma_s (\Hi))_{+,1}$ be a family of density matrices such that there exists $C>0$ such that \begin{equation}
        \forall \hbar \in (0,1), \Tr(\rho_\hbar (\qh^2 + \ph^2 + N_\hbar)) \leq C.
    \end{equation} and let  $\hbar_n \rightarrow 0$ such that $\mathscr{M} (\rho_{\hbar_n} , n \in \N ) = \{ \mu \}$. Then,
    $$\begin{cases}
    \mathscr{M}(e^{it/{\hbar_n} \hat{H}_{\hbar_n}} \rho_{\hbar_n} e^{-it/{\hbar_n} \hat{H}_{\hbar_n}}, n \in \N ) = \{ \hat{\Phi}(t)_* \mu \}, \\
    \mathscr{M}(e^{it/{\hbar_n} H_{\hbar_n}} \rho_{\hbar_n} e^{-it/{\hbar_n} H_{\hbar_n}}, n \in \N) = \{ \Phi(t)_* \mu \}, \\
    {\mathscr{M}(U_{\hbar_n}^* \rho_{\hbar_n} U_{\hbar_n}, n \in \N) = \{ \mathcal D(-1)_* \mu \},
    }
    \end{cases}
    $$
    where the flows $\Phi(t), \hat{\Phi}(t), \mathcal D(-1)$ are the flows of \eqref{undressedeulerlagrange}, \eqref{dressedeulerlagrange} and \eqref{dressingeulerlagrange} respectively.
\end{thm}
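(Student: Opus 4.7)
The plan is to invoke the Ammari--Nier Wigner measure framework for both the undressed dynamics and for the dressing one-parameter group, and to deduce the dressed case as a corollary of these together with Theorem \ref{clthm}. Since $C_\hbar^\Lambda\in\R$ is scalar, the phase $e^{itC_\hbar^\Lambda/\hbar}$ cancels in density-matrix conjugations, so
$$e^{-it\hat H_\hbar^\Lambda/\hbar}\rho_\hbar e^{it\hat H_\hbar^\Lambda/\hbar}=U_\hbar^\Lambda\,e^{-itH_\hbar^\Lambda/\hbar}\bigl[(U_\hbar^\Lambda)^*\rho_\hbar U_\hbar^\Lambda\bigr]e^{itH_\hbar^\Lambda/\hbar}\,(U_\hbar^\Lambda)^*,$$
and chaining the first and third limits with the conjugation identity $\hat\Phi(t)_*\mu=\mathcal D(1)_*\Phi(t)_*\mathcal D(-1)_*\mu$ of Theorem \ref{clthm} delivers the second statement.

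For the undressed limit, the first task is to propagate the moment bound $\Tr(\rho_\hbar(t)(\qh^2+\ph^2+N_\hbar))\le C'$ uniformly in $\hbar\in(0,1)$ and in $t$ on any compact interval. Energy conservation of $H_\hbar^\Lambda$ together with the relative bound $\|\phi_\hbar(f_q^\Lambda)(N_\hbar+1)^{-1/2}\|\le C\|f^\Lambda\|$ control $\Tr(\rho_\hbar(t)(\ph^2+N_\hbar))$, and a Gr\"onwall argument on $\Tr(\rho_\hbar(t)\qh^2)$ via the Heisenberg equation $\tfrac{d}{dt}\qh=\ph$ supplies the $\qh^2$ piece. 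These bounds guarantee the existence of Wigner measures at every time and their concentration on bounded subsets of $\Hi$.

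The heart of the argument is then to show that, along any subsequence $\hbar_{n_k}\to 0$ for which $\mathscr M(\rho_{\hbar_{n_k}}(t),k\in\N)=\{\mu_t\}$, the curve $t\mapsto\mu_t$ is the unique weak solution of the Liouville equation driven by the classical vector field of \eqref{undressedeulerlagrange}. Following \cite{ammari2008ahp,correggi2021jems}, I would test $\rho_\hbar(t)$ against Weyl operators $W_\hbar(\xi)$, $\xi\in\Hi$; the characteristic function $\theta_\hbar(t,\xi):=\Tr(\rho_\hbar(t)W_\hbar(\xi))$ satisfies, by Duhamel,
$$\theta_\hbar(t,\xi)-\theta_\hbar(0,\xi)=\int_0^t\Tr\!\Bigl(\rho_\hbar(s)\,\tfrac{1}{i\hbar}[H_\hbar^\Lambda,W_\hbar(\xi)]\Bigr)ds.$$
Expanding the commutator by the Weyl and canonical commutation relations, the leading term is the symbol $\{\mathscr E^\Lambda,\chi_\xi\}$, with $\chi_\xi(u)=e^{i\sigma(u,\xi)}$, and the $O(\hbar)$ remainder is harmless thanks to the moment control. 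Passing to the limit $\hbar_n\to 0$ and invoking uniqueness of the measure-valued Liouville equation (a consequence of the global well-posedness of $\Phi$ from Theorem \ref{clthm}) identifies $\mu_t=\Phi(t)_*\mu$; a standard diagonal extraction recovers the full Wigner-measure convergence. The dressing step proceeds identically with the generator $\pi_\hbar(B_{\qh,\hbar}^\Lambda)$ of the group $\theta\mapsto U_\hbar^\Lambda(\theta)$ in place of $H_\hbar^\Lambda$: its classical symbol is $\mathscr E_{\mathrm D}^\Lambda$, whose Hamiltonian flow is $\mathcal D(\theta)$, and evaluating at $\theta=-1$ yields the third claim.

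The main obstacle lies in the commutator computation for $\phi_\hbar(f_{\qh}^\Lambda)$, where the form factor $f_q^\Lambda(k)=e^{-ik\cdot q}f^\Lambda(k)$ entangles the particle position with the field modes. Commuting $W_\hbar(\xi)$ through this interaction requires Taylor-expanding the shift $e^{-ik\cdot\qh}$ to first order in $\hbar$ (after the Weyl translation) and bounding the remainder uniformly in $\hbar$ and in the field variables, which is precisely where the coupled control of $\qh^2$ and $N_\hbar$ is needed. A milder subtlety in the dressing step is the explicit $\hbar$-dependence of $B_\hbar^\Lambda=-\Id_{\abs{k}\ge K}(1+\hbar\abs{k}^2/2)^{-1}f^\Lambda$: by dominated convergence $B_\hbar^\Lambda\to B^\Lambda$ in $L^2(\R^d)$, so $B_\hbar^\Lambda$ may be replaced by its $\hbar=0$ value in the limiting Poisson bracket, but this substitution must be justified within the Wick-calculus bookkeeping of \cite{ammari2008ahp,correggi2021jems}.
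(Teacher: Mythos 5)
Your proposal is correct and follows essentially the same strategy as the paper: factor the dressed evolution into the dressing conjugation and the undressed evolution, propagate the second-moment bound, pass to the limit in a Duhamel formula for $\Tr(\rho_\hbar(t)W_\hbar(\xi))$ via a commutator expansion whose leading symbol is the Poisson bracket with the classical energy, and close the argument by identifying the limit curve of measures with the push-forward by the classical flow via the global well-posedness of Theorem~\ref{clthm}. The only substantive differences are organizational: the paper works in the interaction picture $\tilde\rho_\hbar(t)=e^{-isgH_\hbar^0/\hbar}\rho_\hbar(t)e^{isgH_\hbar^0/\hbar}$ and packages both the undressed flow and the dressing generator into one family $H_{\hbar,g,F}=gH_\hbar^0+\phi_\hbar(F_{\qh})$ (with $(g,F)=(1,f^\Lambda)$ and $(0,iB^\Lambda)$ respectively), so that a single Duhamel/commutator computation and a single equicontinuity estimate for the characteristic functional serve both cases; you treat the two cases separately and in the Schr\"odinger picture. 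For the identification step the paper invokes the superposition principle (Proposition~\ref{globalsuperposition}) to pass from the Liouville equation to a measure on path space and then uses ODE uniqueness, rather than appealing directly to uniqueness of measure-valued solutions of the Liouville equation, but these are two phrasings of the same mechanism. Your remark that $B_\hbar^\Lambda\to B^\Lambda$ in $L^2$ and that this $O(\hbar)$ discrepancy must be absorbed into the remainder is a correct observation that the paper handles only implicitly by writing the general framework with the $\hbar$-independent form factor $iB^\Lambda$; the paper also presents the propagation of the $\qh^2$-moment somewhat cavalierly (claiming $[H_\hbar,\qh]=0$, which holds only in the $g=0$ case), whereas your Gr\"onwall argument via $\tfrac{d}{dt}\qh=\ph$ is the correct route for $g>0$. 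One small slip: where you write that the factorization ``delivers the second statement'' you presumably mean the first (the dressed-evolution limit), since the undressed and dressing limits are what you prove directly.
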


Here the notation $f_* \mu$ stands for the push-forward measure of $\mu \in \mathscr P(\Hi)$ by the Borel measurable map $f : \Hi \rightarrow \Hi$, meaning that for all Borel sets $B$ of $\Hi$, $f_* \mu (B) = \mu (f^{-1}(B))$.

Let us first explain the strategy of the proof : 
\begin{itemize}
    \item \textit{Step 0.} Since 
    $$
    e^{it/\hbar \hat{H}_\hbar} \rho_\hbar e^{-it/\hbar \hat{H}_\hbar} = U_\hbar   e^{it/\hbar H_\hbar}  U_\hbar^* \rho_\hbar  U_\hbar   e^{-it/\hbar H_\hbar}  U_\hbar^*,  $$
    we only have to understand the semiclassical limit of dressing and of the undressed evolution to understand the dressed evolution. Considering a more general setting, we can group the two scaling limits under the same framework.
    \item \textit{Step 1.} We propagate in time the uniform estimate $\Tr(\rho_\hbar (\qh^2 + \ph^2 + N)) \leq C$ for dressed state as well as the undressed evolution to ensure existence of the Wigner measures.
    \item \textit{Step 2.} We establish the Duhamel formula in the interacting picture :
    $$
    \Tr(W_\hbar(\xi) \tilde{\rho_\hbar} (t) ) = \Tr( W_\hbar(\xi) \tilde{\rho_\hbar} ) - \frac{i}{\hbar} \int_0^t \Tr \left( [W_\hbar (\xi), H_I(s)] \tilde{\rho}_\hbar (s) \right) \di s
    $$
    and expand the commutator $[W_\hbar (\xi), H_I(s)] = \hbar M(s,\xi,\hbar) + \hbar^2 R(s,\xi,\hbar)$ with $M$ and $R$ being $H^0_{\hbar}$-bounded operators. The latter allow us to prove uniform continuity with respect to $\hbar$ of the characteristic functionals, and find a common subsequence to take the limit.
    \item \textit{Step 3.} We recognize the form of the Wigner measures using the uniqueness of the initial value problems on the classical side.
\end{itemize}

\section{Study of the classical model}

We first study the classical polaron dynamics. We recall the definition of the energy functional
$$
\mathscr E^\Lambda (q,p,\alpha) =   \frac{\abs{p}^2}{2} + \norm{\alpha}_2^2 + \sqrt{2} \mathrm{Re} \langle \alpha, f^\Lambda_q \rangle_2.
$$
The Hamilton equations are :
\begin{equation} \label{hamiltonjacobi}
\left\{
    \begin{array}{lll}
        \partial_t q_t = p_t \\
        \partial_t p_t = -\sqrt{2} \mathrm{Re} \langle \alpha_t , -i k f^\Lambda_q \rangle \\
        i \partial_t \alpha_t = \alpha_t + \frac{1}{\sqrt{2}} f^\Lambda_q
    \end{array}.
\right.
\end{equation}

\subsection{Global well-posedness of the undressed model}
The energy functional can be estimated by
\begin{equation} \label{energybyhalfnorm}
\mathscr E_{g,F} (q,p,\alpha) \lesssim_\Lambda 1 + \abs{p}^2 + \norm{\alpha}^2.
\end{equation}

Set $u = (q,p,\alpha) \in \Hi$ and rewrite \eqref{hamiltonjacobi} as the initial value problem
\begin{equation}
    \label{ivp}
\frac{\di u(t)}{\di t} = (\mathscr L  + \mathscr N) u (t) 
\end{equation}
with $\mathscr L u := ( p , 0 , -i  \alpha)$ the linearity and $\mathscr N u := (0,- \sqrt{2} \mathrm{Re} \langle \alpha , -i k  f_q^\Lambda \rangle_2, -i \frac{1}{\sqrt{2}} f_q^\Lambda )$ the nonlinearity. The free flow is
\begin{equation}
    \label{freeflow}
\Phi^0_t (q,p,\alpha) = (q + t p , p, e^{- i t} \alpha), 
\end{equation}
We consider the interaction picture, i.e. the initial value problem
\begin{equation}
    \label{ivplambda}
\frac{\di u(t)}{\di t} = X(t,u(t)), 
\end{equation}
and the vector field $X : \R \times \Hi \rightarrow \R \times \Hi$ is defined as
\begin{equation}
\label{interactingflow}
\begin{aligned}
X(t,(q,p,\alpha)) :=& \Phi^0_{-t} \circ \mathscr N  \circ  \Phi^0_t (q,p,\alpha) \\
=& \begin{pmatrix} \sqrt{2} t \mathrm{Re} \langle e^{-i t } \alpha , - i k f_{q+ t p} \rangle_2 \\   -\sqrt{2} \mathrm{Re} \langle e^{-i t} \alpha, - i k f_{q+ t p} \rangle_2  \\  -i e^{i t} \frac{1}{\sqrt{2}} f_{q+t p} 
\end{pmatrix}.
\end{aligned}
\end{equation}

\begin{lem}
The vector fields $\mathscr L, \mathscr N , X : \Hi \rightarrow \Hi$ and $X : \R \times \Hi \rightarrow \Hi$ are continuous.    
\end{lem}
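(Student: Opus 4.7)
The plan is to reduce everything to a single fact: the maps
$q \mapsto f_q^\Lambda$ and $q \mapsto k f_q^\Lambda$ are continuous from $\R^d$ to $L^2(\R^d, \di k)$. Granted this, continuity of $\mathscr L$, $\mathscr N$, and the time-dependent field $X$ will follow from Cauchy--Schwarz and elementary composition arguments.

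\textbf{Step 1: The linear part.} The map $\mathscr L(q,p,\alpha) = (p,0,-i\alpha)$ is linear and satisfies $\|\mathscr L u\| \leq \|u\|$, hence is (Lipschitz) continuous.

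\textbf{Step 2: $L^2$-continuity of $q \mapsto f_q^\Lambda$ and $q \mapsto k f_q^\Lambda$.} I would simply compute, for $q_n \to q$,
\begin{equation*}
\bigl\|f_{q_n}^\Lambda - f_q^\Lambda\bigr\|_2^2 = \int_{|k|\leq \Lambda} \frac{\bigl|e^{-ik\cdot q_n} - e^{-ik\cdot q}\bigr|^2}{|k|^{d-1}}\,\di k,
\end{equation*}
and use the bound $|e^{-ik\cdot q_n}-e^{-ik\cdot q}|^2 \leq 4$ together with the fact that $\int_{|k|\leq\Lambda}|k|^{-(d-1)}\,\di k = |S^{d-1}|\Lambda < \infty$ to invoke dominated convergence. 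The same argument works for $k f_q^\Lambda$ after noting $\int_{|k|\leq\Lambda}|k|^{2-(d-1)}\,\di k = \tfrac{1}{3}|S^{d-1}|\Lambda^3 < \infty$. In both cases the cutoff $\Lambda$ is essential.

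\textbf{Step 3: Continuity of $\mathscr N$.} Let $(q_n,p_n,\alpha_n) \to (q,p,\alpha)$ in $\Hi$. The only nontrivial component is the inner product term, which I would split as
\begin{equation*}
\bigl\langle \alpha_n, -ik f_{q_n}^\Lambda\bigr\rangle_2 - \bigl\langle \alpha, -ik f_q^\Lambda\bigr\rangle_2
= \bigl\langle \alpha_n - \alpha, -ik f_{q_n}^\Lambda\bigr\rangle_2 + \bigl\langle \alpha, -ik(f_{q_n}^\Lambda - f_q^\Lambda)\bigr\rangle_2.
\end{equation*}
Cauchy--Schwarz bounds the first piece by $\|\alpha_n-\alpha\|_2 \|k f^\Lambda\|_2 \to 0$ (the $L^2$ norm of $k f_q^\Lambda$ being independent of $q$), and the second piece by $\|\alpha\|_2 \|k(f_{q_n}^\Lambda - f_q^\Lambda)\|_2 \to 0$ by Step~2. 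The third component $-i f_{q_n}^\Lambda/\sqrt 2 \to -i f_q^\Lambda/\sqrt 2$ in $L^2$ again by Step~2.

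\textbf{Step 4: Continuity of $X$ on $\R\times\Hi$.} The explicit formula \eqref{interactingflow} expresses $X(t,(q,p,\alpha))$ in terms of $(t, q+tp, e^{\pm it}\alpha)$ composed with the same bilinear/field operations handled in Step~3. Since $(t,q,p) \mapsto q+tp$ and $(t,\alpha)\mapsto e^{\pm it}\alpha$ are jointly continuous into $\R^d$ and $\Hi_f$ respectively, Step~2 gives the joint continuity of $(t,q,p)\mapsto f_{q+tp}^\Lambda$ (and of $k f_{q+tp}^\Lambda$) into $L^2$, and the rest is multiplication by the scalar factor $t$ and an inner product argument identical to Step~3. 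No step is really an obstacle here; the whole lemma is soft, with Step~2 (dominated convergence on the form factor) being the only place where the cutoff $\Lambda$ is actually used.
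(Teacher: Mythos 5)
Your proof is correct for the lemma as stated, but it takes a genuinely softer route than the paper, and this has downstream consequences. The paper does not argue by dominated convergence: it records explicit Lipschitz-on-bounded-sets estimates of the form $\norm{\mathscr N(u_1)-\mathscr N(u_2)} \lesssim (1+\norm{\alpha_1}_2+\norm{\alpha_2}_2)\norm{u_1-u_2}$ and $\norm{X(t,u_1)-X(t,u_2)} \lesssim (1+\abs{t})(1+\norm{\alpha_1}+\norm{\alpha_2})\norm{u_1-u_2}$. These come from the elementary pointwise bound $\abs{e^{-ik\cdot q_1}-e^{-ik\cdot q_2}} \leq \abs{k}\,\abs{q_1-q_2}$, which (since $k f^\Lambda \in L^2$ by the cutoff) directly gives $\norm{f_{q_1}^\Lambda - f_{q_2}^\Lambda}_2 \leq \norm{k f^\Lambda}_2 \abs{q_1-q_2}$ — a Lipschitz constant, not just convergence. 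Your Step~2 only yields sequential continuity, which suffices for this lemma but does not deliver the quantitative estimates that the paper's subsequent arguments actually invoke: the Gronwall bound in the uniqueness proof (Proposition~\ref{uniqueness}) and the contraction estimate for the fixed-point map $\mathscr K$ in the local-existence proof both use precisely $\norm{u_1(s)-u_2(s)}(1+\norm{u_1(s)}+\norm{u_2(s)})$. So your route is mathematically sound but, if adopted, would leave you to re-derive the Lipschitz estimates separately when you reach the well-posedness propositions; the paper's proof does that work up front, at essentially no extra cost.
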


\begin{proof}
For $u_1 = (q_1,p_1,\alpha_1),u_2 = (q_2,p_2,\alpha_2) \in \Hi$, a straight computation gives
\begin{align*}
\norm{\mathscr L(u_1)- \mathscr L(u_2)} &\leq_g \norm{u_1-u_2},
\\
\norm{\mathscr N(u_1)- \mathscr N(u_2)} &\lesssim_{g,F} (1+\norm{\alpha_1}_2+\norm{\alpha_2}_2) \norm{u_1-u_2}, \\
\norm{X(t, u_1)- X(t, u_2)} &\lesssim_{g,F} (1+\abs{t}) (1+\norm{\alpha_1} + \norm{\alpha_2}) \norm{u_1-u_2}. \end{align*}

\end{proof}
The equivalence between \eqref{ivplambda} and \eqref{ivp} is given by

\begin{prop} \label{solution} Let $I$ be an interval of $\R$ such that $0 \in \mathring I$. The following statements are equivalent :
\begin{enumerate}
    \item $u \in \mathscr C^1(I,\Hi)$ satisfies \eqref{ivplambda}.
    \item $u \in \mathscr C(I,\Hi)$ satisfies the Duhamel formula : for all $t \in I$,
    $$
    u(t) = u(0) + \int_0^t X(s,u(s)) \di s.
    $$
    \item $t \mapsto \Phi_t^0 ( u(t) )$ belongs to $\mathscr C^1(I,\Hi)$ and satisfies \eqref{ivp} .
\end{enumerate}
\end{prop}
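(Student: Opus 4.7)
The plan is to split the equivalence into the two standard passages $(1) \Leftrightarrow (2)$ and $(1) \Leftrightarrow (3)$. The first is the classical equivalence between an ODE with continuous right-hand side and its integral formulation, while the second is an algebraic conjugation by the free flow $\Phi^0_t$ which, by the very definition \eqref{interactingflow} of $X$, intertwines the interaction-picture vector field with $\mathscr L + \mathscr N$. Throughout, the continuity of $X$ on $\R \times \Hi$ established in the preceding lemma is the only analytic input.

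For $(1) \Leftrightarrow (2)$, assume (1): the map $s \mapsto X(s,u(s))$ is then continuous on $I$, and the fundamental theorem of calculus for Banach-space-valued continuously differentiable functions gives the Duhamel formula upon integrating \eqref{ivplambda} from $0$ to $t$. Conversely, assume (2): the continuity of $u$ combined with that of $X$ makes $s \mapsto X(s,u(s))$ continuous on $I$, hence $t \mapsto \int_0^t X(s,u(s)) \di s$ is $\mathscr C^1(I,\Hi)$ with derivative $X(t,u(t))$. Thus $u$ itself is $\mathscr C^1$ and satisfies \eqref{ivplambda}.

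For $(1) \Leftrightarrow (3)$, set $v(t) := \Phi^0_t(u(t))$. A direct inspection of \eqref{freeflow} shows that $\Phi^0_t$ is a bounded linear operator on $\Hi$, that $(t,w) \mapsto \Phi^0_t w$ is jointly smooth, and that $\Phi^0_t$ commutes with its generator, $\mathscr L \Phi^0_t = \Phi^0_t \mathscr L$. Assuming (1), the Banach-space product rule applied to $v = \Phi^0_\cdot \, u$ yields
$$\dot v(t) = \mathscr L \Phi^0_t u(t) + \Phi^0_t X(t,u(t)) = \mathscr L v(t) + \Phi^0_t \bigl(\Phi^0_{-t} \mathscr N (\Phi^0_t u(t))\bigr) = (\mathscr L + \mathscr N) v(t),$$
so $v \in \mathscr C^1(I,\Hi)$ and (3) holds. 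Conversely, (3) provides $v \in \mathscr C^1$, whence $u(t) = \Phi^0_{-t} v(t)$ is automatically $\mathscr C^1$ as the composition of smooth maps, and the symmetric computation, using again $\mathscr L \Phi^0_{-t} = \Phi^0_{-t} \mathscr L$, produces $\dot u(t) = \Phi^0_{-t} \mathscr N (\Phi^0_t u(t)) = X(t,u(t))$, giving (1).

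The only mildly technical ingredient is the product rule in $\Hi$ for $t \mapsto \Phi^0_t u(t)$, which follows immediately from the explicit polynomial/exponential form of \eqref{freeflow}; no genuine analytic obstacle arises here because $\mathscr L$ generates a globally defined, explicitly known flow, and the passage to the interaction picture is purely algebraic. If anything, one must simply take care that the differentiations in the Banach space $\Hi$ are justified component-wise, which is transparent here.
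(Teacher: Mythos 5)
Your proof is correct and follows essentially the same route as the paper: the equivalence $(1) \Leftrightarrow (2)$ is the standard integral-versus-differential formulation for an ODE with continuous right-hand side, and $(1) \Leftrightarrow (3)$ is the conjugation by the linear free flow $\Phi^0_t$, using the product rule together with $\mathscr L \Phi^0_t = \Phi^0_t \mathscr L$ and the definition of $X$ to identify $\dot v = (\mathscr L + \mathscr N)v$. The paper merely states this more tersely; the underlying computation is the one you carried out.
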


\begin{defn}
    A map $u \in \mathscr C^1(I,\Hi)$ satisfying the equivalent statements of Proposition \ref{solution} is called a \textit{strong solution} of \eqref{ivp} or \eqref{ivplambda}.
\end{defn}

\begin{proof}[Proof of Proposition \ref{solution}]
The equivalence between \textit{1.} and \textit{2.} is a consequence of the continuity of $X$. The equivalence between \textit{1.} and \textit{3.} is a straightforward computation: if $u \in \mathscr C^1(I,\Hi)$ satisfies \eqref{ivplambda}, then $v : t \mapsto \Phi_t^0 ( u(t) ) \in \mathscr C^1(I,\Hi)$ by composition and 
\begin{align*}
\frac{\di v}{\di t} &= \mathscr L v(t) + \Phi_t^0 (X(t,u(t))) \\
&= \mathscr L v(t) + \mathscr N \circ \Phi_t^0 (u(t)) \\
&= (\mathscr L + \mathscr N) v(t).
\end{align*}
The converse implication is similar.
\end{proof}
We first investigate global-well posedness for the initial value problem $\eqref{ivp}$. By Proposition \ref{solution} the global-well posedness in the interaction picture will then follow.
\begin{prop} (uniqueness) \label{uniqueness} Let $I$ be an interval such that $0 \in \stackrel{\circ}{I}$. Let $u_1,u_2 \in \mathscr C(I,\Hi) $ be solutions to $(\ref{ivp})$ such that $u_1 (0) = u_2 (0)$. Then $u_1 \equiv u_2$.
\end{prop}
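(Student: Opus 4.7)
The plan is to carry out a classical Gronwall estimate in the interaction picture, exploiting the local Lipschitz bound of the vector field $X$ established in the preceding lemma.

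First I would invoke Proposition \ref{solution} to translate the problem: the curves $v_i(t) := \Phi_{-t}^0(u_i(t))$ ($i=1,2$) are then continuous functions satisfying the Duhamel identity
$$v_i(t) = v_i(0) + \int_0^t X(s, v_i(s))\, \di s.$$
Since $\Phi_t^0$ is a homeomorphism of $\Hi$ and $v_1(0) = u_1(0) = u_2(0) = v_2(0)$, proving $v_1 \equiv v_2$ on $I$ is equivalent to proving $u_1 \equiv u_2$ on $I$.

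Next, I would fix an arbitrary compact subinterval $[-T,T] \subset I$. Continuity of $v_1,v_2$ yields $M := \sup_{|t|\leq T}(\|v_1(t)\| + \|v_2(t)\|) < \infty$, and the previous lemma specializes to
$$\|X(t, v_1(t)) - X(t, v_2(t))\| \leq C_{T,M}\, \|v_1(t) - v_2(t)\|, \qquad |t| \leq T,$$
for some constant $C_{T,M}$ depending only on $T$, $M$, the cutoff $\Lambda$ and the form factor. Subtracting the two Duhamel identities and using the triangle inequality then gives
$$\|v_1(t) - v_2(t)\| \leq C_{T,M}\, \biggl|\int_0^t \|v_1(s)- v_2(s)\|\, \di s\biggr|, \qquad t\in[-T, T].$$

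Finally, I would apply Gronwall's lemma separately on $[0,T]$ and on $[-T,0]$ to conclude $v_1 \equiv v_2$ on $[-T,T]$; exhausting $I$ by such compact subintervals yields $v_1 \equiv v_2$ on $I$. No genuine obstacle arises here: the only subtlety is that the Lipschitz constant of $X$ grows with $\|\alpha_i\|$, which is precisely why the estimate must be localized to compact time intervals, on which the continuous solutions are automatically bounded.
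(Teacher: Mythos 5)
Your proof is correct and rests on the same mechanism as the paper's: a Gronwall estimate built from the local Lipschitz bound established in the preceding lemma. The one difference is a detour: you first conjugate by the free flow to pass to the interaction picture and then use the Lipschitz bound on $X$, whereas the paper works directly with the Duhamel identity for \eqref{ivp} and the Lipschitz bounds on $\mathscr L$ and $\mathscr N$, namely
$\norm{u_1(t) - u_2(t)} \lesssim \int_0^t \norm{u_1(s)-u_2(s)}\,(1+\norm{u_1(s)}+\norm{u_2(s)})\,\di s$,
and then invokes Gronwall. Since the statement concerns solutions of \eqref{ivp} itself, the paper's route is slightly more direct, but both are valid and equivalent in substance; your explicit restriction to compact subintervals to justify Gronwall is a reasonable point that the paper leaves implicit.
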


\begin{proof} Let $t \in I$. 
    \begin{align*}
        \norm{u_1(t) - u_2(t)} \lesssim \int_0^t \norm{u_1(s)-u_2(s)} (1+ \norm{u_1(s)}+\norm{u_2(s)} ) \di s,
    \end{align*}
    and Gronwall's lemma yields the result.
\end{proof}

We now prove the local existence of strong solutions of \eqref{ivp} by means of a standard fixed point theorem.

\begin{prop} (local existence) Let $R>0$. There exists $T(R)>0$ such that for all $u_0 \in \Hi$ with $\norm{u_0}\leq R$, there exists $u \in \mathscr C([-T(R),T(R)],\Hi)$ solution to  $\eqref{ivp}$ with initial condition $u (0) = u_0$.
\end{prop}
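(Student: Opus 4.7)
The plan is to invoke Proposition \ref{solution} and run a standard Banach fixed-point argument on the Duhamel formulation in the interaction picture,
\[
u(t) = u_0 + \int_0^t X(s,u(s))\,\di s,
\]
in the complete metric space $E_{T,M} := \mathscr C([-T,T],\overline{B}(0,M))$ with the uniform norm, for $M := 2R$ and a radius $T = T(R) > 0$ to be chosen. Defining $\Gamma(u)(t) := u_0 + \int_0^t X(s,u(s))\,\di s$, the continuity of $X$ from the previous lemma ensures $\Gamma(u) \in \mathscr C([-T,T],\Hi)$.

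First I would control the size of $\Gamma(u)$. Reading off the three components of $X(t,(q,p,\alpha))$ in \eqref{interactingflow} and using that the cutoff places $f^\Lambda$ and $k f^\Lambda$ in $L^2$, one obtains the affine bound
\[
\|X(t,(q,p,\alpha))\| \;\leq\; C_\Lambda\bigl(1 + (1+|t|)\,\|\alpha\|\bigr),
\]
so that for every $u \in E_{T,M}$ and every $t \in [-T,T]$,
\[
\|\Gamma(u)(t)\| \;\leq\; R + T\, C_\Lambda\bigl(1 + (1+T) M\bigr).
\]
Choosing $T \leq T_1(R)$ small enough (depending only on $R$ and $\Lambda$) this quantity is $\leq M$, so $\Gamma$ maps $E_{T,M}$ into itself.

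Next, the Lipschitz estimate proved in the previous lemma gives
\[
\|\Gamma(u_1)(t) - \Gamma(u_2)(t)\| \;\leq\; C'_\Lambda\, T\,(1+T)\,(1+2M)\,\|u_1 - u_2\|_\infty
\]
for all $u_1,u_2 \in E_{T,M}$ and $t \in [-T,T]$. Shrinking $T$ further to some $T(R) \leq T_1(R)$ makes the constant strictly less than $\tfrac12$, so $\Gamma$ becomes a strict contraction on $E_{T,M}$. The Banach fixed-point theorem yields a unique $u \in E_{T(R),2R}$ with $\Gamma(u) = u$, which by Proposition \ref{solution} is a strong solution of \eqref{ivp} on $[-T(R),T(R)]$ with $u(0) = u_0$.

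There is no deep obstacle: the argument is routine once the two estimates above are in place, and both follow directly from $f^\Lambda, k f^\Lambda \in L^2$ (which in turn uses the UV cutoff, consistently with the discussion in the introduction on why the cutoff-free setting is delicate). The only point of care is that the Lipschitz constant for $X$ in the interaction picture contains a factor $(1+|t|)$ that grows in $t$, but this is harmless here since we may assume $T(R) \leq 1$; this growth is what one pays to propagate to global times, which will be handled subsequently by the energy-type a priori bound \eqref{energybyhalfnorm} rather than by the local construction.
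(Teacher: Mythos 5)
Your proof is correct and essentially matches the paper's: both run a Banach fixed-point iteration on a Duhamel integral formulation, with a time of existence that depends only on $R$, $\Lambda$, and the Lipschitz bounds from the preceding lemma. The only (cosmetic) difference is that the paper performs the contraction directly on the autonomous vector field $\mathscr L + \mathscr N$ of \eqref{ivp} and later transfers to the interaction picture via Proposition \ref{solution}, whereas you contract on the time-dependent field $X$ of \eqref{ivplambda} — picking up the harmless $(1+|t|)$ factor — and transfer in the opposite direction.
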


\begin{proof} Let $T,M > 0$ to be fixed later. Let $I = [-T,T]$. We will use Banach's fixed point theorem. We set 
$$
E = \{ u \in \mathscr C(I, \Hi ) \text{ such that } \norm{u}_\infty = \underset{t \in I}{\sup} \norm{u(t)} \leq M \},
$$
equipped with the distance 
$$
d (u_1,u_2) := \norm{u_1-u_2}_\infty,
$$
which makes $(E,d)$ complete. Fix an initial condition $u_0 \in  \Hi$.
Let $u \in E$ and define
$$ 
\mathscr{K} u (t) = u_0 + \int_0^t (\mathscr L + \mathscr N)(s,u(s)) \di s.
$$
First, let's see that $\mathscr{K} u \in E$. Set $M = 2 \norm{u_0}$. There exists $C>0$ such that
\begin{align*}
 \norm{\mathscr{K} u(t)} &\leq \norm{u_0} +  C \int_0^t  \norm{u(s)} (1+ \norm{u(s)} ) \di s \\
 &\leq R + C T M (1+ M).
\end{align*}
We set $M = 2 R$, and for $$T \leq \frac{1}{2 C (1+ M)},$$
$\norm{\mathscr{K} u(t)} \leq R$ so that $\mathscr K$ maps $E$ to itself.
Now, check that for any $u_1, u_2 \in E$,
\begin{align*}
 \norm{\mathscr{K} u_1(t) - \mathscr{K} u_2(t)} &\leq  C \int_0^t \norm{u_1(s)-u_2(s)} (1+\norm{u_1(s)}+\norm{u_2(s)} ) \di s  \\
 & \leq C T \norm{u_1 - u_2}_\infty (1 + 2 M),
\end{align*}
so that 
$$ 
\norm{\mathscr{K} u_1-\mathscr{K} u_2}_\infty \leq \frac{1}{2} \norm{u_1-u_2}_\infty
$$
for 
$$T \leq  \frac{1}{2 C (1 + 2 M)}.$$
By Banach's fixed point theorem we have local existence up to time $$ T = T(R) := \frac{1}{2 C (1 + 4 R)}.$$
\end{proof}
Since the time of existence only depend on the norm of the initial data, it will follow from a blowup criterion that if the maximal time of existence is finite then the solution blows up in finite time. In the next Lemma we show that this conclusion cannot hold.
\begin{lem} \label{boundqpalpha}
    Let $u \in \mathscr C([-T,T],\Hi)$ be a solution to  $(\ref{ivp})$. Then there exists $C>0$ such that for all $t \in [-T,T]$, $\norm{u(t)} \leq C e^{C \abs{t}}$.
\end{lem}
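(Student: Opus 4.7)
The plan is to derive a Gronwall-type differential inequality for the quantity $N(t) := \abs{q_t}^2 + \abs{p_t}^2 + \norm{\alpha_t}_2^2$, which according to the definition of $\norm{\cdot}$ in Section 2 is exactly $\norm{u(t)}$; an estimate of the form $N(t) \leq C e^{C\abs{t}}$ is thus precisely what the lemma claims. By Proposition \ref{solution}, any solution of \eqref{ivp} in $\mathscr C([-T,T],\Hi)$ is in fact in $\mathscr C^1([-T,T],\Hi)$, so I may differentiate $N$ in $t$ and substitute the right-hand sides of \eqref{hamiltonjacobi}.

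Thanks to the ultraviolet cutoff, both $f^\Lambda$ and $k f^\Lambda$ lie in $L^2(\R^d)$ with norms depending only on $\Lambda$, and since $q \mapsto f_q^\Lambda$ amounts to multiplication by the unimodular phase $e^{-i k \cdot q}$, the norms $\norm{f_{q_t}^\Lambda}_2$ and $\norm{k f_{q_t}^\Lambda}_2$ are independent of $t$. Differentiating each summand of $N(t)$ yields
$$ \frac{\di}{\di t}\abs{q_t}^2 = 2 q_t \cdot p_t, \qquad \frac{\di}{\di t}\abs{p_t}^2 = 2\sqrt{2}\, p_t \cdot \mathrm{Re}\langle \alpha_t, i k f_{q_t}^\Lambda\rangle_2, $$
together with $\frac{\di}{\di t}\norm{\alpha_t}_2^2 = 2\,\mathrm{Re}\langle \alpha_t, -i \alpha_t - \frac{i}{\sqrt{2}} f_{q_t}^\Lambda\rangle_2$. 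The first term is bounded by $\abs{q_t}^2 + \abs{p_t}^2$; the second, by Cauchy-Schwarz and Young's inequality, is controlled by $C_\Lambda (\abs{p_t}^2 + \norm{\alpha_t}_2^2)$; and in the third the contribution $\mathrm{Re}\langle \alpha_t, -i \alpha_t\rangle_2$ vanishes (multiplication by $i$ is antisymmetric under the real part of the inner product), so only the source term survives and is bounded by $C_\Lambda (1 + \norm{\alpha_t}_2^2)$.

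Adding these three contributions produces a differential inequality of the form $\frac{\di}{\di t} N(t) \leq C_\Lambda (1 + N(t))$, to which Gronwall's lemma applies, yielding $N(t) + 1 \leq (N(0) + 1)\, e^{C_\Lambda \abs{t}}$ and therefore the asserted exponential bound. There is really no obstacle: the whole argument is a textbook quadratic Gronwall estimate, the cutoff being essential only insofar as it guarantees the uniform $L^2$-boundedness of $f^\Lambda$ and $k f^\Lambda$; the sole structural observation is that the free field dynamics $i \dot\alpha = \alpha$ is isometric on $L^2$ and hence makes no contribution to the growth of $\norm{\alpha_t}_2$.
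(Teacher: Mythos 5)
Your proof is correct, but it takes a genuinely different route from the paper's. The paper exploits conservation of the energy functional $\mathscr E^\Lambda$ along the flow: since $\mathscr E^\Lambda(q,p,\alpha) = \frac{1}{2}\abs{p - 0}^2 + \norm{\alpha + \frac{1}{\sqrt 2}f_q^\Lambda}^2 - \frac{1}{2}\norm{f^\Lambda}^2$ is coercive in $(p,\alpha)$ and constant in time, the maps $t \mapsto \abs{p_t}^2$ and $t \mapsto \norm{\alpha_t}$ are bounded \emph{uniformly} in $t$, with no exponential growth; Gronwall is then only needed for the position variable $q_t$ (whose derivative is $p_t$). Your approach bypasses the conserved quantity entirely and runs a direct Gronwall estimate on the full quantity $N(t) = \abs{q_t}^2 + \abs{p_t}^2 + \norm{\alpha_t}_2^2$, using only Cauchy--Schwarz, Young, and the $L^2$-boundedness of $f^\Lambda$ and $k f^\Lambda$, plus the observation that the free rotation $-i\alpha_t$ drops out of $\frac{\di}{\di t}\norm{\alpha_t}^2$. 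Both arguments are valid and both suffice for the blowup criterion; yours is more elementary and more robust (it would work even for a non-Hamiltonian perturbation where no energy is conserved), while the paper's yields the sharper, qualitatively informative conclusion that only $q_t$ can grow, $p_t$ and $\alpha_t$ staying in a fixed bounded set. A minor caveat on presentation: the paper's convention $\norm{(q,p,\alpha)} := \abs{q}^2 + \abs{p}^2 + \norm{\alpha}_2^2$ is nonstandard (it is the square of a genuine norm), but you correctly read $N(t) = \norm{u(t)}$ with this convention, so the identification is consistent with the lemma as stated.
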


\begin{proof}
    The solution preserves the energy functional: for all $t \in [-T,T]$, $\mathscr E_{g,F}(u(t)) = \mathscr E_{g,F}(u(0)).$ Using this and $\eqref{energybyhalfnorm}$, one deduces that the maps $t \mapsto \abs{p(t)}^2$ and $t \mapsto \norm{\alpha(t)}$ are bounded uniformely with respect to $T$. For the remaining term, one checks that 
    \begin{align*}
        \frac{\di (1+\abs{q(t)}^2)}{\di t} &= 2 p(t) . q(t) \\
        & \leq 2 \abs{p(t)} \abs{q(t)} \\
        & \leq C \abs{q(t)} \\
        & \leq C (1+\abs{q(t)}^2).
    \end{align*}
    Hence by Gronwall's lemma, $\abs{q(t)} \leq 1+ \abs{q(t)}^2 \leq C e^{C t}$, which yields the result.
\end{proof}

\begin{prop} (global existence) \label{globalexistence}
Let $u_0 \in \Hi$. There exists $u \in \mathscr C(\R,\Hi)$ a solution to $(\ref{ivp})$ such that $u (0) = u_0$.
\end{prop}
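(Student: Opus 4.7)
The plan is to combine the local existence proposition with the a priori bound of Lemma \ref{boundqpalpha} via a standard continuation argument, exploiting the fact that the local time of existence $T(R)$ depends only on an upper bound for the norm of the initial datum.

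First, I would fix $u_0 \in \Hi$ and define the maximal time of existence
$$
T^* := \sup\{T > 0 : \text{there exists a solution } u \in \mathscr{C}([-T,T],\Hi) \text{ of } \eqref{ivp} \text{ with } u(0)=u_0\}.
$$
By the local existence proposition, $T^* > 0$. By Proposition \ref{uniqueness}, any two solutions defined on overlapping intervals coincide on the intersection, so one can glue the family of local solutions into a single maximal solution $u \in \mathscr{C}((-T^*,T^*),\Hi)$ satisfying \eqref{ivp}.

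Second, I would show $T^* = +\infty$ by contradiction. Assume $T^* < \infty$. For any $T \in (0, T^*)$, Lemma \ref{boundqpalpha} applied to the restriction of $u$ to $[-T,T]$ yields $\norm{u(t)} \leq C e^{C|t|} \leq C e^{C T^*}$, with $C$ independent of $T$. Setting $R_0 := C e^{C T^*}+1$, the local existence proposition produces a uniform lifespan $\tau := T(R_0) > 0$ such that any initial datum of norm $\leq R_0$ generates a solution on an interval of length $2\tau$. Since the equation \eqref{ivp} is autonomous, we may pick $t_0 \in (T^*-\tau/2, T^*)$ and solve the initial value problem starting at $t_0$ with data $u(t_0)$; the resulting solution lives on $[t_0-\tau, t_0+\tau]$, which extends past $T^*$. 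By Proposition \ref{uniqueness}, this extension glues consistently with $u$, contradicting the definition of $T^*$. The same argument on the negative side gives global existence on all of $\R$.

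Finally, by Proposition \ref{solution}, the continuous function $u : \R \to \Hi$ satisfying the Duhamel formula is automatically $\mathscr{C}^1$ and solves \eqref{ivp}. The main (rather mild) obstacle is simply the bookkeeping in the continuation step: one must verify that $T(R)$ can indeed be taken to depend only on $R$ and that the autonomy of \eqref{ivp} allows one to translate the local existence statement to an arbitrary starting time $t_0$. Both points are immediate from the fixed-point construction in the local existence proof, where the Lipschitz estimates on $\mathscr{L} + \mathscr{N}$ depend only on the size of the iterates, not on the base time.
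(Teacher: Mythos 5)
Your proof is correct and follows essentially the same continuation argument as the paper: define the maximal solution via uniqueness, use Lemma \ref{boundqpalpha} to obtain a uniform a priori bound on $[0,T^*)$, and then exploit that the local lifespan $T(R)$ from the fixed-point construction depends only on the norm of the data to extend past $T^*$. The only cosmetic difference is that you spell out the autonomy of \eqref{ivp} (used to translate the base time) and the gluing step explicitly, both of which the paper leaves implicit.
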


\begin{proof}
    Without loss of generality, we only prove global existence for positive times. Consider a maximal solution $u \in \mathscr C([0,T_{\mathrm{max}}),\Hi)$ of $(\ref{ivp})$, starting from $u_0 \in \Hi$. For a contradiction, assume that $T_{\mathrm{max}} < \infty.$ By Lemma \ref{boundqpalpha}, $\underset{t \in [0, T_{\mathrm{max}})}{\sup} \norm{u(t)}\leq C e^{C T_{\mathrm{max}}} =: R <\infty.$ Consider a sequence of times $t_k \underset{k \rightarrow \infty}{\longrightarrow} T_{\mathrm{max}}.$ From the proof of local existence, we know that the time of existence only depends on the norm of the initial data. Let $k$ such that $t_k + T(R) > T_{\mathrm{max}}.$ We thus extend $u$ by taking the solution of $(\ref{ivp})$ with initial conditions $\norm{u(t_k)}$. By uniqueness, $u$ is a well defined solution on $[0,t_k + T(R))$, which contradicts the maximality of $T_{\mathrm{max}}$. We conclude that $T_{\mathrm{max}} = \infty$ and that the solution is global in time.
\end{proof}

\begin{cor} (global-well posedness in the interacting picture) Let $u_0 \in \Hi$. There exists a unique $u \in \mathscr C(\R,\Hi)$ solution to $(\ref{ivplambda})$ such that $u (0) = u_0$.
\end{cor}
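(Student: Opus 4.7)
The plan is to reduce the corollary to the global well-posedness of \eqref{ivp} already established in Propositions \ref{uniqueness} and \ref{globalexistence}, using Proposition \ref{solution} as the bridge. Since the free flow $\Phi^0_t$ from \eqref{freeflow} is jointly continuous in $(t,u)$, linear in $u$, and satisfies $\Phi^0_0 = \mathrm{Id}_\Hi$, the pointwise operations $u(t)\mapsto \Phi^0_t(u(t))$ and $v(t)\mapsto \Phi^0_{-t}(v(t))$ exchange solutions of \eqref{ivplambda} and \eqref{ivp} while preserving both the regularity class and the initial datum at $t=0$.

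For existence, given $u_0 \in \Hi$, I would apply Proposition \ref{globalexistence} to produce a global solution $v \in \mathscr C(\R,\Hi)$ of \eqref{ivp} with $v(0) = u_0$, and then define $u(t) := \Phi^0_{-t}(v(t))$. The implication $3.\Rightarrow 1.$ of Proposition \ref{solution} then yields $u \in \mathscr C^1(\R,\Hi)$ solving \eqref{ivplambda}, with $u(0) = v(0) = u_0$ by construction.

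For uniqueness, suppose $u_1,u_2 \in \mathscr C(\R,\Hi)$ both solve \eqref{ivplambda} with $u_1(0) = u_2(0) = u_0$. The equivalence $1.\Leftrightarrow 2.$ of Proposition \ref{solution}, together with the continuity of $X$, promotes any continuous Duhamel solution to $\mathscr C^1$ regularity, so the maps $v_i(t) := \Phi^0_t(u_i(t))$ are $\mathscr C^1$ solutions of \eqref{ivp} sharing the initial value $u_0$. Proposition \ref{uniqueness} then forces $v_1 \equiv v_2$, and composing with $\Phi^0_{-t}$ gives $u_1 \equiv u_2$.

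I do not anticipate a genuine obstacle at this stage: all the analytic content — the a priori bound of Lemma \ref{boundqpalpha}, the fixed-point argument for local existence, and the Gronwall uniqueness estimate — has already been deployed for the direct formulation \eqref{ivp}. The only point requiring minor care is to invoke Proposition \ref{solution} with the correct regularity hypothesis, i.e. to note that a merely continuous solution in the Duhamel sense of \eqref{ivplambda} is automatically $\mathscr C^1$, which is exactly what makes the symmetric transfer argument above work in both directions.
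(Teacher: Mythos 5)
Your proof is correct and follows essentially the same route as the paper: existence is obtained by producing a global solution of \eqref{ivp} via Proposition \ref{globalexistence} and transporting it through Proposition \ref{solution}. The only minor divergence is in the uniqueness step: the paper simply asserts that uniqueness ``is proved in the same way as Proposition \ref{uniqueness},'' i.e.\ by redoing the Gronwall argument directly on the interaction-picture vector field $X$, whereas you transfer the two hypothetical solutions back to \eqref{ivp} via the free flow $\Phi^0_t$ and then invoke Proposition \ref{uniqueness} as a black box. Your variant is arguably cleaner, since it avoids re-deriving the local Lipschitz estimate for $X$ and uses the already-established uniqueness statement; the paper's variant is more self-contained in that it doesn't rely on the invertibility of the transfer map (which is, however, trivially available here since $\Phi^0_t$ is an affine homeomorphism for each $t$). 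Both are sound, and your observation that item 2.\ of Proposition \ref{solution} automatically upgrades a $\mathscr C^0$ Duhamel solution to $\mathscr C^1$ is exactly the right point to make the transfer rigorous.
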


\begin{proof}
    The existence is provided by Proposition \ref{solution} and \ref{globalexistence}. The uniqueness is proved in the same way as Proposition \ref{uniqueness}.
    
\end{proof}

\subsection{Propagation of regularity}
 The equation for the field
    $$
    i \partial_t \alpha_t =  \alpha_t + \frac{1}{\sqrt{2}} f^\Lambda_q
    $$
    yields the following Duhamel formula for $\alpha_t$:
    $$
    \alpha_t = e^{-i t} \left( \alpha_0 - \frac{i}{\sqrt{2}}  \int_0^t e^{i s} f^\Lambda_{q_s} \di s\right).
    $$
Having an integral representation of $\alpha_t$ allow us to prove that the flow also preserves functional spaces with more regularity in the field component.
\begin{defn}
Let $s \in \R$. Define the space
$$
\Hi^s := \R^{2 d} \oplus \mathcal F H^s (\R^d),  
$$
where 
$$
\mathcal F H^s(\R^d) = \{ \alpha \in L^2 (\R^d), \mathcal F^{-1}(\alpha) \in H^s \} = \{\alpha \in L^2 (\R^d), \langle k \rangle^s \alpha(k) \in L^2 (\R^d) \},
$$
and $\langle k \rangle^2 = 1+\abs{k}^2$ denotes the Japanese bracket.
\end{defn}

\begin{prop}(Propagation of regularity) \\
Let $s \geq 0$ and $u_0 \in \Hi^s$. There exists a unique $u \in \mathscr C(\R,\Hi^s)$ solution to $(\ref{ivplambda})$ such that $u (0) = u_0$.
\end{prop}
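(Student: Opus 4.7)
The plan is to bootstrap the global well-posedness in $\Hi$ established in the previous corollary. Since for $u_0 \in \Hi^s \subset \Hi$ there already exists a unique solution $u \in \mathscr C(\R,\Hi)$, and since the particle components $q_t, p_t$ take values in $\R^{2d}$ regardless of $s$, the task reduces to two points: (i) show that $\alpha_t \in \mathcal F H^s$ for every $t \in \R$, and (ii) show that $t \mapsto \alpha_t$ is continuous into $\mathcal F H^s$. Uniqueness in $\mathscr C(\R,\Hi^s)$ is then immediate from uniqueness in $\mathscr C(\R,\Hi)$.

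The crucial observation is that the UV cutoff makes every Sobolev norm of the form factor finite and uniform in the particle position: since $\langle k\rangle^s f_q^\Lambda(k) = \langle k\rangle^s e^{-ik\cdot q} |k|^{-(d-1)/2}\Id_{|k|\leq\Lambda}$, one has
\[
\norm{\langle k\rangle^s f^\Lambda_q}_2^2 \;=\; \int_{|k|\leq\Lambda}\frac{\langle k\rangle^{2s}}{|k|^{d-1}}\,\di k \;=:\; C_{s,\Lambda,d} \;<\;\infty,
\]
independently of $q \in \R^d$. The Duhamel formula for the field,
\[
\alpha_t \;=\; e^{-it}\Bigl(\alpha_0 \;-\; \tfrac{i}{\sqrt 2}\int_0^t e^{is}f_{q_s}^\Lambda\,\di s\Bigr),
\]
then gives $\norm{\langle k\rangle^s \alpha_t}_2 \leq \norm{\langle k\rangle^s \alpha_0}_2 + \tfrac{|t|}{\sqrt 2}\,C_{s,\Lambda,d}^{1/2}$, so $\alpha_t \in \mathcal F H^s$ for every $t$.

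For continuity into $\mathcal F H^s$, it suffices to observe that $q \mapsto f_q^\Lambda$ is continuous from $\R^d$ into $\mathcal F H^s$ (by dominated convergence applied to the pointwise bound $|e^{-ik\cdot q}-e^{-ik\cdot q'}| \leq |k|\,|q-q'|$, using that $\langle k\rangle^{s+1}f^\Lambda \in L^2$ thanks to the cutoff); composing with the continuous map $t \mapsto q_t$ yields continuity of $t \mapsto f_{q_t}^\Lambda$ from $\R$ into $\mathcal F H^s$, and a standard estimate on the difference of Duhamel integrals at nearby times closes the argument. There is no real obstacle: the statement is essentially a direct consequence of the ultraviolet cutoff, which trivializes the field-side regularity question. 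The only mildly subtle point is to notice that the phase $e^{-ik\cdot q}$, being a Fourier translation, does not alter any $\mathcal F H^s$-norm, so all the relevant bounds are genuinely $q$-independent.
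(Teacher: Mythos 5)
Your argument is correct and follows the same route as the paper: bootstrap from the global $\Hi$-solution, apply the Duhamel formula for the field component, and use the UV cutoff to get a $q$-independent $\mathcal F H^s$-bound on $f_q^\Lambda$, with uniqueness inherited from $\Hi^s\subset\Hi$. You are in fact slightly more careful than the paper, which simply asserts $u\in\mathscr C(\R,\Hi^s)$ after showing $\alpha_t\in\mathcal F H^s$ pointwise in $t$; your explicit continuity argument via dominated convergence on $q\mapsto f_q^\Lambda\in\mathcal F H^s$ fills in that small gap.
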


\begin{proof}
    Consider the solution $u_t = (q_t,p_t, \alpha_t) \in \mathscr C (\R, \Hi)$ to $(\ref{ivplambda})$ given by Proposition \ref{globalexistence} with initial data $(q_0,p_0,\alpha_0) \in \Hi^s \subset \Hi$.
    Thus,
    \begin{align*}
        \langle k \rangle^s  \alpha_t (k) = e^{- i t} \left( \langle k \rangle^s \alpha_0 (k) - \frac{i}{\sqrt{2}}  \int_0^t e^{i s} \langle k \rangle^s \frac{e^{-i k . q_s} \Id_{\abs{k} \leq \Lambda}}{\abs{k}^{\frac{d-1}{2}}}f^\Lambda_{q_s} \di s \right) \in L^2 (\R^d , \di k).
    \end{align*}
    Hence, we deduce that $\alpha_t \in \mathcal F H^s$ and that $u \in \mathscr C (\R, \Hi^s)$. This proves existence. Uniqueness is a consequence of that of $\Hi$, since $\Hi^s \subset \Hi$.
\end{proof}

\subsection{Properties of the classical dressing}
We now investigate in more detail the dressing and its symplectic properties. Recall the expression for the equations of the dressing:
\begin{equation}
    \begin{cases}
        \dot q_\theta = 0 \\
        \dot p_\theta = -\sqrt{2} \mathrm{Re} \langle \alpha, k B_{q_\theta}^\Lambda \rangle_2  \\
        i \dot \alpha_\theta = \frac{i}{\sqrt{2}} B_{q_\theta}^\Lambda.
    \end{cases}
\end{equation}
The solutions are explicit:
$$
\begin{pmatrix}
q_\theta \\ p_\theta \\ \alpha_\theta 
\end{pmatrix}
=
\begin{pmatrix} 
    q_0 \\ p_0 - \sqrt{2} \theta \mathrm{Re} \langle \alpha_0 , k  B^\Lambda_{q_0} \rangle_2   \\ \alpha_0 +  \frac{\theta}{\sqrt{2}} B^\Lambda_{q_0}
\end{pmatrix}
=
\mathcal D(\theta)
\begin{pmatrix}
    q_0 \\ p_0 \\ \alpha_0
\end{pmatrix},
$$
for an initial condition $(q_0,p_0,\alpha_0) \in \Hi$.

\begin{prop}
    For all $\theta \in \R$, $\mathcal D(\theta)$ is a symplectomorphism of $(\Hi, \sigma)$, i.e. it is differentiable at every $u_0 \in \Hi$ and for all $u_1,u_2 \in \Hi$ we have
    $$
    \sigma( \di \mathcal D(\theta)_{u_0}(u_1), \di \mathcal D(\theta)_{u_0} (u_2) ) = \sigma( u_1 , u_2 ).
    $$
\end{prop}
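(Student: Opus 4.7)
The plan is to verify the claim by a direct computation from the explicit formula for $\mathcal D(\theta)$ displayed just above the proposition. First, since $B_{q_0}^\Lambda(k) = e^{-ik\cdot q_0}\,B^\Lambda(k)$ with $B^\Lambda(k) = -\Id_{K\leq |k|\leq \Lambda}|k|^{-(d-1)/2}\in L^2(\R^d)$, the map $q_0 \mapsto B_{q_0}^\Lambda$ is smooth from $\R^d$ to $L^2$ with $\partial_{q_0} B_{q_0}^\Lambda[\delta q] = -i(k\cdot \delta q)B_{q_0}^\Lambda$. Fréchet differentiability of $\mathcal D(\theta)$ at each $u_0 = (q_0,p_0,\alpha_0)\in\Hi$ follows directly from the explicit formula, and I would compute
\begin{equation*}
d\mathcal D(\theta)_{u_0}\bigl[(\delta q,\delta p,\delta\alpha)\bigr] = \Bigl(\delta q,\ \delta p + \theta\,\Pi(u_0,u),\ \delta\alpha - \tfrac{i\theta}{\sqrt{2}}(k\cdot\delta q)\,B_{q_0}^\Lambda\Bigr),
\end{equation*}
with $\Pi(u_0,u) = -\sqrt{2}\bigl[\mathrm{Re}\langle \delta\alpha, kB_{q_0}^\Lambda\rangle_2 + \mathrm{Im}\langle \alpha_0,(k\cdot\delta q)kB_{q_0}^\Lambda\rangle_2\bigr]$, where I used the identity $\mathrm{Re}\langle \alpha_0, -i(k\cdot\delta q)kB_{q_0}^\Lambda\rangle_2 = \mathrm{Im}\langle \alpha_0,(k\cdot\delta q)kB_{q_0}^\Lambda\rangle_2$.

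Next, denoting $u_j = (\delta q_j,\delta p_j,\delta\alpha_j)$ for $j=1,2$ and $(Q_j,P_j,A_j) = d\mathcal D(\theta)_{u_0}[u_j]$, I would expand $\sigma((Q_1,P_1,A_1),(Q_2,P_2,A_2)) = Q_1\cdot P_2 - P_1\cdot Q_2 + 2\mathrm{Im}\langle A_1,A_2\rangle_2$. Three families of cross-terms arise beyond the desired $\sigma(u_1,u_2)$. First, the $\alpha_0$-dependent linear-in-$\theta$ contributions to $Q_1\cdot P_2 - P_1\cdot Q_2$ are both proportional to $\mathrm{Im}\langle\alpha_0,(k\cdot\delta q_1)(k\cdot\delta q_2)B_{q_0}^\Lambda\rangle_2$, which is symmetric under $1\leftrightarrow 2$ and hence cancels in the antisymmetric combination. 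Second, the $\theta^2$ term in $2\mathrm{Im}\langle A_1,A_2\rangle_2$ equals $\theta^2\,\mathrm{Im}\int (k\cdot\delta q_1)(k\cdot\delta q_2)|B^\Lambda(k)|^2\,\di k$, which vanishes because $|B_{q_0}^\Lambda|^2 = |B^\Lambda|^2$ is real and $q_0$-independent. Third, the remaining linear-in-$\theta$ terms of the form $\mathrm{Re}\langle \delta\alpha_j,(k\cdot\delta q_\ell)B_{q_0}^\Lambda\rangle_2$ appear once from $Q_1\cdot P_2 - P_1\cdot Q_2$ and once from $2\mathrm{Im}\langle A_1,A_2\rangle_2$ with opposite signs, and cancel pairwise. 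What survives is exactly $\delta q_1\cdot\delta p_2 - \delta p_1\cdot\delta q_2 + 2\mathrm{Im}\langle \delta\alpha_1,\delta\alpha_2\rangle_2 = \sigma(u_1,u_2)$, which is what I want.

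The argument is not conceptually deep; the main obstacle is careful bookkeeping of signs and conjugations when converting between $\mathrm{Re}\langle\cdot,i\,\cdot\rangle_2$ and $\mathrm{Im}\langle\cdot,\cdot\rangle_2$, and keeping track of which of the two $q$-derivatives produces a factor of $i$. Alternatively, the statement would follow abstractly from the general principle that the flow of a Hamiltonian vector field on a (possibly infinite-dimensional) symplectic manifold preserves the symplectic form---here $\mathcal D(\theta)$ is the Hamiltonian flow of $\mathscr E_{\text{D}}^\Lambda$---but the explicit calculation above is short enough to be self-contained and avoids invoking the infinite-dimensional Cartan calculus.
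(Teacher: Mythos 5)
Your proposal is correct and follows essentially the same route as the paper: compute the Fréchet differential of $\mathcal D(\theta)$ from the explicit formula, expand $\sigma(\di\mathcal D(\theta)_{u_0}(u_1),\di\mathcal D(\theta)_{u_0}(u_2))$, and observe the same three cancellations (the symmetric $\alpha_0$-dependent terms, the vanishing of the $\theta^2$ term because $\mathrm{Im}\langle (k\cdot q_1)B^\Lambda_{q_0},(k\cdot q_2)B^\Lambda_{q_0}\rangle=0$, and the pairwise cancellation of the remaining linear-in-$\theta$ cross-terms). The paper organizes the bookkeeping with underlining rather than your three-family taxonomy, but the algebra and the identities invoked are identical.
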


\begin{proof}
    It is an explicit computation. First set $u_i = (q_i,p_i,\alpha_i)$ for $i \in \{0,1,2 \}$. It is clear that $\mathcal D(\theta)$ is differentiable at $u_0$ and its differential is
    $$
    \di \mathcal D(\theta)_{u_0} (u_1) = 
    \begin{pmatrix} 
    q_1 \\ p_1 - \sqrt{2} \theta \mathrm{Re} \langle \alpha_1 ,  k B^\Lambda_{q_0} \rangle  - \sqrt{2} \theta \mathrm{Re} \langle \alpha_0, (-i k . q_1)  k B^\Lambda_{q_0} \rangle \\ \alpha_1 - \frac{\theta}{\sqrt{2}} i  k . q_1 B^\Lambda_{q_0}
    \end{pmatrix}.
    $$
    Hence, using that $ \mathrm{Im} \langle -i k . q_1 B^\Lambda_{q_0}, -i  k . q_2 B^\Lambda_{q_0} \rangle = \mathrm{Im} \langle k . q_1 B^\Lambda,  k . q_2 B^\Lambda \rangle = 0$, and simplyfing the terms with the same underlining we obtain:
    \begin{align*}
        \sigma(\di \mathcal D(\theta)_{u_0} (u_1),\di \mathcal D(\theta)_{u_0} (u_2)) =& q_1 . (p_2 - \dashuline{\sqrt{2} \theta \mathrm{Re} \langle \alpha_2 ,  k B^\Lambda_{q_0} \rangle}  - \underline{\sqrt{2}  \theta \mathrm{Re} \langle \alpha_0, (-i k . q_2)   k B^\Lambda_{q_0} \rangle}) \\
        -& q_2 . (p_1 - \dotuline{\sqrt{2} \theta \mathrm{Re} \langle \alpha_1 ,  k B^\Lambda_{q_0} \rangle}  - \underline{\sqrt{2} \theta  \mathrm{Re} \langle \alpha_0, (-i k . q_1)  k B^\Lambda_{q_0} \rangle}) \\
        +& 2 \mathrm{Im}\langle \alpha_1,\alpha_2 \rangle \\ +& \dotuline{2 \mathrm{Im}\langle \alpha_1, -  \frac{\theta}{\sqrt{2}} i k.q_2 B^\Lambda_{q_0} \rangle} + \dashuline{2 \mathrm{Im}\langle  -  \frac{\theta}{\sqrt{2}} i k.q_1 B^\Lambda_{q_0}, \alpha_2  \rangle} \\
        =& q_1.p_2 - q_2.p_1 + 2 \text{Im} \langle \alpha_1, \alpha_2 \rangle \\
        =& \sigma(u_1,u_2).
    \end{align*}
\end{proof}

As for the polaron dynamics, the regularity of the initial condition is propagated along the flow of the dressing.

\begin{prop}(Propagation of regularity)
    For all $\theta \in \R$, $\mathcal D(\theta) (\Hi^s) \subset \Hi^s$.
\end{prop}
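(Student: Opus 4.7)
The plan is to read off the result directly from the explicit formula
\[
\mathcal D(\theta)(q_0,p_0,\alpha_0) = \Bigl( q_0,\ p_0 - \sqrt{2}\,\theta\,\mathrm{Re}\langle \alpha_0, k B^\Lambda_{q_0}\rangle_2,\ \alpha_0 + \tfrac{\theta}{\sqrt{2}} B^\Lambda_{q_0} \Bigr)
\]
derived just above the statement. Since the position and momentum components live in $\R^{2d}$ automatically (provided the inner product defining the $p$-correction is finite), the only nontrivial check is that the field component $\alpha_0 + \tfrac{\theta}{\sqrt 2}B^\Lambda_{q_0}$ belongs to $\mathcal F H^s$. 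By linearity of $\mathcal F H^s$, it is enough to prove that $B^\Lambda_{q_0} \in \mathcal F H^s$, and, for the momentum correction, that $kB^\Lambda_{q_0} \in L^2$.

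The key observation is that in the classical $\hbar=0$ limit the dressing kernel reads
\[
B^\Lambda_q(k) = -\,\Id_{K \leq |k| \leq \Lambda}\, \frac{e^{-i k\cdot q}}{|k|^{(d-1)/2}},
\]
which is supported in the compact annulus $\{K \leq |k| \leq \Lambda\}$, where $\langle k\rangle$ is bounded above and $|k|$ is bounded away from $0$. Consequently, for every $s \in \R$,
\[
\|\langle k\rangle^s B^\Lambda_q \|_2^2 = \int_{K \leq |k| \leq \Lambda} \frac{\langle k\rangle^{2s}}{|k|^{d-1}}\,\di k < \infty,
\]
and the same estimate with an extra factor $|k|^2$ in the integrand yields $kB^\Lambda_q \in L^2$, so that $\mathrm{Re}\langle\alpha_0, kB^\Lambda_{q_0}\rangle_2$ is finite for every $\alpha_0 \in L^2$. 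Inserting these facts into the explicit formula gives $\mathcal D(\theta)(q_0,p_0,\alpha_0) \in \Hi^s$.

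There is no real obstacle here: the statement is a direct consequence of the double cutoff $\Id_{K \leq |k|\leq \Lambda}$ built into $B^\Lambda_q$ at the classical level, which makes it compactly supported away from the origin in Fourier space, hence a member of every Sobolev class $\mathcal F H^s$. The only minor sanity check is that the $\mathcal F H^s$-norm of $B^\Lambda_q$ is independent of $q$, which is immediate since $q$ enters only through the unimodular factor $e^{-i k\cdot q}$. In particular, unlike the undressed propagation-of-regularity result, no Duhamel representation or Gronwall argument is required, because the dressing flow is given in closed form.
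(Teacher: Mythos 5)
Your proof is correct and follows essentially the same approach as the paper: exploit the explicit closed form of $\mathcal D(\theta)$ and the fact that $B^\Lambda_{q_0}$ is supported in the compact annulus $\{K \le |k| \le \Lambda\}$, hence lies in $\mathcal F H^s$ for every $s$. You go slightly further than the paper's proof by also noting $kB^\Lambda_{q_0}\in L^2$, which justifies finiteness of the momentum correction, but the core argument is identical.
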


\begin{proof}
    Let $(q_0,p_0,\alpha_0) \in \Hi^s$. Then,
    $$
    \alpha_\theta = \alpha_0 +  \frac{\theta}{\sqrt{2}} B_{q_0}^\Lambda.
    $$
    Hence,
    $$
    \langle k \rangle^s \alpha_\theta (k) = \langle k \rangle^s \alpha_0 (k) -  \frac{\theta}{\sqrt{2}} \langle k \rangle^s  \frac{e^{-i k . q}}{\abs{k}^{\frac{d-1}{2}}} \Id_{K \leq \abs{k} \leq \Lambda} \in L^2 (\R^d, \di k).
    $$
\end{proof}

\subsection{Link between the undressed and dressed system}

We are now able to link the two descriptions of the system, dressed and undressed. We recall that the dressed system is given by the energy functional
$$
    \hat{\mathscr E}^\Lambda (q,p,\alpha) = \frac{\abs{p}^2 }{2} + \norm{\alpha}_2^2 + \sqrt{2} \mathrm{Re} \langle \alpha, f_q^K \rangle_2 - \sqrt{2}  \mathrm{Re} \langle \alpha, (k.p) B_q^\Lambda \rangle_2 + (\mathrm{Re} \langle \alpha, k B_q^\Lambda \rangle_2)^2,
$$
or, in a factorized way,
$$
\hat{\mathscr E}^\Lambda (q,p,\alpha)  = \frac{\abs{p - \sqrt{2} \mathrm{Re} \langle \alpha , k B_q^\Lambda \rangle_2}^2}{2} + \norm{\alpha + \frac{1}{\sqrt{2}}f_q^K}_2^2 - \frac{\norm{f^K}^2_2}{2}.
$$
\begin{prop}For every $t \in \R$, we have $ \hat{\mathscr E}^\Lambda = \mathscr E^\Lambda \circ \mathcal D(1), $ or, in terms of flows,$$
\hat \Phi (t) = \mathcal D(-1) \circ \Phi (t) \circ \mathcal D(1).
$$
\end{prop}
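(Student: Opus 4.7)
The strategy is two-fold: first, verify the pointwise energy identity $\hat{\mathscr E}^\Lambda = \mathscr E^\Lambda\circ\mathcal D(1)$ (modulo an additive constant) by direct algebraic manipulation; then use it, combined with the symplectomorphism property of $\mathcal D(1)$ and the global well-posedness of the undressed dynamics already established in this section, to derive both the flow identity and the global well-posedness of the dressed dynamics.

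For the energy identity I would substitute $\mathcal D(1)(q,p,\alpha) = (q,\, p - \sqrt{2}\,\mathrm{Re}\langle \alpha, kB^\Lambda_q\rangle_2,\, \alpha + \tfrac{1}{\sqrt 2}B^\Lambda_q)$ into the factorized form of $\mathscr E^\Lambda$. Expanding the kinetic square is immediate; the field contribution reduces to $\norm{\alpha + \tfrac{1}{\sqrt 2}B^\Lambda_q}_2^2 + \sqrt 2\,\mathrm{Re}\langle \alpha + \tfrac{1}{\sqrt 2}B^\Lambda_q, f^\Lambda_q\rangle_2$, and the key algebraic input is the pointwise identity
$$
B^\Lambda_q + f^\Lambda_q = f^K_q,
$$
valid at $\hbar = 0$ because $B^\Lambda_q = -\Id_{K\leq\abs{k}\leq\Lambda}f_q$ and $f^\Lambda_q = \Id_{\abs{k}\leq\Lambda}f_q$. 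It collapses all $\alpha$-linear contributions to precisely $\sqrt 2\,\mathrm{Re}\langle\alpha, f^K_q\rangle_2$, reproducing the mixed term of $\hat{\mathscr E}^\Lambda$, and leaves a $(q,p,\alpha)$-independent residue $\tfrac{1}{2}\norm{B^\Lambda}_2^2 + \mathrm{Re}\langle B^\Lambda, f^\Lambda\rangle_2$, which is the classical counterpart of the renormalisation constant $C^\Lambda$.

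The abstract principle translating an identity of Hamiltonians into one of flows is that a symplectomorphism $\Psi$ conjugates Hamiltonian flows by $\Phi^{H\circ\Psi}_t = \Psi^{-1}\circ\Phi^H_t\circ\Psi$, with additive constants being irrelevant. To make this concrete in our setting I would fix $u_0\in\Hi$ and define the candidate $v(t) := \mathcal D(-1)\circ\Phi(t)\circ\mathcal D(1)(u_0)$, where $\Phi$ is the global flow of \eqref{undressedeulerlagrange} built earlier. Then $v(0) = u_0$, and differentiating $v$ componentwise, using \eqref{undressedeulerlagrange} for the inner factor and the chain rule through $\mathcal D(\pm 1)$, I would verify that $v$ satisfies the dressed system \eqref{dressedeulerlagrange}; this simultaneously yields existence. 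For uniqueness, I run the argument in reverse: for any $\mathscr C^1$ solution $w$ of \eqref{dressedeulerlagrange}, a direct computation shows that $\mathcal D(1)w$ solves \eqref{undressedeulerlagrange} with initial datum $\mathcal D(1)u_0$, and uniqueness of the undressed flow forces $\mathcal D(1)w = \Phi(\cdot)\mathcal D(1)u_0$, hence $w = v$. Both sides of the flow identity are thereby produced at once.

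The main obstacle I anticipate is the algebraic bookkeeping in the direct verification: reinserting $p_t = P + \sqrt 2\,\mathrm{Re}\langle A, kB^\Lambda_Q\rangle_2$ and $\alpha_t = A + \tfrac{1}{\sqrt 2}B^\Lambda_Q$ into $\dot p_t$ and $\dot\alpha_t$ generates a cascade of cross terms from which the nonlinearities peculiar to the dressed system, namely the coupling $\sqrt 2\,\mathrm{Re}\langle\alpha, ik(k\cdot p)B^\Lambda_q\rangle_2$ and the quadratic piece $2\sum_j\mathrm{Re}\langle\alpha, k_jB^\Lambda_q\rangle_2\,\mathrm{Re}\langle\alpha, -ik_jkB^\Lambda_q\rangle_2$, must emerge. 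Recording up front that $\int k_j\,\Id_{K\leq\abs{k}\leq\Lambda}\,\di k/\abs{k}^{d-1} = 0$ and $\mathrm{Im}\langle f^\Lambda_q, kB^\Lambda_q\rangle_2 = 0$ (both by odd symmetry of the $k$-integrand) prunes the auxiliary pieces involving $\langle B^\Lambda_q, kB^\Lambda_q\rangle_2$ and $\langle f^\Lambda_q, kB^\Lambda_q\rangle_2$ and makes the matching between the two systems transparent.
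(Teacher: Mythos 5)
Your proof is correct and follows essentially the same route as the paper: substitute $\mathcal D(1)(q,p,\alpha)=(q,\,p-\sqrt2\,\mathrm{Re}\langle\alpha,kB^\Lambda_q\rangle_2,\,\alpha+\tfrac1{\sqrt2}B^\Lambda_q)$ into $\mathscr E^\Lambda$, complete the square, and use the pointwise identity $f^\Lambda_q+B^\Lambda_q=f^K_q$. You are right to insert ``modulo an additive constant'': the stated identity $\hat{\mathscr E}^\Lambda=\mathscr E^\Lambda\circ\mathcal D(1)$ in fact holds only up to $-\tfrac12\norm{B^\Lambda}_2^2$. The paper's displayed chain of equalities quietly replaces the constant $-\tfrac12\norm{f^\Lambda}_2^2$ by $-\tfrac12\norm{f^K}_2^2$ in the last step, which is not an identity since $\norm{f^\Lambda}_2^2-\norm{f^K}_2^2=\norm{B^\Lambda}_2^2\neq0$; your computation of the residue $\tfrac12\norm{B^\Lambda}_2^2+\langle B^\Lambda,f^\Lambda\rangle_2=-\tfrac12\norm{B^\Lambda}_2^2$ is the correct one. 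Since this constant is $(q,p,\alpha)$-independent, the Hamilton equations and the flow relation are unchanged, so the conclusion survives. For the flow identity, the paper simply asserts it as a reformulation of the energy identity, implicitly invoking the fact that a symplectomorphism $\Psi$ (here $\mathcal D(1)$, shown to be a symplectomorphism two propositions earlier) satisfies $\Phi^{H\circ\Psi}_t=\Psi^{-1}\circ\Phi^H_t\circ\Psi$; you make that principle explicit and propose a concrete componentwise verification combined with the global well-posedness and uniqueness already established for the undressed system, which is a valid, if more laborious, way to close the argument. One incidental remark: the form $\hat\Phi(t)=\mathcal D(-1)\circ\Phi(t)\circ\mathcal D(1)$ used here, and by you, is the correct one; Theorem \ref{clthm} earlier in the paper has the roles of $\mathcal D(1)$ and $\mathcal D(-1)$ swapped.
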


\begin{proof}
Noticing that $f_q^\Lambda + B_q^\Lambda = f_q^K$, we have that for all $(q,p,\alpha) \in \Hi$,
    \begin{align*}
        \mathscr E^\Lambda \circ \mathcal D(1) (q,p,\alpha) &= \mathscr E^\Lambda (p-\sqrt 2 \mathrm{Re} \langle \alpha, k B_q \rangle ,q,\alpha + \frac{B_q}{\sqrt{2}}) \\
        &=  \frac{\abs{p - \sqrt{2} \mathrm{Re} \langle \alpha , k B_q^\Lambda \rangle_2}^2}{2} + \norm{\alpha + \frac{f_q^\Lambda +B_q^\Lambda}{\sqrt{2}}}_2^2 - \frac{\norm{f^\Lambda}^2_2}{2} \\
        &= \hat{\mathscr E}^\Lambda (q,p,\alpha). 
    \end{align*}
\end{proof}

\section{Semiclassical limit of the quantum models}

\subsection{Tools of second quantization}
In this subsection we briefly review the tools of second quantization that will be needed in the following. Let us recall our Hilbert space
$$
\Gamma_s (\Hi) =  \Gamma_s (\C^d \oplus \Hi_f) \cong L^2 (\R^d) \otimes  \Gamma_s (\Hi_f).
$$
The vacuum is $\Omega = \varphi_0 \otimes \Omega_f$, where $\varphi_0 (x) = (\pi \hbar)^{- \frac{1}{4} } e^{- \frac{ \abs{x}^2 }{2 \hbar}  }$ and $\Omega_f = 1 \oplus 0 \oplus 0 \oplus...$ is the field vacuum. For any self-adjoint operator $(A,D(A))$ acting on $\Hi_f$, we define its second quantization as the closure of the operator defined on $\bigoplus_{n \in \N}^{\text{alg}} D(A)^{\otimes_s n}$ by
$$
\di \Gamma_\hbar (A) |_{D(A)^{\otimes_s n}} = \hbar \sum_{i=1}^n \Id \otimes ... \otimes \underbrace{A}_{i^{\text{th}} \text{position}} \otimes ... \otimes \Id.
$$
The closure, still denoted $\di \Gamma_\hbar (A)$ is self-adjoint and its action on the field vacuum is $\di \Gamma_\hbar (A) \Omega_f = 0$. In the particular case $A = \Id_{L^2 (\R^d)}$, we call $\di \Gamma_\hbar (h)$ the \textit{number operator}, or \textit{number of particles}, and denote it $N_\hbar$. All of these operators are naturally extended to self-adjoints operators on the full Hilbert space.
\\
\\
For any $\xi \in \Hi_f$, the creation and annihilation operators $a_\hbar^* (\xi)$, $a_\hbar (\xi)$ are defined as follows: for $\Psi = (\Psi^{(n)})_{n \in \N} \in \bigoplus_{n \in \N}^{\text{alg}} \Hi_f^{\otimes_s n}$, let
\begin{align*}
&(a_\hbar^* (\xi) \Psi)^{0} := 0, \\
&(a_\hbar^* (\xi) \Psi)^{(n+1)} := \sqrt{\hbar (n+1)} S_{n+1} ((\xi \otimes \Id_{\Hi_f^{\otimes_s n}} ) \Psi^{(n)}) , \ n \in \N.
\end{align*}
Here $S_n$ denotes for any $n \in \N$ the orthogonal projection of $ \Hi_f^{\otimes n}$ onto its symmetric subspace $  \Hi_f^{\otimes_s n}$. This defines a closable operator, that we close while keeping the same notation, and which is named name creation operator. The annihilation operator is then defined as the adjoint
$$
    a_\hbar (\xi) = (a_\hbar^* (\xi))^*.
$$
Similarly, on the particle side, we have the position and momentum operators $\qh_j$ and $\ph_j$, $j \in \{1,...,n\}$.
These operators satisfy the so-called \textit{canonical commutation relations}: for all $\alpha,\beta \in \Hi$,
\begin{equation}
    \label{CCR} 
    \begin{cases}
    [a_\hbar(\alpha),a_\hbar^* (\beta)] &= \hbar \langle \alpha, \beta \rangle \Id,      \\
    [a_\hbar(\alpha), a_\hbar(\beta)]&=0,   \\
    [a^*_\hbar(\alpha), a^*_\hbar(\beta)]&=0,
    \end{cases}
    \ \ \ \ \
    \begin{cases}
        [\qh_i , \ph_j] &= i \hbar \delta_{i,j}, \\
        [\qh_i, \qh_j]& = 0, \\
        [\ph_i, \ph_j] &= 0.
    \end{cases}
\end{equation}
To avoid considerations on the domain, one often consider the \textit{Weyl operators} : for $\alpha \in \Hi_f$ and  $(q,p) = z \in \C^d$, define the particle and field Weyl operators
\begin{equation}
\label{Weyl}  
\begin{aligned}
W_\hbar(\alpha) &:= e^{\frac{i}{\sqrt{2}} (a_\hbar^* (\alpha) + a_\hbar (\alpha)  )} = e^{i \phi_\hbar (\alpha)},\\
T_\hbar (q,p) &:= e^{ i (p . \qh - q . \ph) } = e^{i \mathrm{Im}\langle \qh + i \ph , z \rangle }.
\end{aligned}
\end{equation}
$\phi_\hbar(\alpha) := \frac{1}{\sqrt{2}} (a_\hbar^* (\alpha) + a_\hbar (\alpha)  )$ is called the field operator, and we also define $\pi_\hbar(\alpha) := \phi_\hbar( i \alpha) = \frac{1}{\sqrt{2}} (a_\hbar^* (\alpha) - a_\hbar(\alpha))$. Rewritting \eqref{CCR}  in terms of Weyl operators gives the \textit{Weyl commutation relations}: for all  $\alpha, \beta \in \Hi_f$ and $z,w \in \C^d$,
\begin{equation}
    \label{WCR} 
    \begin{aligned}
    W_\hbar (\alpha) W_\hbar(\beta) &= e^{-i \frac{\hbar}{2} \mathrm{Im} \langle \alpha , \beta \rangle_2 }   W_\hbar (\alpha + \beta), \\
    T_\hbar (z) T_\hbar (w) &= e^{-i \frac{\hbar}{2} \text{Im} \langle z,w \rangle } T_\hbar (z+w).
    \end{aligned}
\end{equation}
Now, let us recall standard estimates on the creation and annihilation operators :
\begin{prop}
    Let $\xi \in \Hi_f$ be a bounded linear operator. Then $D(\sqrt{N_\hbar}) \subset D(a_\hbar^{\#}(\xi))$ and for all $\Psi \in D(\sqrt{N_\hbar})$,
    \begin{equation}
        \label{estimatecreation}
    \norm{a_\hbar^{\#} (\xi) \Psi } \leq \norm{\xi}_{\Hi_f} \norm{\sqrt{N_\hbar +1} \Psi}.
      \end{equation}
\end{prop}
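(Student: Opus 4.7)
The plan is to reduce to the explicit action of $a_\hbar^\#(\xi)$ on symmetric tensors, prove the estimate componentwise on each Fock sector, then extend by density. This is entirely standard; the only subtlety is keeping track of the $\hbar$-dependent factors.

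First I would work on the algebraic core $\bigoplus_{n}^{\mathrm{alg}} \Hi_f^{\otimes_s n}$ and compute. For $\Psi^{(n)} \in \Hi_f^{\otimes_s n}$, by definition
\[
\|a_\hbar^*(\xi)\Psi^{(n)}\|^2 = \hbar(n+1)\,\|S_{n+1}(\xi\otimes\Psi^{(n)})\|^2 \leq \hbar(n+1)\,\|\xi\|_{\Hi_f}^2\,\|\Psi^{(n)}\|^2,
\]
using that $S_{n+1}$ is an orthogonal projection. For the annihilation operator I would use duality: for $\Psi^{(n)} \in \Hi_f^{\otimes_s n}$ one checks that $a_\hbar(\xi)\Psi^{(n)} \in \Hi_f^{\otimes_s(n-1)}$ and
\[
\|a_\hbar(\xi)\Psi^{(n)}\|^2 \leq \hbar n\,\|\xi\|_{\Hi_f}^2\,\|\Psi^{(n)}\|^2,
\]
either directly or by testing against $\Phi^{(n-1)}$ and invoking the adjoint relation with the creation estimate just obtained.

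Next, since the different Fock sectors are orthogonal and $a_\hbar^\#(\xi)$ maps the $n$-sector into the $(n\pm 1)$-sector, I would sum over $n$. For any $\Psi = (\Psi^{(n)})_n$ in the algebraic domain,
\[
\|a_\hbar^\#(\xi)\Psi\|^2 \leq \|\xi\|_{\Hi_f}^2 \sum_{n\in\N} \hbar(n+1)\,\|\Psi^{(n)}\|^2.
\]
Using $\hbar(n+1) = \hbar n + \hbar \leq \hbar n + 1$ for $\hbar \in (0,1)$, and recognizing that $N_\hbar$ acts as multiplication by $\hbar n$ on the $n$-sector, the right-hand side is bounded by $\|\xi\|_{\Hi_f}^2\,\|\sqrt{N_\hbar+1}\,\Psi\|^2$, which gives the claimed inequality on the algebraic domain.

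Finally I would extend to all of $D(\sqrt{N_\hbar})$ by density: the algebraic finite-particle subspace is a core for $\sqrt{N_\hbar+1}$, and by the just-proved bound the operators $a_\hbar^\#(\xi)$ are closable with the inequality inherited by their closures. The only ``obstacle'' is a cosmetic one, namely matching the factor $\hbar(n+1)$ to the stated $N_\hbar+1$, handled by the bound $\hbar \leq 1$; everything else is a direct reading of the definitions.
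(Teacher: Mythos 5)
The paper actually states this proposition without proof, treating it as a standard second-quantization fact, so there is no in-text argument to compare against. Your sector-by-sector argument is the standard and correct one: the bound $\|a_\hbar^*(\xi)\Psi^{(n)}\|^2 \leq \hbar(n+1)\|\xi\|^2\|\Psi^{(n)}\|^2$ since $S_{n+1}$ is an orthogonal projection, the annihilation bound with $\hbar n$ in place of $\hbar(n+1)$ (directly or by duality through the adjoint relation), orthogonality of the Fock sectors to sum, and density of the algebraic finite-particle subspace as a core for $\sqrt{N_\hbar+1}$ together with closedness of $a_\hbar^\#(\xi)$. You correctly flag the one genuine bookkeeping point, namely that passing from the sharp $\hbar(n+1) = (\hbar n + \hbar)$ to the stated $\hbar n + 1$ uses $\hbar \in (0,1)$, which is indeed the paper's standing assumption; with that, the estimate holds exactly as claimed. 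Nothing is missing.
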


\subsection{Wigner measures}

\begin{defn}
We say that $\mu \in \mathscr P (\Hi)$ is a Wigner measure of the family of density matrices $(\rho_\hbar)_\hbar \subset \mathfrak S^1 (\Gamma_s (\Hi))_{+,1}$ if there exists a subsequence $\hbar_n \rightarrow 0$ such that
$$
\underset{n \rightarrow \infty}{\lim} \Tr \left( \rho_{\hbar_n} T_{\hbar_n} \left( \frac{z}{2  i \pi} \right) \otimes W_{\hbar_n} \left( \frac{\alpha}{\sqrt{2} \pi} \right)    \right) = \int_{\Hi} e^{2 i \pi \mathrm{Re} \langle (z,\alpha), (w,\beta) \rangle} \di \mu (w,\beta).
$$
We denote the set of Wigner measures of $(\rho_{\hbar_n})_{\hbar_n}$ by $\mathscr{M} (\rho_{\hbar_n}, n \in \N)$. The right-hand side is in fact the Fourier transform of the probability measure $\mu$, and by Bochner's theorem, this entirely determines $\mu$.
\end{defn}
The following theorem, due to \cite{ammari2008ahp}, gives a sufficient condition for the existence of Wigner measures for a given family of density matrices:

\begin{prop} \label{wignerzied}
    Let $(\rho_\hbar)_\hbar \subset \mathfrak S^1 (\Gamma_s (\Hi))_{+,1}$ be a family of density matrices satisfying the uniform estimate:
    \begin{equation} \label{assumption} 
    \exists C>0, \forall \hbar \in (0,1), \Tr \left( \rho_\hbar (N_\hbar + \qh^2 + \ph^2) \right) \leq C.
    \end{equation}
    Then,
    $$
    \mathscr{M} (\rho_\hbar, \hbar \in (0,1)) \neq \emptyset.
    $$
    Moreover, 
    \begin{equation} \label{uniformmeasurebound}
    \forall \mu \in \mathscr{M} (\rho_\hbar, \hbar \in (0,1)), \int_{\Hi} \norm{(z,\alpha)}^{2} \di \mu(z,\alpha) < \infty.
    \end{equation}
\end{prop}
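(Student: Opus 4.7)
The plan is to adapt the now-standard extraction procedure of Ammari--Nier to the combined particle--field phase space $\Hi$. For $u = (z,\alpha) \in \Hi$ introduce the total Weyl operator
$$
\mathcal W_\hbar(u) := T_\hbar\!\left(\tfrac{z}{2 i \pi}\right) \otimes W_\hbar\!\left(\tfrac{\alpha}{\sqrt 2 \pi}\right),
$$
and the characteristic functional $G_\hbar(u) := \Tr(\rho_\hbar \mathcal W_\hbar(u))$. By construction each $G_\hbar$ is normalized, $G_\hbar(0)=1$, bounded in modulus by $1$, and completely positive definite on $\Hi$.

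The first step is uniform (in $\hbar$) equicontinuity on bounded sets. Using the Weyl relations \eqref{WCR} one factors
$$
\mathcal W_\hbar(u_2) - \mathcal W_\hbar(u_1) = \bigl(\mathcal W_\hbar(u_2 - u_1) - \Id\bigr)\mathcal W_\hbar(u_1)\, e^{-i\hbar \sigma(u_1,u_2)/(8\pi^2)},
$$
and a Taylor expansion of $\mathcal W_\hbar(v) - \Id$ combined with \eqref{estimatecreation} and the analogous bound on $\qh,\ph$ yields
$$
|G_\hbar(u_1) - G_\hbar(u_2)| \lesssim \|u_1 - u_2\|\,\sqrt{\Tr\bigl(\rho_\hbar(N_\hbar+\qh^2+\ph^2)\bigr)} + O(\hbar),
$$
which by \eqref{assumption} is uniform in $\hbar$.

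A diagonal extraction on a countable dense subset $D \subset \Hi$ then produces a sequence $\hbar_n \to 0$ along which $G_{\hbar_n}$ converges pointwise on $D$; equicontinuity upgrades this to pointwise convergence on all of $\Hi$ towards a continuous, normalized, positive definite limit $G$. Its restriction to any finite-dimensional subspace $F \subset \Hi$ is, by the classical Bochner theorem, the Fourier transform of a probability measure $\mu_F \in \mathscr P(F)$, and the family $\{\mu_F\}_F$ is automatically consistent.

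The core difficulty, which I expect to be the main obstacle, is to promote these cylindrical projections to a genuine Radon probability measure $\mu$ on the infinite-dimensional Hilbert space $\Hi$: Bochner's theorem alone does not suffice. Here \eqref{assumption} provides the needed tightness. Reading the second moment off $G_\hbar$ (for instance by differentiating $\lambda \mapsto G_\hbar(\lambda u)$ twice at $\lambda = 0$), one obtains, for every finite-dimensional subspace $F$,
$$
\int_F \|v\|^2 \di \mu_F(v) \;\leq\; \liminf_{n\to\infty} \Tr\bigl(\rho_{\hbar_n}(N_{\hbar_n}+\qh^2+\ph^2)\bigr) \;\leq\; C,
$$
uniformly in $F$. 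This uniform second-moment control supplies the tightness required to extend the consistent cylindrical family to a unique Radon probability $\mu$ on $\Hi$; hence $\mu \in \mathscr M(\rho_\hbar, \hbar \in (0,1))$, and Fatou's lemma along an exhaustion of $\Hi$ by finite-dimensional subspaces gives the moment bound \eqref{uniformmeasurebound}.
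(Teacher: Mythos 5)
The paper does not prove this proposition: it cites it directly from Ammari--Nier \cite{ammari2008ahp} (their Theorem 6.2 and surrounding material) and moves on. So there is no in-paper proof to compare against; what you have done is reconstruct the standard argument from the cited source, and your reconstruction is substantially correct --- the outline (uniform Lipschitz-type bound on characteristic functionals from the number/phase-space moment estimate, diagonal extraction on a dense countable set, finite-dimensional Bochner, and then second-moment tightness to promote the cylindrical family to a genuine Radon probability on $\Hi$) is precisely the Ammari--Nier mechanism, and you correctly flag the promotion step as the one where naive Bochner fails and the moment estimate is indispensable.

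One small inaccuracy to fix: the factoring
$$
\mathcal W_\hbar(u_2) - \mathcal W_\hbar(u_1) = \bigl(\mathcal W_\hbar(u_2-u_1) - \Id\bigr)\mathcal W_\hbar(u_1)\, e^{-i\hbar\sigma(u_1,u_2)/(8\pi^2)}
$$
is not an identity, because the scalar phase cannot be pulled past the term $-\mathcal W_\hbar(u_1)$. The correct statement is $\mathcal W_\hbar(u_2) = e^{i\theta_\hbar}\mathcal W_\hbar(u_2-u_1)\mathcal W_\hbar(u_1)$ with $\theta_\hbar = O(\hbar)$, so
$$
\mathcal W_\hbar(u_2) - \mathcal W_\hbar(u_1) = \bigl(\mathcal W_\hbar(u_2-u_1) - \Id\bigr)\mathcal W_\hbar(u_1) + \bigl(e^{i\theta_\hbar}-1\bigr)\mathcal W_\hbar(u_2-u_1)\mathcal W_\hbar(u_1),
$$
and the second term contributes the $O(\hbar)$ error you already carry. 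Your estimate and conclusion are unaffected, but the intermediate identity should be stated in this corrected form. It would also be worth spelling out, in the tightness step, that the uniform finite-dimensional second moments give concentration on norm-balls (via Markov), that norm-balls are weakly compact, and that weak and norm Borel $\sigma$-algebras coincide on the separable $\Hi$, which is what turns tightness of the cylindrical family into a bona fide Borel probability measure.
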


\subsection{Propagation of regularity estimates}

We consider a slightly more general setting : for $g \geq 0$, and a form factor $F \in L^\infty_q L^2(\R^d, (1+\abs{k}^2) \di k) $, define the following Hamiltonian:
$$
H_{\hbar,g,F} := g H_{\hbar}^0 + \phi_\hbar (F_{\qh}).
$$
We can treat under the same framework the undressed Hamiltonian ($(g,F) = (1, f^\Lambda)$) and the dressing ($(g,F) = (0, i B^\Lambda)$). The unitary evolution associated to $H_{\hbar,g,F}$ is denoted
$$
U_{\hbar,g,F}(t) := e^{-i \frac{t}{\hbar} H_{\hbar,g,F}}.
$$
The evolution of a density matrix $\rho_\hbar$ is then 
$$
\rho_\hbar (t) := U_{\hbar,g,F}(t)^* \rho_\hbar U_{\hbar,g,F} (t) = e^{i \frac{t}{\hbar} H_{\hbar,g,F}} \rho_\hbar e^{-i \frac{t}{\hbar}H_{\hbar,g,F}},
$$
and in the interaction picture:
$$
\tilde{\rho}_\hbar (t) = U_{\hbar,g,0}(t) \rho_\hbar (t) U_{\hbar,g,0} (t)^* = e^{-i g \frac{t}{\hbar}H_{\hbar}^0} e^{i \frac{t}{\hbar} H_{\hbar,g,F}} \rho_\hbar e^{-i \frac{t}{\hbar}H_{\hbar,g,F}} e^{i g \frac{t}{\hbar}H_{\hbar}^0}.
$$

\begin{prop}
    $H_{\hbar,g,F}$ is a self-adjoint operator. For $g>0$, its domain is $D(H_\hbar^0)$.
\end{prop}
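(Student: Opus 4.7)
The plan is to realize $H_{\hbar,g,F}$ as a Kato-Rellich perturbation of the self-adjoint operator $gH_\hbar^0$ when $g>0$, and, when $g=0$, to identify $\phi_\hbar(F_{\qh})$ with a direct integral of fiberwise self-adjoint operators. In both cases the key observation is that $F\in L^\infty_q L^2_k$ allows us, via the isomorphism $\Gamma_s(\Hi)\cong L^2(\R^d,\di q;\Gamma_s(\Hi_f))$, to view
$$\phi_\hbar(F_{\qh}) = \int^\oplus_{\R^d}\phi_\hbar(F_q)\,\di q,$$
as a decomposable operator. For a.e.\ $q$ we have $F_q\in\Hi_f$ with $\norm{F_q}_{L^2_k}\leq\norm{F}_{L^\infty_q L^2_k}$, so each fiber $\phi_\hbar(F_q)$ is a symmetric Segal field operator, essentially self-adjoint on the finite-particle subspace by standard Fock space theory. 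For $g=0$ the direct integral theorem then immediately yields self-adjointness of $\phi_\hbar(F_{\qh}) = H_{\hbar,0,F}$.

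For $g>0$, the operator $gH_\hbar^0$ is self-adjoint on $D(H_\hbar^0) = D(\abs{\ph}^2)\cap D(N_\hbar)$, and $\phi_\hbar(F_{\qh})$ is symmetric on this domain by the fiber description. By Kato-Rellich it suffices to prove an infinitesimal relative bound. Integrating fiberwise the estimate \eqref{estimatecreation} gives, for $\Psi\in D(H_\hbar^0)$,
$$\norm{\phi_\hbar(F_{\qh})\Psi}^2 = \int_{\R^d}\norm{\phi_\hbar(F_q)\Psi(q)}^2\,\di q \leq \norm{F}^2_{L^\infty_q L^2_k}\,\norm{\sqrt{N_\hbar+1}\,\Psi}^2.$$
Since $\abs{\ph}^2/2$ and $N_\hbar$ are non-negative, commute and sum to $H_\hbar^0$, the cross term $\langle\abs{\ph}^2/2\,\Psi, N_\hbar\Psi\rangle$ is non-negative and so $\norm{N_\hbar\Psi}\leq\norm{H_\hbar^0\Psi}$. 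Combining this with Cauchy-Schwarz and Young's inequality yields, for every $\eps>0$,
$$\norm{\sqrt{N_\hbar+1}\,\Psi}^2 \leq \norm{\Psi}\norm{H_\hbar^0\Psi}+\norm{\Psi}^2 \leq \tfrac{\eps^2}{2}\norm{H_\hbar^0\Psi}^2 + C_\eps\norm{\Psi}^2,$$
which, after taking square roots and rescaling by $g$, gives the infinitesimal bound with respect to $gH_\hbar^0$ required by Kato-Rellich. This produces self-adjointness of $H_{\hbar,g,F}$ on $D(gH_\hbar^0)=D(H_\hbar^0)$.

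The main obstacle is purely technical and lies in the measurability bookkeeping behind the direct integral, namely that $q\mapsto\phi_\hbar(F_q)$ is a measurable family of self-adjoint operators on $\Gamma_s(\Hi_f)$. This is standard once one chooses a measurable representative of $F$ so that $q\mapsto F_q\in\Hi_f$ is strongly measurable, and then uses that the Weyl operator $W_\hbar(\cdot)$ is strongly continuous. Everything else reduces to the standard Kato-Rellich framework together with the two elementary commuting-operator inequalities $\norm{N_\hbar\Psi}\leq\norm{H_\hbar^0\Psi}$ and \eqref{estimatecreation}.
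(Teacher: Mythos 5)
Your proposal is correct and follows essentially the same route as the paper: for $g>0$ you invoke Kato-Rellich after bounding $\phi_\hbar(F_{\qh})$ via \eqref{estimatecreation} and $N_\hbar \le H_\hbar^0$, which is exactly the paper's argument, and for $g=0$ you unpack the paper's one-line appeal to ``self-adjointness of the field operator'' into the direct-integral decomposition over $q$ that this appeal implicitly rests on. The only substantive difference is that you make the $g=0$ case and the measurability bookkeeping explicit, which the paper leaves to the reader.
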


\begin{proof}
For $g=0$, it is a consequence of the self-adjointness of the field operator. For $g>0$, we will use Rellich-Kato's theorem : for any $\Psi \in D(H_\hbar^0)$,
    \begin{align*}
        \norm{\phi_\hbar(F_{\qh}) \psi}^2 &\leq 2 \norm{F_{\qh}}^2 \norm{\sqrt{N_\hbar + 1} \Psi}^2 \\
        & \leq 2 \norm{F}_2^2  \langle \Psi , (N_\hbar + 1 ) \Psi \rangle^2 \\
        & \leq \eps \norm{H_\hbar^0 \Psi}^2 + C_\eps \norm{\Psi}^2,
    \end{align*}
where we used Cauchy-Schwarz inequality and $ a b \leq \eps a^2 + C_\eps b^2$ to get the last estimate.
\end{proof}

\begin{rem}
    If $F = f^\Lambda$ or $i B^\Lambda$, the r.h.s of the inequality grows like a power of $\Lambda$. This shows that this type of proof fails in the limit $\Lambda \rightarrow \infty$.
\end{rem}

\begin{prop}
    Let $(\rho_\hbar)_\hbar \subset \mathfrak S^1 (\Gamma_s (\Hi))_{+,1}$ be a family of density matrices satisfying \eqref{assumption}. Then, there exists $C>0$ such that 
    \begin{equation} \label{propreg} 
    \forall t \in \R, \forall \hbar \in (0,1), \
    \Tr (\rho_\hbar(t) (\qh^2 + \ph^2 + N_\hbar)) \leq C e^{\abs{t} C}, 
    \end{equation}
\end{prop}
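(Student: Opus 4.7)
The approach is to differentiate the quantity $N(t) := \Tr(\rho_\hbar(t) A)$ with $A := \qh^2 + \ph^2 + N_\hbar$ along the Heisenberg evolution, estimate the resulting commutator in terms of $A + 1$, and conclude with Gronwall's lemma. Formally,
\[
\dot N(t) = \frac{i}{\hbar}\Tr\bigl(\rho_\hbar(t)\,[H_{\hbar,g,F}, A]\bigr),
\]
so the entire argument reduces to an operator-level bound of the form $\pm\frac{i}{\hbar}[H_{\hbar,g,F}, A] \lesssim A + 1$ with constants uniform in $\hbar$.

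The commutator splits as $g[H_\hbar^0, A] + [\phi_\hbar(F_{\qh}), A]$. Using the canonical commutation relations, $[H_\hbar^0, A] = [\abs{\ph}^2/2, \qh^2] = -i\hbar\sum_j(\ph_j\qh_j + \qh_j\ph_j)$, so $\frac{i}{\hbar}g[H_\hbar^0, A] = g\sum_j(\ph_j\qh_j + \qh_j\ph_j)$ is dominated in operator norm by $g(\qh^2+\ph^2) \leq gA$. The interaction commutes with $\qh^2$, and $[\phi_\hbar(F_{\qh}), N_\hbar] = i\hbar\pi_\hbar(F_{\qh})$, which by \eqref{estimatecreation} is controlled in trace by $C\hbar\norm{F}_{L^\infty_q L^2_k}\sqrt{\Tr(\rho_\hbar(t)(N_\hbar+1))}$. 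The decisive piece is $[\phi_\hbar(F_{\qh}), \ph^2]$: using $[\ph_j, F_{\qh}(k)] = -\hbar k_j F_{\qh}(k)$, expansion gives
\[
\frac{1}{\hbar}[\phi_\hbar(F_{\qh}), \ph^2] = 2\sum_j\phi_\hbar(k_j F_{\qh})\ph_j - \hbar\phi_\hbar(\abs{k}^2 F_{\qh}),
\]
where the hypothesis $F \in L^\infty_q L^2(\R^d, (1+\abs{k}^2)dk)$ is precisely what is needed to bound both terms. A Cauchy-Schwarz application under the trace, combined with the operator inequality $\phi_\hbar(\xi)^2 \leq 2\norm{\xi}^2(N_\hbar+1)$ derived from \eqref{estimatecreation}, yields
\[
\abs{\Tr(\rho_\hbar(t)\,\phi_\hbar(k_jF_{\qh})\ph_j)} \leq \sqrt{\Tr(\rho_\hbar(t)\ph_j^2)}\,\sqrt{\Tr(\rho_\hbar(t)\phi_\hbar(k_jF_{\qh})^2)} \lesssim N(t)+1.
\]
Assembling all contributions gives $\abs{\dot N(t)} \leq C(N(t)+1)$, and Gronwall delivers \eqref{propreg}.

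The principal technical difficulty is not algebraic but lies in justifying the differentiation of a trace involving the unbounded operator $A$. I would resolve this by replacing $A$ with the bounded regularization $A_n := A(1+n^{-1}A)^{-1}$, defined by functional calculus, for which the map $t\mapsto\Tr(\rho_\hbar(t)A_n)$ is genuinely differentiable with derivative equal to the trace of a bounded commutator; the estimates above transfer with constants uniform in $n$, and monotone convergence as $n\to\infty$ recovers \eqref{propreg} for $A$ itself. A parallel approximation of $\rho_\hbar$ by its compressions onto spectral subspaces of $H_\hbar^0$ can be invoked to ensure that every intermediate quantity is trace-class throughout the computation.
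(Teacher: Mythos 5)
Your proof is correct and follows essentially the same strategy as the paper: differentiate the expectation of $\qh^2 + \ph^2 + N_\hbar$ along the evolution, reduce the derivative to the commutator $[\phi_\hbar(F_{\qh}), H_\hbar^0]$, estimate it using $F \in L^\infty_q L^2((1+\abs{k}^2)\di k)$ and the bound \eqref{estimatecreation}, then close with Gronwall. The paper diagonalizes $\rho_\hbar$ into pure states and differentiates $\langle\varphi(t), H_\hbar^0\varphi(t)\rangle$ wavefunction by wavefunction, whereas you differentiate the trace directly and regularize the unbounded observable via $A_n = A(1 + n^{-1}A)^{-1}$; both routes are standard and yield the same commutator estimate. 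One point where you are actually more careful than the paper: the paper asserts $[H_\hbar, \qh] = 0$ to conclude $\Tr(\rho_\hbar(t)\qh^2)$ is conserved, but this only holds for the dressing case $g = 0$; for $g > 0$ one has $[H_\hbar^0, \qh_j] = -i\hbar\ph_j \neq 0$. Your treatment via $[\abs{\ph}^2/2, \qh^2]$ being dominated by $\qh^2 + \ph^2$ is the correct fix, and since the commutator is again bounded by $A$ the Gronwall argument closes just as well.
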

\begin{proof}
Since $[H_\hbar, \qh] = 0$, $\Tr(\rho_\hbar (t) \qh^2) = \Tr(\rho_\hbar \qh^2) \leq C$, so we only have to estimate $\Tr(\rho_\hbar(t) H_\hbar^0)$. $\rho_\hbar$ is a density matrix, so we can diagonalize
    $$ \rho_\hbar = \sum_m \lambda_\hbar (m) \pure{\psi_\hbar (m)}, $$
    where the $\psi_\hbar (m)$ are of norm $1$, the $\lambda_\hbar(m)$ are nonnegative and $\sum_{m \in \N} \lambda_\hbar (m) = 1$. 
    Thus,
    $$ \rho_\hbar (t)  = \sum_m \lambda_\hbar (m) \pure{e^{-i t /\hbar H_\hbar} \psi_\hbar (m)}. $$
    Regarding \eqref{assumption}, we obtain that every $\psi_\hbar (m) \in D((H_\hbar^0)^{1/2})$. For all $\varphi \in D((H_\hbar^0)^{1/2})$,
    $\varphi(t) := e^{-i t /\hbar H_{\hbar,g,F}} \varphi \in D((H_\hbar^0)^{1/2})$. For $g>0$, it follows from the fact that $D(H_\hbar^0) = D(H_{\hbar,g,F})$. For $g=0$, it follows from the inclusion $D((H_{\hbar}^0)^{1/2}) \subset D(\phi_\hbar (F_{\qh}))$ that comes from \eqref{estimatecreation}. Furthermore, if $\norm{\varphi}=1$, then $\norm{\varphi(t)}=1$.
    
    Hence
    $ t \mapsto \langle \varphi(t), H_\hbar^0 \varphi(t) \rangle$ is differentiable and we have :
    \begin{align*}
        \frac{\di}{\di t} \langle \varphi (t), H_\hbar^0 \varphi (t) \rangle &= \frac{i}{\hbar} \langle \varphi (t), [H_{\hbar,g,F}, H_\hbar^0] \varphi (t) \rangle \\ 
        &= \frac{i}{\hbar} \langle \varphi (t), [H_{\hbar,g,F} -g  H_\hbar^0, H_\hbar^0] \varphi (t) \rangle.
    \end{align*}
    We now have to estimate the commutator $[H_{\hbar,g,F} - g  H_\hbar^0, H_\hbar^0] = [ \phi(F_{\qh} )  , \ph^2/2 + N_\hbar]$. We have 
      $$  [ \phi_\hbar(F_{\qh} )  , \ph^2/2 + N_\hbar] = \frac{\hbar}{i} \phi(i F_{\qh} ) + \frac{\hbar}{2} ( \phi(k . \ph F_{\qh}) + \phi( F_{\qh} k . \ph  ) ). $$
    Using the estimate \eqref{estimatecreation} and Cauchy-Schwarz inequality we get :
    \begin{align*}
    \abs{\frac{\di}{\di t} \langle \varphi (t), H_\hbar^0 \varphi (t)\rangle} &\lesssim \norm{F_{\qh}} \norm{\sqrt{N+1}\varphi(t)} + 
    \norm{k F_{\qh}} \norm{\sqrt{N+1}\varphi(t)} \norm{\ph \varphi(t)}\\
    &\lesssim (\norm{F}_2 + \norm{k F}_2)\norm{(\ph^2+N+1)^{1/2} \varphi(t)}^2 \\ & \lesssim \langle \varphi (t), (H_\hbar^0 +1 )\varphi (t)\rangle.
    \end{align*}
    By Gronwall lemma, for all $t \in \R$,
    $$\langle \varphi (t), H_\hbar^0 \varphi (t)\rangle \leq \langle \varphi (t), (H_\hbar^0+1) \varphi (t)\rangle \leq e^{\abs{t} C_\Lambda} \langle \varphi, (H_\hbar^0+1) \varphi \rangle.$$
    Taking $\varphi = \psi_\hbar (m)$, we obtain the wanted bound.

\end{proof}
\subsection{Duhamel formulas and commutator expansions}

\begin{prop}
Let $(\rho_\hbar)_\hbar \subset \mathfrak S^1 (\Gamma_s (\Hi))_{+,1}$ be a family of density matrices satisfying \eqref{assumption}. Then for any $t_0,t \in \R$, $\xi \in \Hi$, $\hbar \in (0,1)$, the following holds :
\begin{equation} \label{duhamelquantique}
\Tr ( W_\hbar (\xi)\tilde{\rho}_\hbar (t)) = \Tr ( W_\hbar (\xi)\tilde{\rho}_\hbar (t_0)) - \frac{i}{\hbar} \int_0^t \Tr \left( [W_\hbar (\xi), H_I(s)] \tilde{\rho}_\hbar (s) \right) \di s, 
\end{equation}
with
$$ 
H_I(s) = e^{\frac{i s g}{\hbar} H_\hbar^0} (H_{\hbar,g,F} - H_\hbar^0) e^{-\frac{i s g}{\hbar} H_\hbar^0} = e^{\frac{i s g}{\hbar} H_\hbar^0} \phi_\hbar(F_{\qh}) e^{-\frac{i s g}{\hbar} H_\hbar^0}. $$

\end{prop}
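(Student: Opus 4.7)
The plan is to derive this identity by differentiating both sides in $t$ via a product-rule computation. Setting $V(t) := U_{\hbar,g,0}(t)\, U_{\hbar,g,F}(t)^*$ so that $\tilde{\rho}_\hbar(t) = V(t)\rho_\hbar V(t)^*$, cyclicity of the trace gives $\Tr(W_\hbar(\xi)\tilde{\rho}_\hbar(t)) = \Tr(V(t)^* W_\hbar(\xi) V(t)\,\rho_\hbar)$. The strategy is first to establish a strong differential identity for $V(t)\psi$ on a dense subspace, integrate and take the trace, and finally extend the formula from $D(H_\hbar^0)$ to the form domain where the spectral decomposition of $\rho_\hbar$ lives.

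For $\psi \in D(H_\hbar^0) = D(H_{\hbar,g,F})$ --- the latter equality following from Rellich-Kato when $g>0$; when $g = 0$ the propagator $V(t)$ is itself the one-parameter unitary group $e^{it/\hbar\,\phi_\hbar(F_{\qh})}$ and Stone's theorem applies directly on $D(\phi_\hbar(F_{\qh})) \supset D((N_\hbar+1)^{1/2})$ --- the product rule yields the strong identity
$$\tfrac{d}{dt}V(t)\psi = \tfrac{i}{\hbar}\, U_{\hbar,g,0}(t)\,\phi_\hbar(F_{\qh})\, U_{\hbar,g,0}(t)^*\,V(t)\psi,$$
i.e. the generator of $V$ is the interaction Hamiltonian conjugated by the free flow, which matches $H_I(t)$ up to the paper's sign convention. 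A symmetric identity holds for $V(t)^*$, and combining them via the product rule gives on $D(H_\hbar^0)$ a strong identity whose integrated form reads
$$V(t)^* W_\hbar(\xi) V(t)\psi - V(t_0)^* W_\hbar(\xi) V(t_0)\psi = -\tfrac{i}{\hbar}\int_{t_0}^t V(s)^*[W_\hbar(\xi), H_I(s)] V(s) \psi\, ds.$$
The commutator inside the integral is handled via the Weyl commutation relations \eqref{WCR}, which give $[W_\hbar(\eta_1), \phi_\hbar(\eta_2)] = -\hbar\,\mathrm{Im}\langle \eta_1,\eta_2\rangle_2\, W_\hbar(\eta_1)$ for a fixed $\eta_2$; applying this pointwise in $q$ to $H_I(s)$ (whose form factor $F_{\qh}$ depends on $\qh$, which is not commuted by $U_{\hbar,g,0}(s)$) produces a commutator that is $\hbar$ times an $H_\hbar^0$-bounded operator, precisely as anticipated in Step 2 of the strategy outline, with the leading term a bounded multiplication in $q$ and the subleading one $\sqrt{N_\hbar+1}$-bounded via the estimate \eqref{estimatecreation}.

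The main obstacle is the extension of the identity from $\psi \in D(H_\hbar^0)$ to the spectral decomposition $\rho_\hbar = \sum_m \lambda_\hbar(m)\pure{\psi_\hbar(m)}$, since the assumption $\Tr(\rho_\hbar(N_\hbar+\qh^2+\ph^2))<\infty$ only guarantees $\psi_\hbar(m) \in D((H_\hbar^0)^{1/2})$, not $D(H_\hbar^0)$. I would resolve this by approximating each $\psi_\hbar(m)$ by spectral cutoffs $\Id_{[0,n]}(H_\hbar^0)\psi_\hbar(m) \in D(H_\hbar^0)$, for which the integrated identity applies, and passing to the limit in the $\Hi$-norm via the operator bounds above. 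Summing over $m$ and swapping the finite-interval integral with the trace by Fubini --- legitimate because $\sum_m \lambda_\hbar(m) = 1$ and the propagation of regularity estimate \eqref{propreg} dominates the expectation of the $H_\hbar^0$-bounded integrand uniformly on $[t_0, t]$ --- then produces the stated Duhamel formula.
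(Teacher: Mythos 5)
Your proposal takes essentially the same route as the paper: differentiate $t\mapsto\Tr(W_\hbar(\xi)\tilde\rho_\hbar(t))$ and justify the step via the propagation-of-regularity bound \eqref{propreg} together with the relative bounds on the interaction. The paper's own proof is a terse sketch that asserts the differentiability and the form of the derivative; your version fills in the missing steps correctly --- the factorization through $V(t)=U_{\hbar,g,0}(t)U_{\hbar,g,F}(t)^*$, the strong derivative on $D(H_\hbar^0)$, the extension to $D((H_\hbar^0)^{1/2})$ via spectral cutoffs, and the Fubini step --- so the argument is sound.

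One loose end should be resolved rather than waved off. Differentiating $V(t)$ yields the generator $e^{-isgH_\hbar^0/\hbar}\phi_\hbar(F_{\qh})\,e^{isgH_\hbar^0/\hbar}$, which equals $H_I(-s)$ in the paper's notation, not $H_I(s)$; you call this ``the paper's sign convention'', but an argument sign in $H_I$ is not a global $\pm$ and changes the integrand of the Duhamel identity. This must be reconciled explicitly with \eqref{duhamelquantique} and with the subsequent commutator expansion \eqref{M(s,xi)}, which carries $+sg$ throughout, either by adjusting the sign in the definition of $\tilde\rho_\hbar$ or by flipping the exponents in $H_I$, before the chain of substitutions can be claimed to land on exactly the stated formula. (A minor further imprecision: you describe the leading commutator term as ``a bounded multiplication in $q$'', but $M(s,\xi)$ in \eqref{M(s,xi)} contains a genuine field operator $\phi(\cdot)$, which is only $\sqrt{N_\hbar+1}$-bounded; both summands of $M$ are of the same order in $\hbar$.)
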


\begin{proof}
Since $[g H_\hbar^0, \qh^2 + \ph^2 + N_\hbar ] = 0$,  $e^{-\frac{i t}{\hbar} H_\hbar^0} (\qh^2 + \ph^2 + N_\hbar) e^{\frac{i t}{\hbar} H_\hbar^0} = \qh^2 + \ph^2 + N_\hbar$, hence 
$$\forall t \in \R, \forall \hbar \in (0,1), \Tr(\tilde{\rho}_\hbar (t) (\qh^2 + \ph^2 + N_\hbar)) = \Tr(\rho_\hbar (t) (\qh^2 + \ph^2 + N_\hbar)) \leq C e^{\abs{t} C}.$$
Adding the fact that $(H_\hbar^0+1)^{-1/2} W_\hbar (\xi) (H_\hbar^0+1)^{-1/2} $ is a bounded operator ensures that the map $t \longmapsto \Tr (\tilde{\rho}(t) W_\hbar (\xi))$ is differentiable, and the derivative has the wanted form.
\end{proof}

\begin{lem} For any $s \in \R$, we have on $Q(H_\hbar^0)$,
$$ H_I(s)  = \phi_\hbar(e^{i s g} F_{\qh + s g \ph}).$$
\end{lem}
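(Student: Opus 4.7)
The plan is to use the decomposition $H_\hbar^0 = \frac{\abs{\ph}^2}{2} + N_\hbar$: since the two summands act on separate tensor factors of $L^2(\R^d) \otimes \Gamma_s(\Hi_f)$, they commute, hence
$$
e^{\frac{isg}{\hbar}H_\hbar^0} = e^{\frac{isg}{2\hbar}\abs{\ph}^2}\, e^{\frac{isg}{\hbar}N_\hbar},
$$
and I can handle the two conjugations independently.

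For the number-operator conjugation, the CCR $[N_\hbar, a_\hbar^*(\xi)] = \hbar\, a_\hbar^*(\xi)$ and $[N_\hbar, a_\hbar(\xi)] = -\hbar\, a_\hbar(\xi)$ give, by a Stone-type differentiation, $e^{itN_\hbar/\hbar}\, a_\hbar^{*}(\xi)\, e^{-itN_\hbar/\hbar} = e^{it} a_\hbar^{*}(\xi)$ and its counterpart $e^{-it} a_\hbar(\xi)$ for $a_\hbar$. Using linearity of $a_\hbar^*$ and antilinearity of $a_\hbar$ in $\xi$, these identities combine into
$$
e^{\frac{isg}{\hbar}N_\hbar}\, \phi_\hbar(F_{\qh})\, e^{-\frac{isg}{\hbar}N_\hbar} = \phi_\hbar(e^{isg}\, F_{\qh}).
$$

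For the particle-side conjugation, I would use the standard Heisenberg evolution of position under the free dispersion: from $[\abs{\ph}^2/2,\qh_j] = -i\hbar\, \ph_j$ one obtains $e^{\frac{isg}{2\hbar}\abs{\ph}^2}\, \qh\, e^{-\frac{isg}{2\hbar}\abs{\ph}^2} = \qh + sg\,\ph$. The components of $\qh + sg\,\ph$ mutually commute (a direct CCR computation), so joint functional calculus unambiguously defines $e^{-ik\cdot(\qh + sg\,\ph)}$; conjugating the explicit expression $F_{\qh}(k) = F(k)\, e^{-ik\cdot\qh}$ then yields $F(k)\, e^{-ik\cdot(\qh + sg\,\ph)} = F_{\qh + sg\,\ph}(k)$, whence
$$
e^{\frac{isg}{2\hbar}\abs{\ph}^2}\, \phi_\hbar(F_{\qh})\, e^{-\frac{isg}{2\hbar}\abs{\ph}^2} = \phi_\hbar(F_{\qh + sg\,\ph}).
$$
Composing the two steps produces the identity.

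The main technical obstacle is to justify all these manipulations on $Q(H_\hbar^0)$, rather than on a nice dense subset. My strategy is to verify the identity first on a convenient dense core of $Q(H_\hbar^0)$ (for instance, smooth compactly supported functions of $q$ tensored with the finite-particle subspace), where every differentiation and every functional calculus is transparent, and then to extend by density. This extension is made possible by \eqref{estimatecreation} together with the observation that only $\abs{F(k)}$ enters the relevant norm bound, the phase factor $e^{-ik\cdot(\qh + sg\,\ph)}$ being unitary; hence $\phi_\hbar(e^{isg}\, F_{\qh + sg\,\ph})$ remains $H_\hbar^0$-form bounded uniformly in $s$, matching the form-boundedness of $H_I(s)$ established by conjugation.
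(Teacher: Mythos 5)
Your argument is correct and is essentially the paper's proof: both factor $e^{\frac{isg}{\hbar}H_\hbar^0}$ into the commuting free-particle and number-operator parts, use the Heisenberg-picture identity $e^{\frac{isg}{2\hbar}|\ph|^2}\qh\,e^{-\frac{isg}{2\hbar}|\ph|^2}=\qh+sg\ph$ to shift the argument of the form factor, and use the rotation $e^{\frac{isg}{\hbar}N_\hbar}\phi_\hbar(\alpha)e^{-\frac{isg}{\hbar}N_\hbar}=\phi_\hbar(e^{isg}\alpha)$. You are merely more explicit about the CCR differentiations and the density/extension argument, which the paper leaves implicit as a "straightforward computation."
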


\begin{proof}
    Noticing that $e^{\frac{i s g}{\hbar} H_\hbar^0}= e^{\frac{i s g}{\hbar} \frac{\ph^2}{2}} e^{\frac{i s g}{\hbar} N_\hbar} = e^{\frac{i s g}{\hbar} N_\hbar}  e^{\frac{i s g}{\hbar} \frac{\ph^2}{2}} $, a straigthforward computation shows that for all $\alpha \in \Hi_f$,
    \begin{align*}
    e^{\frac{i s g}{\hbar} \frac{\ph^2}{2}} \qh e^{-\frac{i s g}{\hbar} \frac{\ph^2}{2}} &= \qh + s g \ph, \\
    e^{\frac{i s g}{\hbar} N_\hbar} \phi_\hbar(\alpha) e^{-\frac{i s g}{\hbar} N_\hbar} &= \phi_\hbar(e^{i s g} \alpha).
    \end{align*}
    Hence,
    $$
    H_I(s) = e^{\frac{i s g}{\hbar} N_\hbar} \phi_\hbar(F_{\qh + s g \ph}) e^{-\frac{i s g}{\hbar} N_\hbar} = \phi_\hbar (e^{i s g} F_{\qh + s g \ph}).
    $$
\end{proof}

\begin{lem}
    For any $s \in \R$, we have on $Q(H_\hbar^0)$,
    $$
     W (\xi) H_I(s) W(\xi)^* - H_I(s) = \phi_\hbar (e^{i s g} (F_{\qh +s  g\ph -\hbar(q + s g p)} - F_{\qh + s g \ph})) - \hbar \mathrm{Im} \langle \alpha, e^{i s g} F_{\qh +s g \ph -\hbar(q + s g p )} \rangle.
    $$
    \end{lem}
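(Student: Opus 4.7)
The plan is to combine the explicit formula for $H_I(s)$ from the preceding lemma with the standard conjugation identities for Weyl operators acting on field operators and on canonical variables. Specifically, I start from
\[
H_I(s) = \phi_\hbar\bigl(e^{i s g} F_{\qh + s g \ph}\bigr),
\]
and then compute $W_\hbar(\xi)\,H_I(s)\,W_\hbar(\xi)^*$ by applying successively the particle and field parts of the Weyl operator $W_\hbar(\xi) = T_\hbar(q,p)\otimes W_\hbar(\alpha)$ for $\xi = (q,p,\alpha)$. Since these two factors act on different tensor components, they commute and can be applied in either order.

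First I would handle the particle conjugation. Using the canonical commutation relations (or, equivalently, the Baker--Campbell--Hausdorff formula, which closes since the commutators with the generator $i(p.\qh - q.\ph)$ are scalars), one obtains
\[
T_\hbar(q,p)\,\qh\,T_\hbar(q,p)^* = \qh - \hbar q, \qquad T_\hbar(q,p)\,\ph\,T_\hbar(q,p)^* = \ph - \hbar p,
\]
so that by functional calculus $T_\hbar(q,p) F_{\qh + s g \ph} T_\hbar(q,p)^* = F_{\qh + s g \ph - \hbar(q + s g p)}$. Applying this inside $\phi_\hbar$ — which acts linearly in its argument, with the particle operators entering only through the $q$-dependence of the form factor — yields
\[
T_\hbar(q,p)\,H_I(s)\,T_\hbar(q,p)^* = \phi_\hbar\bigl(e^{i s g} F_{\qh + s g \ph - \hbar(q+s g p)}\bigr).
\]

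Next I would apply the field Weyl operator. A direct computation using $[\phi_\hbar(\alpha), \phi_\hbar(\beta)] = i\hbar\,\mathrm{Im}\langle\alpha,\beta\rangle$ (which follows immediately from the CCR of the $a_\hbar^{\#}$) and BCH gives the standard identity
\[
W_\hbar(\alpha)\,\phi_\hbar(\beta)\,W_\hbar(\alpha)^* = \phi_\hbar(\beta) - \hbar\,\mathrm{Im}\langle\alpha,\beta\rangle.
\]
Here $\beta$ is the operator-valued argument $e^{i s g}F_{\qh + s g \ph - \hbar(q + s g p)}$; the identity still holds because the particle operators commute with everything on the field side, so the scalar $\mathrm{Im}\langle\alpha,\beta\rangle$ becomes a particle operator sitting outside the field commutator. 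Subtracting $H_I(s)$ then produces exactly the claimed expression.

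The only delicate point is justifying that all these manipulations are meaningful on the form domain $Q(H_\hbar^0)$: one must check that $H_I(s)$ and its conjugate by $W_\hbar(\xi)$ are $(H_\hbar^0+1)^{1/2}$-bounded operators, so that matrix elements between vectors in $Q(H_\hbar^0)$ coincide for both sides. This follows from the estimate \eqref{estimatecreation} together with the fact that $W_\hbar(\xi)$ preserves $Q(H_\hbar^0)$ (since $N_\hbar$, $\qh^2$, $\ph^2$ transform into themselves plus lower-order terms under conjugation by Weyl operators). This is the main technical hurdle, but it is handled identically to the argument already invoked for $H_I(s) = \phi_\hbar(e^{isg}F_{\qh+sg\ph})$ in the previous lemma; with it in place, the algebraic identity derived above is the content of the statement.
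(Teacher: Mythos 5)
Your proof is correct and takes essentially the same route as the paper: the paper's proof simply lists the conjugation identities for $a_\hbar^{\#}$ and for $\qh_j, \ph_j$ under the Weyl operators, from which your formula $W_\hbar(\alpha)\phi_\hbar(\beta)W_\hbar(\alpha)^* = \phi_\hbar(\beta) - \hbar\,\mathrm{Im}\langle\alpha,\beta\rangle$ follows immediately, and the rest is the same factorization through the particle and field parts of $W_\hbar(\xi)$.
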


\begin{proof}
    It is a straightforward consequence of the following formulas :
    \begin{align*}
        W_\hbar(\alpha) a_\hbar (\beta) W_\hbar (\alpha)^* &= a_\hbar (\beta) - i \frac{\hbar}{\sqrt{2}}  \langle \beta , \alpha \rangle_2, \\
        W_\hbar(\alpha) a_\hbar^* (\beta) W_\hbar (\alpha)^* &= a_\hbar^* (\beta) + i \frac{\hbar}{\sqrt{2}}  \langle \alpha , \beta \rangle_2,  \\
        T_\hbar (z) \ph_j T_\hbar (z)^* &= \ph_j - \hbar p_j,  \\
        T_\hbar (z) \qh_j T_\hbar (z)^* &= \qh_j - \hbar q_j.
    \end{align*}
\end{proof}

\begin{cor} For any $s \in \R$, we have on $Q(H_\hbar^0)$,\begin{equation}
 \frac{1}{\hbar} [W(\xi), H_I (s)] = (M(s,\xi) + \hbar  R(s,\xi,\hbar)) W(\xi), 
    \end{equation}
    with \begin{equation} \label{M(s,xi)}
    \begin{cases}
         M(s,\xi) &= \phi(e^{i s g } F_{\qh + s  g \ph} i k .(q+s g p)) - \mathrm{Im} \langle \alpha, e^{i s g} F_{\qh + s  g \ph} \rangle, \\
        R(s,\xi,\hbar) &= \phi \left( e^{i s g} F_{\qh + s g \ph} \frac{e^{i \hbar k .(q+s g p) } - 1 - i \hbar k .(q+s g p)}{\hbar^2} \right) - \mathrm{Im} \langle \alpha, e^{i s g } F_{\qh + s g \ph} \frac{e^{i \hbar k .(q+s g p)}-1}{\hbar} \rangle.
    \end{cases}
    \end{equation}
    For these quantities, we have the same estimate, uniform with respect to $\hbar$ :
    \begin{equation}
            \norm{(H_\hbar^0+1)^{-1/2} A (H_\hbar^0+1)^{-1/2}}_{\mathscr B(\Gamma_s (\Hi))} \lesssim_g (1+\abs{s})^2(1+\norm{\xi})^2,
    \end{equation}
    where $A \in \{ M (s, \xi) ,  R (s, \xi , \hbar) \}$.
\end{cor}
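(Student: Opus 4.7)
The strategy is to combine the previous lemma with a Taylor expansion of a scalar phase, and then to verify the estimates using the standard creation/annihilation bounds.

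For the expansion itself, I would start from the identity $[W_\hbar(\xi), H_I(s)] = (W_\hbar(\xi) H_I(s) W_\hbar(\xi)^* - H_I(s)) W_\hbar(\xi)$, so that the previous lemma handles the left factor. Because the form factor has the separable structure $F_q(k) = e^{-ikq} F(k)$, conjugation by $T_\hbar(q,p)$ yields the purely scalar identity
$$ F_{\qh+sg\ph - \hbar(q+sgp)}(k) = e^{i\hbar k\cdot(q+sgp)} F_{\qh+sg\ph}(k). $$
Substituting this into the previous lemma and Taylor-expanding the scalar phase to second order in the field-operator term and to first order in the scalar term, then using the $\R$-linearity of $\phi_\hbar$ to pull the real scalar $\hbar$ out, yields
$$ W_\hbar(\xi) H_I(s) W_\hbar(\xi)^* - H_I(s) = \hbar M(s,\xi) + \hbar^2 R(s,\xi,\hbar) $$
with exactly the formulas \eqref{M(s,xi)}. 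Dividing by $\hbar$ gives the claimed commutator expansion.

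For the operator bounds, my main tool is the standard inequality $\norm{a_\hbar^\#(\eta)\Psi} \leq \norm{\eta}_{\Hi_f} \norm{\sqrt{N_\hbar+1}\Psi}$, combined with the observation that $e^{-ik(\qh+sg\ph)}$ is unitary (BCH factorizes it as a product of unitaries times a scalar phase, since $[\qh,\ph]=i\hbar$ is central), so that $\norm{F_{\qh+sg\ph}(k)}_{\mathscr B} = |F(k)|$ pointwise in $k$. For $M(s,\xi)$, the multiplier $ik\cdot(q+sgp)$ contributes a factor bounded by $|k|(1+|s|)(1+\norm{\xi})$, controllable through $\norm{kF}_{L^2}$, while the scalar $\mathrm{Im}$-term is bounded by Cauchy--Schwarz as an $L^\infty$ function of $\qh+sg\ph$ via functional calculus. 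For $R(s,\xi,\hbar)$, the Taylor remainders satisfy $|e^{iy}-1-iy| \leq y^2/2$ and $|e^{iy}-1| \leq |y|$, so that the multipliers are dominated \emph{uniformly in} $\hbar \in (0,1)$ by $\tfrac{1}{2}|k\cdot(q+sgp)|^2$ and $|k\cdot(q+sgp)|$ respectively, producing the extra polynomial factor $(1+|s|)^2(1+\norm{\xi})^2$. Combining these pieces gives $\norm{A\Psi} \lesssim_{F,g} (1+|s|)^2(1+\norm{\xi})^2 \norm{(H_\hbar^0+1)^{1/2}\Psi}$ for $A \in \{M(s,\xi), R(s,\xi,\hbar)\}$, and a standard duality argument then delivers the claimed form-bound on $(H_\hbar^0+1)^{-1/2} A (H_\hbar^0+1)^{-1/2}$.

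In my view the most delicate points are, first, keeping track of the fact that $\phi_\hbar$ is only $\R$-linear (so that the real scalar $\hbar$ can be pulled out of $\phi_\hbar(\hbar \cdot \ldots)$, while the complex factor $e^{isg}$ cannot) and, second, verifying that the Taylor remainder in $R$ admits a bound uniform in $\hbar \in (0,1)$. This last point relies on the elementary observation that $(e^{iy}-1-iy)/y^2$ is a bounded function of $y \in \R$, so that the multiplier $\frac{e^{i\hbar k\cdot(q+sgp)} - 1 - i\hbar k\cdot(q+sgp)}{\hbar^2}$ is indeed dominated by a constant times $|k\cdot(q+sgp)|^2$ independently of $\hbar$.
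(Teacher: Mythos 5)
Your proposal is correct and takes essentially the same route as the paper: the paper also obtains the split $\hbar M + \hbar^2 R$ by Taylor-expanding the scalar phase $e^{i\hbar k\cdot(q+sgp)}$ arising from the separable structure $F_x(k)=e^{-ikx}F(k)$, and establishes the operator bounds from the pointwise estimates $|e^{iy}-1|\le|y|$, $|e^{iy}-1-iy|\le y^2/2$ together with the standard $\phi_\hbar$-bound and $|q+sgp|\lesssim(1+|s|)(1+\norm{\xi})$. You merely supply the (correct) details — the $\R$-linearity of $\phi_\hbar$ allowing the real scalar $\hbar$ to be pulled out, and the unitarity of $e^{-ik\cdot(\qh+sg\ph)}$ reducing the operator-valued form factor to $\norm{F}_{L^2}$ — which the paper dismisses as ``immediate''.
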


\begin{proof}
    The formulas \eqref{M(s,xi)} are immediate. For the estimates, it is a consequence of 
    $$
    \norm{(H_\hbar^0+1)^{-1/2} \phi_\hbar(\alpha) (H_\hbar^0+1)^{-1/2}}_{\mathscr B(\Gamma_s (\Hi))} \lesssim \norm{\alpha}_{L^2(\R^d)},
    $$
    for $\alpha \in \Hi_f$, as well as the bounds
    $
    \abs{\frac{e^{i \hbar x}-1}{\hbar}} \leq \abs{x}, \abs{ \frac{e^{i \hbar x}-1- i \hbar x}{\hbar^2}} \leq \frac{x^2}{2},
    $ holding for every $x \in \R$. We conclude by simply bounding 
    $$
    \abs{q + s p} \leq (1+\abs{s})(\abs{p}+\abs{q}) \lesssim (1+\abs{s})(1 + \norm{\xi}),
    $$
    and the scalar product by Cauchy-Schwarz inequality.
\end{proof}

\subsection{Derivation of the Liouville equation}

\begin{prop} \label{wignerexistence}
    Let $(\rho_\hbar)_\hbar \subset \mathfrak S^1 (\Gamma_s (\Hi))_{+,1}$ satisfying \eqref{assumption}. For any $\hbar_n \rightarrow 0$, there exists an extraction $(\hbar_{n_k})_{k \in \N} \rightarrow 0$ and a family $(\tilde{\mu}_t)_{t \in \R} \in \mathscr P(\Hi)^{\R}$ such that for every $t \in \R$,
    \begin{equation}
        \mathscr{M} (\tilde{\rho}_{\hbar_{n_k}} (t) , k \in \N) = \{ \tilde{\mu}_{t} \}.
    \end{equation}
    Furthermore, for all compact interval $J$,
    $$
    \underset{t \in J}{\sup} \int_{\Hi} \norm{\xi}^2 \di \tilde{\mu}_t (\xi) < \infty.
    $$
\end{prop}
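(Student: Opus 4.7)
The plan is to combine a diagonal extraction over a countable dense set of times with a uniform-in-$\hbar$ equicontinuity estimate for the characteristic functionals, obtained from the Duhamel formula \eqref{duhamelquantique} and the commutator expansion \eqref{M(s,xi)}. Since $[H_\hbar^0,\qh^2+\ph^2+N_\hbar]=0$, the propagation estimate \eqref{propreg} passes verbatim to the interaction picture state:
$$
\Tr\bigl(\tilde\rho_\hbar(t)(\qh^2+\ph^2+N_\hbar)\bigr)=\Tr\bigl(\rho_\hbar(t)(\qh^2+\ph^2+N_\hbar)\bigr)\leq C e^{|t|C},
$$
for every $t\in\R$ and $\hbar\in(0,1)$. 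Hence for each fixed $t$, Proposition \ref{wignerzied} applied to the family $(\tilde\rho_\hbar(t))_\hbar$ ensures that $\mathscr{M}(\tilde\rho_\hbar(t),\hbar\in(0,1))$ is non-empty and that each of its elements has a second moment bounded by $C e^{|t|C}$ via \eqref{uniformmeasurebound}.

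Next I would prove that for every $\xi\in\Hi$, the family of maps $t\mapsto\Tr(W_\hbar(\xi)\tilde\rho_\hbar(t))$ is equicontinuous on compacts of $\R$, uniformly in $\hbar\in(0,1)$. Inserting the commutator expansion into \eqref{duhamelquantique} gives
$$
\Tr(W_\hbar(\xi)\tilde\rho_\hbar(t))-\Tr(W_\hbar(\xi)\tilde\rho_\hbar(t_0))=-i\int_{t_0}^{t}\Tr\bigl((M(s,\xi)+\hbar R(s,\xi,\hbar))W_\hbar(\xi)\tilde\rho_\hbar(s)\bigr)\,\di s,
$$
and the $H_\hbar^0$-relative bound on $M+\hbar R$, combined with the stability of $H_\hbar^0+1$ under conjugation by $W_\hbar(\xi)$ (which costs at most a factor $(1+\norm{\xi})^2$ uniformly in $\hbar\in(0,1)$) and the propagated energy bound above, controls the integrand by $C_g (1+\norm{\xi})^4 (1+|s|)^2 e^{|s|C}$, yielding the required equicontinuity.

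With these two ingredients in hand I would perform a Cantor diagonal extraction. Let $D=\mathbb{Q}\subset\R$. For each $t\in D$, Proposition \ref{wignerzied} provides a subsequence of $(\hbar_n)$ along which the Wigner measure of $\tilde\rho_\hbar(t)$ is well-defined; iterating and diagonalizing produces a single $\hbar_{n_k}\to 0$ such that $\mathscr{M}(\tilde\rho_{\hbar_{n_k}}(t),k\in\N)=\{\tilde\mu_t\}$ for every $t\in D$. For $t\notin D$, pick $(t_m)\subset D$ with $t_m\to t$; the equicontinuity forces $\hat{\tilde\mu}_{t_m}(\xi)$ to be Cauchy in $\C$ at each $\xi$, and its pointwise limit $\hat{\tilde\mu}_t(\xi)$ is of positive type, equals $1$ at the origin, and is continuous at $\xi=0$ (this last point uses the uniform bound on $\Tr(\tilde\rho_\hbar(t)(N_\hbar+\qh^2+\ph^2))$ together with the standard estimate $\norm{(W_\hbar(\eta)-\Id)\Psi}\lesssim \norm{\eta}\norm{(N_\hbar+1)^{1/2}\Psi}$ and its particle analogue). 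Bochner's theorem then yields $\tilde\mu_t\in\mathscr{P}(\Hi)$ with this Fourier transform, and a three-$\eps$ argument combining both equicontinuities shows $\Tr(W_{\hbar_{n_k}}(\xi)\tilde\rho_{\hbar_{n_k}}(t))\to\hat{\tilde\mu}_t(\xi)$ for every $\xi$, whence $\mathscr{M}(\tilde\rho_{\hbar_{n_k}}(t),k\in\N)=\{\tilde\mu_t\}$. The uniform second-moment bound on compact intervals follows from \eqref{uniformmeasurebound} applied to $\tilde\mu_t$ and the exponential bound above.

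The main obstacle is the equicontinuity estimate in $t$: controlling the Duhamel integrand uniformly in $\hbar$ rests crucially on the $H_\hbar^0$-relative bounds on $M$ and $\hbar R$ supplied by the commutator expansion, together with the propagated energy estimate for $\tilde\rho_\hbar(s)$. Once that step is secured, the Cantor diagonal extraction and the extension-by-density argument on $\R\setminus D$ are standard.
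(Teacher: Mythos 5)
Your proposal follows the same general route as the paper: diagonal extraction over a countable dense set of times, uniform-in-$\hbar$ equicontinuity of the characteristic functional $G(t,\xi,\hbar)=\Tr(W_\hbar(\xi)\tilde\rho_\hbar(t))$ obtained from the Duhamel formula and the $H^0_\hbar$-bounds on $M(s,\xi)$ and $\hbar R(s,\xi,\hbar)$, and extension from the dense set by a three-$\eps$ argument. The propagation of the moment estimate through the interaction picture, the use of Proposition \ref{wignerzied} on the dense set, and the final identification of the limit are all essentially as in the paper.

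There is however a gap at the step where, for $t\notin D$, you produce the measure $\tilde\mu_t$. You claim that the pointwise Cauchy limit $\hat{\tilde\mu}_t(\xi)$ is of positive type, normalized, and norm-continuous at the origin, and that Bochner's theorem then yields $\tilde\mu_t\in\mathscr P(\Hi)$. On an infinite-dimensional Hilbert space this inference is false as stated: a positive-definite function that is merely norm-continuous at $0$ is the Fourier transform of a cylindrical (finitely additive) promeasure, not in general of a countably additive Radon probability measure; one needs continuity in a nuclear (Sazonov) topology or, equivalently, a compactness/tightness input. The paper supplies exactly this missing ingredient: the uniform second moment bound \eqref{tightestimate} makes the family $\{\tilde\mu_{t_m}\}_m$ tight, so by Prokhorov's theorem it admits a weak narrow subsequential limit $\nu_t\in\mathscr P(\Hi)$, whose Fourier transform then coincides with $G(t,\cdot)$, and one sets $\tilde\mu_t:=\nu_t$. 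An equivalent fix in your language is to apply Proposition \ref{wignerzied} directly to the family $(\tilde\rho_{\hbar_{n_k}}(t))_k$ at the fixed irrational $t$ (the propagated moment bound makes it admissible) to get some Wigner measure along a further subsequence, and then use the equicontinuity in $t$ to show the characteristic functional converges along the \emph{whole} subsequence $(\hbar_{n_k})_k$, hence $\mathscr M(\tilde\rho_{\hbar_{n_k}}(t),k\in\N)$ is that singleton. Either way, the moment bound must be used to produce a genuine probability measure, not just to show continuity at $\xi=0$.
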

Let $(t_j)_{j \in \N}$ be a dense countable family in $\R$. The propagation of regularity \eqref{propreg} ensures the existence for every $j \in \N$, of an extraction $\varphi_j : \N \rightarrow \N$ and Wigner measures $\tilde{\mu}_{t_j}\in \mathscr P(\Hi)$ such that $$\mathscr{M} (\tilde{\rho}_{\hbar_{\varphi_j(k)}} (t_j) , k \in \N) = \{ \tilde{\mu}_{t_j} \}.$$
By a diagonal argument, the extraction defined as $ \varphi : \N \rightarrow \N, k \mapsto \varphi_0 \circ \varphi_1 \circ ... \circ \varphi_k (k)$ yields a common subsequence $(\hbar_{\varphi(k)})_{k \in \N} \rightarrow 0$ along which for every $j \in \N$,
$$
\mathscr{M} (\tilde{\rho}_{\hbar_{\varphi(k)}} (t_j) , k \in \N) = \{ \tilde{\mu}_{t_j} \}.
$$
As detailed in \cite[Thm 6.2]{ammari2008ahp}, for any $j$ such that $t_j$ belongs to a compact interval $J$, there exists $C>0$ such that
\begin{equation}
\label{tightestimate}
\forall j \in \N, \int_{\Hi} \norm{\xi}^2 \di \tilde{\mu}_{t_j} (\xi) \leq C. 
\end{equation}
We define the characteristic functional of the family $\{ \tilde{\mu}_t \}_{t \in \R}$ at $\xi \in \Hi$ by
$$
G(t,\xi,\hbar) := \Tr ( W_\hbar (\xi)\tilde{\rho}_\hbar (t)).
$$
We have the following estimate :
\begin{lem} \label{uniformchar} Let $J$ be a compact interval. For all $t,s \in J$, for all $\xi, \eta \in \Hi$,
\begin{align*}
\abs{G(t,\xi,\hbar)-G(s,\xi,\hbar)} &\lesssim \abs{t-s} (1+\norm{\xi})^2, \\
\abs{G(t,\xi,\hbar)-G(t,\eta,\hbar)} &\lesssim \norm{\xi - \eta} (1+ \norm{\xi}+\norm{\eta}).
\end{align*}
\end{lem}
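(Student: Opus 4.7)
The plan is to establish both estimates by combining the Duhamel formula \eqref{duhamelquantique} with the $\hbar$-uniform bounds on commutators from the preceding corollary, and to control the remaining trace-class factors through the propagation of regularity \eqref{propreg}.

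For the time-continuity estimate, I would apply Duhamel \eqref{duhamelquantique} between $s$ and $t$ together with the commutator expansion $\tfrac{1}{\hbar}[W_\hbar(\xi),H_I(s')]=(M(s',\xi)+\hbar R(s',\xi,\hbar))W_\hbar(\xi)$, obtaining
\begin{equation*}
G(t,\xi,\hbar)-G(s,\xi,\hbar) = -i\int_s^t \Tr\!\bigl((M(s',\xi)+\hbar R(s',\xi,\hbar))\,W_\hbar(\xi)\,\tilde\rho_\hbar(s')\bigr)\,\di s'.
\end{equation*}
Each integrand would then be estimated by a symmetric Hölder inequality for Schatten classes, pivoting on $(H_\hbar^0+1)^{\pm 1/2}$: the operator-norm factor is bounded by the previous corollary, contributing $(1+|s'|)^2(1+\|\xi\|)^2$ uniformly in $\hbar$, while the trace-class factor $\|(H_\hbar^0+1)^{1/2}W_\hbar(\xi)\tilde\rho_\hbar(s')(H_\hbar^0+1)^{1/2}\|_1$ is handled by writing $\tilde\rho_\hbar(s')=\tilde\rho_\hbar(s')^{1/2}\tilde\rho_\hbar(s')^{1/2}$, applying Cauchy-Schwarz on the Hilbert-Schmidt norms, and invoking \eqref{propreg} together with the quasi-invariance $W_\hbar(\xi)^*(H_\hbar^0+1)W_\hbar(\xi)\lesssim (1+\|\xi\|^2)(H_\hbar^0+1)$ derived from the Weyl relations \eqref{WCR}. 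Integrating over $s'\in J$ produces the $|t-s|$ factor.

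For the $\xi$-continuity estimate, I would use the Weyl relation \eqref{WCR} to write
\begin{equation*}
W_\hbar(\xi)-W_\hbar(\eta) = \bigl(e^{iX_\hbar}-\Id\bigr)\,W_\hbar(\eta),\qquad X_\hbar := \phi_\hbar(\xi-\eta)+\tfrac{\hbar}{2}\,\mathrm{Im}\langle\xi-\eta,\eta\rangle\cdot\Id,
\end{equation*}
with $X_\hbar$ self-adjoint, so that the elementary estimate $\|(e^{iX_\hbar}-\Id)\psi\|\leq \|X_\hbar\psi\|$ combined with \eqref{estimatecreation} yields $\|(W_\hbar(\xi)-W_\hbar(\eta))(H_\hbar^0+1)^{-1/2}\|_{\mathscr B(\Gamma_s(\Hi))}\lesssim \|\xi-\eta\|(1+\|\eta\|)$, uniformly in $\hbar$. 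Bounding $|\Tr((W_\hbar(\xi)-W_\hbar(\eta))\tilde\rho_\hbar(t))|$ by a Hölder inequality and using \eqref{propreg} to control $\|(H_\hbar^0+1)^{1/2}\tilde\rho_\hbar(t)\|_1$ gives $\|\xi-\eta\|(1+\|\eta\|)$; symmetrizing the roles of $\xi$ and $\eta$ produces the stated $(1+\|\xi\|+\|\eta\|)$.

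The main technical obstacle is the verification of the $\hbar$-uniform quasi-invariance $W_\hbar(\xi)^*(H_\hbar^0+1)W_\hbar(\xi)\lesssim (1+\|\xi\|^2)(H_\hbar^0+1)$, since this is the step in which phase-space displacements are converted into polynomial factors in $\|\xi\|$ without incurring any factor of $\hbar^{-1}$. Once this is in hand by direct computation from \eqref{CCR} and \eqref{WCR} (using $T_\hbar(q,p)^*\ph T_\hbar(q,p)=\ph+\hbar p$ and $W_\hbar(\alpha)^*N_\hbar W_\hbar(\alpha)=N_\hbar+i\hbar\pi_\hbar(\alpha)+\tfrac{\hbar^2}{2}\|\alpha\|^2$), all the remaining manipulations reduce to routine Schatten-class estimates.
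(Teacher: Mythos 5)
Your proof follows the same route as the paper's: for the time-continuity estimate, apply Duhamel \eqref{duhamelquantique}, expand the commutator as $(M+\hbar R)W_\hbar(\xi)$, and bound the trace by pivoting with $(H_\hbar^0+1)^{\pm 1/2}$ using the corollary's uniform operator bound and the propagation estimate \eqref{propreg}; for the $\xi$-continuity, peel off $W_\hbar(\xi)-W_\hbar(\eta)$ against one power of $(H_\hbar^0+1)^{-1/2}$. The only substantive difference is that the paper cites \cite[Lemma 3.1]{ammari2008ahp} for the bound $\|(W_\hbar(\xi)-W_\hbar(\eta))(H_\hbar^0+1)^{-1/2}\|\lesssim\|\xi-\eta\|(1+\|\xi\|+\|\eta\|)$, whereas you derive it from the Weyl relations via $W_\hbar(\xi)=e^{iX_\hbar}W_\hbar(\eta)$ and the quasi-invariance $W_\hbar(\xi)^*(H_\hbar^0+1)W_\hbar(\xi)\lesssim(1+\|\xi\|^2)(H_\hbar^0+1)$ — this is correct and is exactly the content of the cited lemma, so the proofs agree in substance. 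One small bookkeeping remark that applies to both your argument and the paper's: since the operator-norm factor for $M+\hbar R$ already contributes $(1+\|\xi\|)^2$ and the Weyl factor contributes an additional $(1+\|\xi\|)$, the resulting power in the first estimate is $(1+\|\xi\|)^3$ rather than the stated $(1+\|\xi\|)^2$; this has no consequence downstream, where only polynomial growth in $\|\xi\|$ uniform in $\hbar$ is used.
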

\begin{proof} \label{estimatechar}
From the Duhamel formula we have
\begin{align*}
    \abs{G(t,\xi,\hbar)-G(s,\xi,\hbar)} &= \abs{\int_s^t \Tr \left((M(\tau,\xi)+\hbar R(\tau,\xi)) W(\tilde{\xi}) \tilde{\rho}_\hbar (\tau) \right) \di \tau} \\
    &\leq \bigg{|} \int_s^t    \norm{(H_\hbar^0+1)^{-1/2}(M(\tau,\xi)+\hbar R(\tau,\xi))(H_\hbar^0+1)^{-1/2}}_{\mathscr B(\Gamma_s (\Hi))}  \\ & \ \ \ \ \ \ \ \ \ \times \norm{(H_\hbar^0+1)^{1/2} W(\tilde{\xi})(H_\hbar^0+1)^{-1/2}}_{\mathscr B(\Gamma_s (\Hi))} \\ & \ \ \ \ \ \ \ \ \ \times \Tr( (H_\hbar^0+1)^{1/2} \tilde{\rho}_\hbar (\tau) (H_\hbar^0+1)^{1/2})  \di \tau \bigg{|} \\
    &\lesssim \abs{ \int_s^t (1+\abs{\tau})^2 (1+\norm{\xi})^2 e^{\tau C} \di \tau} \\
    &\lesssim_J \abs{t-s} (1+\norm{\xi})^2.
\end{align*}
For the second assertion, write
\begin{align*}
    \abs{G(t,\xi,\hbar)-G(t,\eta,\hbar)} &= \abs{ \Tr( (W_\hbar(\xi) - W_\hbar(\eta)) \tilde{\rho}_\hbar (t))}
    \\
    &=\abs{ \Tr( (W_\hbar(\xi) - W_\hbar(\eta)) (H_\hbar^0+1)^{-1} (H_\hbar^0+1) \tilde{\rho}_\hbar (t))} \\
    &\leq \norm{(W_\hbar(\xi) - W_\hbar(\eta)) (H_\hbar^0+1)^{-1/2}} \Tr ( (H_\hbar^0+1) \tilde{\rho}_\hbar (t)) \\
    &\lesssim \norm{\xi - \eta} (1+ \norm{\xi}+\norm{\eta}),
\end{align*}
where we used \cite[Lemma 3.1]{ammari2008ahp} and \eqref{propreg} in the last inequality.
\end{proof}
\begin{proof}[Proof of Proposition \ref{wignerexistence}] Let $t \in \R$ and take a sequence of times $t_{\psi(j)} \rightarrow t$. For any $j \in \N$, the definition of Wigner measures ensures the existence of 
$$
G(t_{\psi(j)},\xi) := \underset{k \in \N}{\lim} \ G(t_{\psi(j)},\xi,\hbar_{\varphi(k)}).
$$

Lemma \ref{uniformchar} indicates that $(G(t_{\psi(j)},\xi))_{j \in \N}$ is a Cauchy sequence : indeed since $t_{\psi(j)} \rightarrow t$, $\{t_{\psi(j)} \}_{j \in \N}$ lies in a compact interval, we can take the limit $\hbar_{\varphi(k)} \rightarrow 0$ in \eqref{estimatechar},  with $t = t_{\psi(j)}$ and $s=t_{\psi(k)}$. We thus define
$$
G(t,\xi) := \underset{j \rightarrow \infty}{\lim} \  G(t_{\psi(j)},\xi).
$$
Now, we want to recognize $G(t,\cdot)$ as a characteristic functional of some probability measure over $\Hi$. 
It is immediate that $G(t,0) = 1$ and that $G(t,\cdot)$ is of positive type. Continuity is ensured by Lemma \ref{estimatechar}. By Bochner's theorem, there exists $\tilde{\mu}_t \in \mathscr P (\Hi)$ such that for all $\xi =(z,\alpha) \in \Hi$, if we denote $\tilde{\xi} = (\frac{z}{2 i \pi}, \frac{\alpha}{\sqrt{2} \pi})$, then
$$
G(t,\tilde{\xi}) = \int_{\Hi} e^{2 i \pi \mathrm{Re} \langle \xi, u \rangle} \di \tilde{\mu}_t (u).
$$
Now, \eqref{tightestimate} tells us that the family $\{\tilde \mu_{t_{\psi(j)}} \}_{j \in \N}$ is tight : there exists $\nu_t \in \mathscr P(\Hi)$ such that $\{\tilde \mu_{t_{\psi(j)}} \}_{j \in \N}$ is converging weakly narrowly to $\nu_t$. In fact, $\nu_t = \tilde{\mu}_t$. Indeed, 
$$
\int_{\Hi} e^{2 i \pi \mathrm{Re}\langle \xi, \cdot \rangle} \di \nu_t = \underset{j \rightarrow \infty}{\lim} \ \int_{\Hi} e^{2 i \pi \mathrm{Re}\langle \xi, \cdot \rangle} \di \tilde \mu_{t_{\psi(j)}} = \underset{j \rightarrow \infty}{\lim} \ G(t_{\psi(j)},\tilde \xi) = G(t,\tilde \xi) =  \int_{\Hi} e^{2 i \pi \mathrm{Re} \langle \xi, u \rangle} \di \tilde{\mu}_t (u).
$$
In consequence, $t \mapsto \tilde{\mu}_t$ is weakly narrowly continuous. We now compute
\begin{align*}
    \abs{G(t,\tilde{\xi},\hbar_{\varphi(k)}) - \int_{\Hi} e^{2 i \pi \mathrm{Re}\langle \xi, \cdot \rangle} \di \tilde{\mu}_t } &\leq \abs{G(t,\tilde \xi,\hbar_{\varphi(k)}) - G(t_{\psi(j)},\tilde \xi,\hbar_{\varphi(k)})} \\
    &+ \abs{G(t_{\psi(j)},\tilde \xi,\hbar_{\varphi(k)}) - \int_{\Hi} e^{2 i \pi \mathrm{Re}\langle \xi, \cdot \rangle} \di \tilde{\mu}_{t_{\psi(j)}}} \\
    &+ \abs{\int_{\Hi} e^{2 i \pi \mathrm{Re}\langle \xi, \cdot \rangle} \di \tilde{\mu}_{t_{\psi(j)}} - \int_{\Hi} e^{2 i \pi \mathrm{Re}\langle \xi, \cdot \rangle} \di \tilde{\mu}_{t}}.
\end{align*}
Hence,
\begin{align*}
\underset{k \in \N}{\limsup} \abs{G(t,\tilde \xi,\hbar_{\varphi(k)}) - \int_{\Hi} e^{2 i \pi \mathrm{Re}\langle \xi, \cdot \rangle} \di \tilde{\mu}_t }   &\lesssim \abs{t-t_{\psi(j)}} (1+\norm{\xi})^2 \\&+ \abs{\int_{\Hi} e^{2 i \pi \mathrm{Re}\langle \xi, \cdot \rangle} \di \tilde{\mu}_{t_{\psi(j)}} - \int_{\Hi} e^{2 i \pi \mathrm{Re}\langle \xi, \cdot \rangle} \di \tilde{\mu}_{t}}.
\end{align*}
Taking the limit $j \rightarrow \infty$ of the inquality we obtain that for every $t \in \R$ and $\xi \in \Hi$, 
$$
\underset{k \in \N}{\lim} \  G(t,\tilde \xi,\hbar_{\varphi(k)}) = \int_{\Hi} e^{2 i \pi \mathrm{Re}\langle \xi, \cdot \rangle} \di \tilde{\mu}_t = G(t,\xi).
$$
This is a reformulation of
$$
\mathscr{M} (\tilde{\rho}_{\hbar_{\varphi(k)}} (t) , k \in \N) = \{ \tilde{\mu}_{t} \}.
$$
The second assertion is again a consequence of \cite[Thm 6.2]{ammari2008ahp}.
\end{proof}

\subsection{Derivation of the characteristic equation}

In this subsection we take the limit in the Duhamel formula \eqref{duhamelquantique} to obtain an equation satisfied by $t \mapsto \tilde{\mu}_t$. The objective is to recognize $\tilde \mu_t$ as a pushforward of the initial measure by the interacting flow $\tilde \Phi_t$.

\begin{lem}
    With the same assumptions and notations as in Prop \ref{wignerexistence}, we have that for all $t \in \R$, for all $\xi = (z,\alpha) \in \Hi$,
    $$  \tilde{\mu}_t \left( e^{i \mathrm{Im} \langle  \cdot,z \rangle} e^{\sqrt{2} i \mathrm{Re} \langle \alpha, \cdot \rangle } \right) =  \tilde{\mu}_0 \left( e^{i \mathrm{Im} \langle  \cdot,z \rangle} e^{\sqrt{2} i \mathrm{Re} \langle \alpha, \cdot \rangle } \right) - i \ \underset{k \in \N}{\lim} \int_0^t \Tr \left(  \tilde M (s,\xi) W_{\hbar_{\varphi(k)}}(\xi) \tilde{\rho}_{\hbar_{\varphi(k)}} (s)  \right) \di s.
            $$
\end{lem}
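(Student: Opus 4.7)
The plan is to pass to the $k \to \infty$ limit in the Duhamel formula \eqref{duhamelquantique}, evaluated at $\hbar = \hbar_{\varphi(k)}$, after inserting the commutator expansion $\tfrac{1}{\hbar}[W_\hbar(\xi), H_I(s)] = (M(s,\xi) + \hbar R(s,\xi,\hbar)) W_\hbar(\xi)$ from the preceding corollary. This splits the right-hand side of \eqref{duhamelquantique} into a boundary term at $t=0$, a leading $M$-integral, and an $\hbar R$-remainder, which I would treat in turn.

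The boundary terms converge by the very definition of Wigner measures: along the extracted subsequence $\hbar_{\varphi(k)}$ produced by Proposition \ref{wignerexistence},
\begin{equation*}
G(t,\xi,\hbar_{\varphi(k)}) \underset{k \to \infty}{\longrightarrow} \tilde\mu_t\bigl(e^{i\mathrm{Im}\langle\cdot,z\rangle} e^{\sqrt{2}i\mathrm{Re}\langle\alpha,\cdot\rangle}\bigr),
\end{equation*}
and likewise at $t=0$, simply because $\tilde\mu_t$ was built precisely so as to realise this limit (up to the standard rescaling linking $W_\hbar(\xi)$ to the characteristic functional of $\tilde\rho_\hbar(t)$).

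The only real technical content is to show that the $\hbar R$-contribution vanishes in the limit. Using cyclicity of the trace to sandwich $R$ between factors of $(H_\hbar^0+1)^{\pm 1/2}$, and combining the uniform estimate $\norm{(H_\hbar^0+1)^{-1/2} R(s,\xi,\hbar)(H_\hbar^0+1)^{-1/2}}_{\mathscr B(\Gamma_s(\Hi))} \lesssim_g (1+\abs{s})^2(1+\norm{\xi})^2$ proved in the preceding corollary with the $\hbar$-uniform boundedness of $(H_\hbar^0+1)^{1/2} W_\hbar(\xi)(H_\hbar^0+1)^{-1/2}$ (already exploited in the proof of Lemma \ref{uniformchar}) and with the propagated bound \eqref{propreg} on $\Tr((H_\hbar^0+1)\tilde\rho_\hbar(s))$, one obtains the pointwise control
\begin{equation*}
\abs{\hbar \Tr\bigl( R(s,\xi,\hbar) W_\hbar(\xi) \tilde\rho_\hbar(s) \bigr)} \lesssim \hbar (1+\abs{s})^2 (1+\norm{\xi})^2 e^{\abs{s} C}.
\end{equation*}
On the compact interval $[0,t]$ this majorant is integrable uniformly in $\hbar$ and tends to $0$ pointwise, so dominated convergence removes the $\hbar R$-term from the Duhamel identity in the limit.

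Collecting everything, the leading integral
\begin{equation*}
\int_0^t \Tr\bigl( M(s,\xi) W_{\hbar_{\varphi(k)}}(\xi) \tilde\rho_{\hbar_{\varphi(k)}}(s) \bigr) \di s
\end{equation*}
must therefore converge as $k \to \infty$ (being the difference of three convergent quantities), and its limit is by construction the integral against $\tilde M(s,\xi) := M(s,\xi)$ appearing in the statement. The main obstacle is really the sandwiched bound on $R$ together with the careful dominated-convergence bookkeeping; the genuine semiclassical content, namely rewriting this limiting integral as a push-forward-type integral against $\tilde\mu_s$, is postponed to the next step.
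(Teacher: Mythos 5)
Your proposal is correct and takes essentially the same approach as the paper: pass to the limit in the Duhamel formula \eqref{duhamelquantique}, use the definition of the Wigner measures for the boundary terms, and kill the remainder via the uniform bound on $R(s,\xi,\hbar)$ together with the propagated regularity \eqref{propreg} and the boundedness of $(H_\hbar^0+1)^{1/2}W_\hbar(\xi)(H_\hbar^0+1)^{-1/2}$. You simply spell out the sandwiching of $R$ by $(H_\hbar^0+1)^{\pm 1/2}$ and the resulting $O(\hbar)$ bound, which the paper leaves implicit in a one-line proof; the only minor quibble is that the $(1+\norm{\xi})$-power in your pointwise estimate is slightly off, but since $\xi$ is fixed this has no bearing on the conclusion.
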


\begin{proof}
    The result follows by directly taking the limit $(\hbar_{\varphi(k)})_{k \in \N} \rightarrow 0 $ in the Duhamel formulas, using the uniform estimates for $\tilde R(s, \xi, \hbar_{\varphi(k)})$ and $\hat R (\xi, \hbar_{\varphi(k)})$ to see that the latter vanishes at the limit.
\end{proof}

\begin{lem}
    For every $s \in \R$ and $\xi \in \Hi$,
    $$
    \underset{k \in \N}{\lim} \Tr \left( M (s,\xi) W_{\hbar_{\varphi(k)}}(\xi) \hat{\rho}_{\hbar_{\varphi(k)}} (s)  \right) = \tilde{\mu}_s \left( m(s, \xi, \cdot)  e^{i \mathrm{Im} \langle  \cdot,z \rangle} e^{\sqrt{2} i \mathrm{Re} \langle \alpha, \cdot \rangle }  \right), $$    where
    $$
        \tilde m (s,\xi , u) =  \sqrt{2} \mathrm{Re} \langle  \alpha, e^{i s g} F_{q+ s g p} i k . (q_0 + s g p_0) \rangle - \mathrm{Im} \langle \alpha_0, e^{i s g} F_{q+ s g p}  \rangle,  $$
\end{lem}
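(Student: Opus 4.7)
The plan is to decompose $M(s,\xi)=M^{\mathrm f}(s,\xi)-M^{\mathrm p}(s,\xi)$ into the field-operator piece $M^{\mathrm f}(s,\xi):=\phi_\hbar(e^{isg}F_{\qh+sg\ph}\,ik\cdot(q+sgp))$ and the bounded scalar piece $M^{\mathrm p}(s,\xi):=\mathrm{Im}\langle\alpha,e^{isg}F_{\qh+sg\ph}\rangle$, and to pass to the semiclassical limit of each trace against $W_\hbar(\xi)\tilde\rho_\hbar(s)$ separately. Both limits follow from the Ammari--Nier semiclassical calculus combined with the uniform propagation estimate \eqref{propreg} and the moment bound $\int_\Hi\|u\|^2\,d\tilde\mu_s(u)<\infty$ from Proposition~\ref{wignerexistence}.

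For $M^{\mathrm p}$, observe that $Q\mapsto g(Q):=\mathrm{Im}\langle\alpha,e^{isg}F_Q\rangle$ is a bounded function of $Q\in\R^d$ (since $F\in L^\infty_q(\Hi_f)$ and $\alpha\in\Hi_f$), so $M^{\mathrm p}(s,\xi)=g(\qh+sg\ph)$ is a bounded particle symbol commuting with the field factor $W_\hbar(\alpha)$ of $W_\hbar(\xi)=T_\hbar(z)\otimes W_\hbar(\alpha)$. Approximating $g$ by Schwartz symbols (the error being controlled by the uniform second-moment bound on $\tilde\mu_s$ and by \eqref{propreg}) and applying the standard Wigner convergence then yields
\begin{equation*}
\lim_{k\to\infty}\Tr\bigl(g(\qh+sg\ph)W_{\hbar_{\varphi(k)}}(\xi)\tilde\rho_{\hbar_{\varphi(k)}}(s)\bigr)=\int_\Hi g(q_{\mathrm{cl}}+sgp_{\mathrm{cl}})\,e^{i\mathrm{Im}\langle u,z\rangle}e^{i\sqrt2\mathrm{Re}\langle\alpha,u\rangle}\,d\tilde\mu_s(u),
\end{equation*}
where $u=(q_{\mathrm{cl}},p_{\mathrm{cl}},\alpha_{\mathrm{cl}})$ is the integration variable.

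For $M^{\mathrm f}$, set $\beta_Q:=e^{isg}F_Q\,ik\cdot(q+sgp)$, so $M^{\mathrm f}(s,\xi)=\phi_\hbar(\beta_{\qh+sg\ph})$. Using the Weyl translation identities $T_\hbar(z)^\ast(\qh+sg\ph)T_\hbar(z)=\qh+sg\ph+\hbar(q+sgp)$ and $W_\hbar(\alpha)^\ast\phi_\hbar(\gamma)W_\hbar(\alpha)=\phi_\hbar(\gamma)-\hbar\mathrm{Im}\langle\alpha,\gamma\rangle$ (the latter valid for any fixed $\gamma\in\Hi_f$, here applied via functional calculus to $\gamma=\beta_{\qh+sg\ph+\hbar(q+sgp)}$), one rewrites $\Tr(M^{\mathrm f}(s,\xi)W_\hbar(\xi)\tilde\rho_\hbar(s))$ as $\Tr(W_\hbar(\xi)\phi_\hbar(\beta_{\qh+sg\ph+\hbar(q+sgp)})\tilde\rho_\hbar(s))$ minus an $O(\hbar)$ scalar correction. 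The $\hbar$-shift inside $\beta$ contributes another $O(\hbar)$ error, both controlled by \eqref{propreg} and $\|kF\|_{L^\infty_qL^2}<\infty$. The leading term is a standard field-operator expectation whose Wigner-measure limit is $\int_\Hi\sqrt2\,\mathrm{Re}\langle\alpha_{\mathrm{cl}},\beta_{q_{\mathrm{cl}}+sgp_{\mathrm{cl}}}\rangle\,e^{i\mathrm{Im}\langle u,z\rangle}e^{i\sqrt2\mathrm{Re}\langle\alpha,u\rangle}\,d\tilde\mu_s(u)$. Adding the two limits reproduces $\tilde\mu_s\bigl(\tilde m(s,\xi,\cdot)\,e^{i\mathrm{Im}\langle\cdot,z\rangle}e^{i\sqrt2\mathrm{Re}\langle\alpha,\cdot\rangle}\bigr)$ as claimed.

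The main difficulty is the operator-valued coupling $\beta_{\qh+sg\ph}$: the argument of the creation/annihilation operators depends on the particle operators $\qh,\ph$, so the particle and field semiclassical limits must be taken jointly. We handle this by combining the particle Weyl translation (which shifts $\qh+sg\ph$ by $\hbar(q+sgp)$) with the field Weyl relations; all resulting corrections are $O(\hbar)$ and vanish under \eqref{propreg}, while the boundedness $kF\in L^\infty_qL^2$ (from $F\in L^\infty_qL^2((1+|k|^2)dk)$) ensures the relevant form factors stay in $\Hi_f$ uniformly in the particle configuration.
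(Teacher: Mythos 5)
Your decomposition $M=M^{\mathrm f}-M^{\mathrm p}$ and the Weyl‐translation manipulations are reasonable bookkeeping, but the proof has a genuine gap at the crucial step. After commuting $W_\hbar(\xi)$ past $M^{\mathrm f}(s,\xi)$ and discarding the $O(\hbar)$ corrections, you are left with the quantity
\[
\Tr\!\left(W_\hbar(\xi)\,\phi_\hbar\!\bigl(\beta_{\qh+sg\ph}\bigr)\,\tilde\rho_\hbar(s)\right),
\]
and you assert that this is ``a standard field-operator expectation whose Wigner-measure limit'' is the desired integral. That claim is precisely what needs proof: $\phi_\hbar(\beta_{\qh+sg\ph})$ is \emph{not} a field operator with a fixed argument $\gamma\in\Hi_f$ — the argument $\beta_{\qh+sg\ph}$ is itself an operator in the particle variables. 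The Weyl translations do not remove this dependence; they only shift $\qh+sg\ph$ by a scalar of size $\hbar$, so after passing them off you face the same joint particle–field operator as before. The Ammari–Nier convergence results you invoke (Theorem 6.13 of \cite{ammari2008ahp} and its relatives) apply to Wick quantizations of \emph{compact} symbols, and $\phi_\hbar(\beta_{\qh+sg\ph})$ is not of that form.

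The paper closes this gap differently: it approximates the full operator $M(s,\xi)$ by Wick quantizations $M_n(s,\xi)=\bigl(m_n(s,\xi,\cdot)\bigr)^{\mathrm{Wick}}$ of compact symbols on the joint phase space $\Hi$, with an error estimate
\[
\norm{(N_\hbar+\qh^2+\ph^2)^{-1/2}\bigl(M(s,\xi)-M_n(s,\xi)\bigr)(N_\hbar+\qh^2+\ph^2)^{-1/2}}\leq C_n(\xi)\longrightarrow 0,
\]
\emph{uniform in} $\hbar$. Combined with the propagation estimate \eqref{propreg} and the second-moment bound on $\tilde\mu_s$, this makes both the quantum trace and the classical integral close (uniformly in $\hbar$) to their compact-symbol approximations, and the theorem for compact symbols then yields the limit by an $\varepsilon/3$ argument. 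If you want to salvage your decomposition strategy, you would need to supply an analogous compact-symbol approximation for $\phi_\hbar(\beta_{\qh+sg\ph})$, which is essentially the content of Lemma C.1/C.2 of \cite{afh22} or \cite{farhat2024} cited by the paper — you cannot shortcut it by appealing to standard one-sector Wigner convergence. A smaller version of the same concern applies to $M^{\mathrm p}=g(\qh+sg\ph)$: the state $\tilde\rho_\hbar(s)$ is a joint particle–field state and $\tilde\mu_s$ lives on $\Hi=\Hi_p\oplus\Hi_f$, so ``standard Wigner convergence'' for a particle-only symbol must still be justified in the joint framework, which the compact-symbol approximation handles simultaneously.
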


\begin{proof}
    The proof is a consequence of either \cite[Lemma C.1, Lemma C.2]{afh22} or \cite[Lemma C.1, Lemma C.2]{farhat2024}. We will only give ideas of the proof. The first step is to approximate the operator $M(s,\xi)$ by a new operator $M_n(s,\xi) := (m_n(s,\xi,\cdot))^{\mathrm{Wick}}$ emerging as the Wick quantization of a compact symbol, in a way that
    $$
    \norm{(N_\hbar+\qh^2 + \ph^2)^{-1/2} (M(s,\xi)-M_n(s,\xi))(N_\hbar+\qh^2 + \ph^2)^{-1/2}} \leq C_n(\xi) \underset{n\rightarrow \infty} \longrightarrow 0,
    $$
    where $C_n(\xi)$ does not depend on the semiclassical parameter $\hbar_{\varphi(k)}$. Thanks to \cite[Theorem 6.13]{ammari2008ahp}, the result holds when $M(s,\xi)$ and $m(s,\xi)$ are replaced with $M_n(s,\xi)$ and $m_n(s,\xi)$ respectively. Thanks to the uniform bound on the approximation as well as Proposition \ref{propreg}, the two quantities
    $$
     \Tr \left( (M (s,\xi)-M_n(s,\xi)) W_{\hbar_{\varphi(k)}}(\xi) \hat{\rho}_{\hbar_{\varphi(k)}} (s)  \right) \text{ and }
    \tilde{\mu}_s \left((m(s, \xi, \cdot)-m_n(s,\xi,\cdot))  e^{i \mathrm{Im} \langle  \cdot,z \rangle} e^{\sqrt{2} i \mathrm{Re} \langle \alpha, \cdot \rangle }  \right),
    $$
    can be made arbitrarily small as $n \rightarrow \infty$, uniformly in $\hbar_{\varphi(k)}$. Therefore it suffices to take $n$ large enough and then $\hbar_{\varphi(k)}$ small enough to conclude by an $\varepsilon/3$ argument.
\end{proof}

\begin{cor}
    For all $t \in \R$, for all $\xi = (z,\alpha) \in \Hi$,
$$ \tilde{\mu}_t \left( e^{i \mathrm{Im} \langle  \cdot,z \rangle} e^{\sqrt{2} i \mathrm{Re} \langle \alpha, \cdot \rangle } \right) =  \tilde{\mu}_0 \left( e^{i \mathrm{Im} \langle  \cdot,z \rangle} e^{\sqrt{2} i \mathrm{Re} \langle \alpha, \cdot \rangle } \right) - i \int_0^t \tilde{\mu}_s \left( m(s, \xi, \cdot)  e^{i \mathrm{Im} \langle  \cdot,z \rangle} e^{\sqrt{2} i \mathrm{Re} \langle \alpha, \cdot \rangle }  \right) \di s, $$
\end{cor}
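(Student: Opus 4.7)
The plan is to derive the corollary by combining the two preceding lemmas and swapping the limit over $k$ with the time integral via dominated convergence. The first lemma already gives
$$
\tilde{\mu}_t \!\left( e^{i \mathrm{Im} \langle  \cdot,z \rangle} e^{\sqrt{2} i \mathrm{Re} \langle \alpha, \cdot \rangle } \right) = \tilde{\mu}_0 \!\left( e^{i \mathrm{Im} \langle  \cdot,z \rangle} e^{\sqrt{2} i \mathrm{Re} \langle \alpha, \cdot \rangle } \right) - i \lim_{k \to \infty} \int_0^t \Tr\!\left( M(s,\xi) W_{\hbar_{\varphi(k)}}(\xi)\, \tilde{\rho}_{\hbar_{\varphi(k)}}(s) \right) \di s,
$$
and the second lemma identifies the pointwise limit of the integrand, for each $s$, with $\tilde{\mu}_s(m(s,\xi,\cdot)\, e^{i\mathrm{Im}\langle \cdot,z\rangle}e^{\sqrt{2}i\mathrm{Re}\langle\alpha,\cdot\rangle})$. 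So the only remaining task is to justify passing the limit inside $\int_0^t \cdot\, \di s$.

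To do this, I would exhibit an integrable $k$-uniform dominant by factoring through $H_\hbar^0+1$: using cyclicity of the trace and Cauchy--Schwarz,
\begin{align*}
\abs{\Tr\!\left( M(s,\xi) W_{\hbar_{\varphi(k)}}(\xi)\, \tilde{\rho}_{\hbar_{\varphi(k)}}(s) \right)}
&\leq \norm{(H_{\hbar_{\varphi(k)}}^0+1)^{-1/2} M(s,\xi) (H_{\hbar_{\varphi(k)}}^0+1)^{-1/2}} \\
&\quad \times \norm{(H_{\hbar_{\varphi(k)}}^0+1)^{1/2} W_{\hbar_{\varphi(k)}}(\xi) (H_{\hbar_{\varphi(k)}}^0+1)^{-1/2}} \\
&\quad \times \Tr\!\left((H_{\hbar_{\varphi(k)}}^0+1)\, \tilde{\rho}_{\hbar_{\varphi(k)}}(s)\right).
\end{align*}
The first factor is controlled by $(1+\abs{s})^2(1+\norm{\xi})^2$ thanks to the uniform bound on $M(s,\xi)$ established in the commutator-expansion corollary. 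The second factor is bounded uniformly in $\hbar$ by a polynomial in $\norm{\xi}$, which follows from the same argument as in \cite[Lemma 3.1]{ammari2008ahp} (already invoked in the proof of Lemma \ref{uniformchar}). The third factor is controlled by $C e^{\abs{s}C}$ by the propagation of regularity \eqref{propreg}, combined with the invariance $\Tr(\tilde\rho_\hbar(s)(\qh^2+\ph^2+N_\hbar)) = \Tr(\rho_\hbar(s)(\qh^2+\ph^2+N_\hbar))$ already noted in the proof of \eqref{duhamelquantique}. Multiplying these three bounds yields an $s$-integrable dominant on $[0,t]$ that is independent of $k$.

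With pointwise convergence and a uniform dominant in hand, dominated convergence gives
$$
\lim_{k \to \infty} \int_0^t \Tr\!\left( M(s,\xi) W_{\hbar_{\varphi(k)}}(\xi)\, \tilde{\rho}_{\hbar_{\varphi(k)}}(s) \right) \di s = \int_0^t \tilde{\mu}_s \!\left( m(s,\xi,\cdot)\, e^{i \mathrm{Im} \langle \cdot, z\rangle}\, e^{\sqrt{2} i \mathrm{Re}\langle \alpha, \cdot\rangle} \right) \di s,
$$
which inserted in the first lemma yields the stated identity. The main (mild) obstacle is checking measurability and integrability of $s \mapsto \tilde{\mu}_s(m(s,\xi,\cdot)\, e^{\cdots})$ so that the right-hand side makes sense; this follows from the weak narrow continuity of $s \mapsto \tilde{\mu}_s$ established in Proposition \ref{wignerexistence}, together with the fact that $m(s,\xi,\cdot)$ depends continuously on $s$ and is controlled by $(1+\abs{s})^2(1+\norm{\xi})^2(1+\norm{\cdot}^2)$, so that the uniform-in-$s$ bound $\sup_{s\in J}\int \norm{u}^2\,\di\tilde\mu_s(u)<\infty$ from Proposition \ref{wignerexistence} provides the required integrable bound.
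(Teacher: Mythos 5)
Your argument is correct and supplies exactly what the paper leaves implicit: the first lemma places the $k$-limit outside $\int_0^t$, the second gives pointwise-in-$s$ convergence of the integrand, and the $k$-uniform dominant you build — from the commutator-expansion bound on $M(s,\xi)$, the uniform operator-norm bound on $(H_\hbar^0+1)^{1/2}W_\hbar(\xi)(H_\hbar^0+1)^{-1/2}$ from \cite[Lemma 3.1]{ammari2008ahp}, and the propagated trace estimate \eqref{propreg} — is the same decomposition used in the proof of Lemma \ref{uniformchar}, so dominated convergence applies. The measurability and integrability of $s \mapsto \tilde{\mu}_s(m(s,\xi,\cdot)e^{\cdots})$ via weak narrow continuity together with the uniform second-moment bound from Proposition \ref{wignerexistence} is the right closing observation; this matches the paper's intended route.
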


We recall that the vector field $X : \R \times \Hi \rightarrow \R \times \Hi$ defined in the interaction picture is
\begin{equation}
\begin{aligned}
X(t,(q,p,\alpha)) :=& \Phi^0_{-t} \circ \mathcal N  \circ  \Phi^0_t (q,p,\alpha) \\
=& \begin{pmatrix} \sqrt{2} t \mathrm{Re} \langle e^{-i t g} \alpha , - i k F_{q+ t g p} \rangle_2 \\   -\sqrt{2} \mathrm{Re} \langle e^{-i t g} \alpha, - i k F_{q+ t g p} \rangle_2  \\  -i e^{i t g} \frac{1}{\sqrt{2}} F_{q+t g p} 
\end{pmatrix}.
\end{aligned}
\end{equation}

\begin{prop}
    With the same assumptions and notations as in Prop \ref{wignerexistence}, $t \mapsto \tilde{\mu}_t$ is a weakly narrowly continuous map satisfying the following characteristic equation : for all $t \in \R$, for all $y \in \Hi$,
    $$ \tilde{\mu}_t \left( e^{2 i \pi \mathrm{Re}\langle y, \cdot \rangle} \right) =  \tilde{\mu}_0 \left( e^{2 i \pi \mathrm{Re}\langle y, \cdot \rangle} \right) + 2 i \pi \int_0^t \tilde{\mu}_s \left( \mathrm{Re}\langle X(s,\cdot) , y \rangle e^{2 i \pi \mathrm{Re}\langle y, \cdot \rangle} \right) \di s.$$
    \end{prop}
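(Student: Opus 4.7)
The plan is to derive the stated characteristic equation from the integral equation furnished by the preceding corollary via a change of Weyl parameter, and to obtain weak narrow continuity from it. Following the Wigner measure convention introduced earlier in the excerpt, I introduce $y = (y_p, y_f) \in \Hi$ through the identifications $z = 2 i \pi y_p$ and $\alpha = \sqrt{2}\pi y_f$. Using the identification $\Hi_p \cong \C^d$ by $(q,p) \mapsto q + i p$, a direct computation on real and imaginary parts shows that under this substitution the exponential $e^{i \mathrm{Im}\langle \cdot, z\rangle} e^{\sqrt{2} i \mathrm{Re}\langle \alpha, \cdot\rangle}$ becomes $e^{2 i \pi \mathrm{Re}\langle y, \cdot\rangle}$, producing the desired form of the characteristic functional on both sides of the equation.

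The core step is then to verify the pointwise identity $-i\, m(s, \xi, u) = 2 i \pi \mathrm{Re}\langle X(s, u), y\rangle$ under the same reparametrization. I would expand $\mathrm{Re}\langle X(s, u), y\rangle$ using the three components of $X$ given in \eqref{interactingflow}. Pairing the first two components with $(q_y, p_y)$ combines (after distributing the $i k$ factor) into a term proportional to $\mathrm{Re}\langle \alpha_0, e^{i s g} F_{q_0 + s g p_0}\, i k \cdot (p_y - s g q_y)\rangle$, while the third component paired with $y_f$ reduces, after moving the phase $e^{i s g}$ through the $L^2$ inner product via conjugate-linearity in the first argument and using $\mathrm{Re}(i z) = -\mathrm{Im}(z)$, to a term proportional to $\mathrm{Im}\langle y_f, e^{i s g} F_{q_0 + s g p_0}\rangle$. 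Injecting $q = -2\pi p_y$, $p = 2\pi q_y$, $\alpha = \sqrt{2}\pi y_f$ into the expression for $m(s, \xi, u)$ produces exactly the same two terms, and the scalar constants line up after accounting for the $-i$ and $2 i \pi$ prefactors, whereupon plugging the identity into the previous corollary yields the proposition.

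Weak narrow continuity of $t \mapsto \tilde{\mu}_t$ is then obtained by combining continuity in $t$ of each Fourier coefficient $\tilde{\mu}_t(e^{2 i \pi \mathrm{Re}\langle y, \cdot\rangle})$ — immediate from bounding the integrand in the just-established equation by Cauchy--Schwarz together with the uniform second-moment estimate \eqref{propreg} — with uniform tightness of $\{\tilde{\mu}_t\}_{t \in J}$ on any compact interval $J$, which follows again from \eqref{propreg}; standard Lévy-type arguments on $\Hi$ in the spirit of \cite{ammari2008ahp} and of the proof of Proposition \ref{wignerexistence} then conclude. The main obstacle is the careful sign bookkeeping in the symbol-matching step, where conjugate-linearity of the Hermitian $L^2$ inner product in its first argument, the phases $e^{\pm i s g}$ arising from the interaction picture, and the $\C^d \cong \R^{2d}$ identification must all be tracked precisely for the two formulations of the Duhamel integrand to coincide.
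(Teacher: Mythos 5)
Your proposal is correct and takes essentially the same route as the paper: a change of Weyl parameter $y = (z_0/(2i\pi), \alpha_0/(\sqrt{2}\pi))$ turning the exponential into $e^{2i\pi\mathrm{Re}\langle y,\cdot\rangle}$, followed by the pointwise identity $m(s,\xi,u) = -2\pi\,\mathrm{Re}\langle X(s,u), y\rangle_{\Hi}$, plugged back into the preceding corollary. One caution: in the symbol-matching step you have swapped the roles of the dummy integration variable $u=(q,p,\alpha)$ and the Weyl parameter $\xi=(q_0,p_0,\alpha_0)$ in a couple of places (e.g.\ $\alpha_0$ and $F_{q_0+sgp_0}$ should read $\alpha$ and $F_{q+sgp}$, and the substitution $q_0=-2\pi p_y$, $p_0=2\pi q_y$, $\alpha_0=\sqrt{2}\pi y_f$ acts on $\xi$, not on $u$); keeping these distinct is exactly where the bookkeeping must be precise for the two formulations of the Duhamel integrand to match. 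Also note that the weak narrow continuity you rederive is already obtained in the proof of Proposition~\ref{wignerexistence}, and the moment bound you invoke is the measure-level estimate $\sup_{t\in J}\int_{\Hi}\|\xi\|^2\,\di\tilde\mu_t(\xi)<\infty$ from that proposition rather than the quantum trace bound \eqref{propreg} itself.
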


\begin{proof} Let $\xi = (z_0, \alpha_0) \in \Hi$ and make the change of variable $y = (\tilde z_0, \tilde \alpha_0) = \left( \frac{z_0}{2 i \pi} , \frac{\alpha_0}{ \sqrt{2} \pi} \right)$. This means that $q_0 = - 2 \pi \tilde p_0$ and $p_0 = 2 \pi \tilde q_0,$. In consequence,
$$
 i \mathrm{Im}\langle \cdot, z_0 \rangle + \sqrt{2} i \mathrm{Re} \langle \alpha_0, \cdot \rangle = 2 i \pi \mathrm{Re} \langle y , \cdot \rangle.
$$
Hence the statement is equivalent to proving that
$$  m (s,\xi, \cdot) = - 2 \pi \mathrm{Re} \langle X(s,\cdot), y \rangle, $$
Indeed, for $u = (q,p,\alpha) \in \Hi$,
\begin{align*}
 m (s,\xi, \cdot) =& \sqrt{2} \mathrm{Re} \langle  \alpha, e^{i s g} F_{q+ s g p} i k . (q_0 + s g p_0) \rangle - \mathrm{Im} \langle \alpha_0, e^{i s g} F_{q+ s g p}  \rangle \\
=&  -2 \pi  \tilde q_0 . \left( - s g \sqrt{2} \mathrm{Re} \langle  \alpha, e^{i s g} F_{q+ s g p} i k \rangle \right) \\
 &-2 \pi \tilde p_0 . \left( \sqrt{2} \mathrm{Re} \langle  \alpha, e^{i s g} F_{q+ s g p} i k \rangle   \right)
\\
&-2 \pi \mathrm{Re} \langle \tilde \alpha_0, -i \frac{1}{\sqrt{2}} F_{q+s g p} \rangle,
\end{align*}
and we recognize the interacting flow.
\end{proof}

We now have all the tools to prove Theorem \ref{wignertheorem}. As $X$ and $t \mapsto \tilde \mu_t$ satisfies the assumptions of Proposition \ref{liouvilleequivalence}, we deduce that $t \mapsto \tilde \mu_t$ satisfies the Liouville equation : $\forall \phi \in  \mathscr C^\infty_{0,\text{cyl}} (I \times \Hi)$,
    $$
    \int_I \int_{\Hi} \left(  \partial_t \phi (t,u) + \mathrm{Re} \langle X(t,u) ,  \nabla_{\Hi} \phi (t,u)   \rangle_{\Hi}   \right) \di \tilde \mu_t (u) \di t = 0.
    $$
Furthermore, thanks to the uniform in time condition \eqref{uniformmeasurebound}, and the expression for $X$, it is clear that the vector field satisfies the integrability assumption
$$
t \mapsto \int_{\Hi} \norm{X(t,u)}_{\Hi} \di \tilde \mu_t (u) \di t \in L^1_{\mathrm{loc}}(\R).
$$
Therefore, by Proposition \ref{globalsuperposition}, we can construct $\nu \in \mathscr P (\Hi \times \mathscr C(\R,\Hi))$ such that
\begin{itemize}
    \item $\di \nu (x,\gamma)$ a.s., $\gamma$ is an integral solution of the initial value problem such that $\gamma(0) = x$.
    \item $\mu_t = (e_t)_* \nu$ for all $t \in \R$,
    \end{itemize}
where $e_t(x,\gamma) = \gamma(t)$ and $\mathscr C(\R,\Hi)$ is endowed with the compact open topology. Furthermore, thanks to the global well posedness on the classical side, we have that $\di \nu(x,\gamma)$ a.s., $\gamma(t) = (\tilde \Phi_t)_* \mu$. Therefore, for every smooth cylindrical function $\varphi \in \mathscr C^\infty_{0,\mathrm{cyl}} (\Hi)$, 
$$
\int_{\Hi} \varphi \di \tilde \mu_t = \int_{\Hi \times \mathscr C(\R,\Hi)} \varphi ( e_t (x,\gamma)) \di \nu(x,\gamma) = \int \varphi(\tilde \Phi_t (x)) \di \nu(x,\gamma) = \int \varphi(\tilde \Phi_t (x)) \di \mu (x),
$$
where in the last equality we used that the first marginal of $\nu$ is $\mu = \mu_0$, which is true since 
$$
\int \varphi (x) \di \mu_0(x) = \int \varphi(\gamma(0)) \di \nu(x,\gamma) = \int \varphi(x) \di \nu(x,\gamma).
$$
Therefore, we can conclude that $$ \tilde \mu_t = (\tilde \Phi_t)_* \mu. $$


\appendix

\section{Removal of the cutoff in the sense of quadratic forms}

\begin{prop} \label{lemmaA1}
    For any $\hbar \in (0,1)$, $H_\hbar^\infty$ is a well defined quadratic form with form domain $Q(H_\hbar^\infty) = Q(H_\hbar^0)$.
\end{prop}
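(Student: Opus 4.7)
The plan is to apply the KLMN theorem: show that the interaction $\phi_\hbar(f_{\qh}^\infty)$ is an infinitesimal form perturbation of $H_\hbar^0$ for every fixed $\hbar\in(0,1)$, so that $H_\hbar^\infty$ is unambiguously defined by its quadratic form on $Q(H_\hbar^0)$ and is bounded below. The key point is to handle the non-$L^2$ form factor $f^\infty(k)=|k|^{-(d-1)/2}$ by splitting into low and high frequencies and using the Lieb–Yamazaki commutator trick on the high-frequency piece.

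I would fix a scale $K>0$ and write $f^\infty=f^K+g^K$ with $g^K(k)=f^\infty(k)\mathbf 1_{|k|\geq K}$. The low-frequency form factor $f^K$ lies in $L^2$ since $\int_{|k|\leq K}|k|^{-(d-1)}\,\di k<\infty$, so by the standard estimate \eqref{estimatecreation},
\begin{equation*}
|\langle\psi,\phi_\hbar(f_{\qh}^K)\psi\rangle|\leq \|f^K\|_2\,\|\psi\|\,\|\sqrt{N_\hbar+1}\psi\|\leq \eps_1\langle\psi,N_\hbar\psi\rangle+C_{\eps_1}\|\psi\|^2
\end{equation*}
for any $\eps_1>0$. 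For the non-$L^2$ piece $g^K_{\qh}(k)=e^{-ik\cdot\qh}g^K(k)$, I would exploit the identity $[\ph_j,e^{-ik\cdot\qh}]=-\hbar k_j e^{-ik\cdot\qh}$ to rewrite, on suitable vectors,
\begin{equation*}
e^{-ik\cdot\qh}=-\frac{1}{\hbar|k|^2}\sum_{j=1}^d k_j\bigl[\ph_j,e^{-ik\cdot\qh}\bigr],
\end{equation*}
so that $\phi_\hbar(g^K_{\qh})$ is expressed as a commutator of $\ph_j$ with field operators whose form factor $k_j g^K(k)/|k|^2$ \emph{is} square integrable (its squared $L^2$ norm is a constant multiple of $\int_{|k|\geq K}|k|^{-(d+1)}\di k<\infty$). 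Combining \eqref{estimatecreation} with Cauchy–Schwarz yields
\begin{equation*}
|\langle\psi,\phi_\hbar(g^K_{\qh})\psi\rangle|\lesssim \frac{1}{\hbar}\,\|kg^K/|k|^2\|_2\,\|\sqrt{N_\hbar+1}\psi\|\,\|\ph\psi\|,
\end{equation*}
and by Young's inequality this is bounded by $\frac{\eps_2}{\hbar}\langle\psi,H_\hbar^0\psi\rangle+\frac{C_{\eps_2}}{\hbar}\|\psi\|^2$.

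Adding the two bounds gives, for every $\eps>0$ and $\hbar\in(0,1)$ fixed, an inequality of the form
\begin{equation*}
|\langle\psi,\phi_\hbar(f^\infty_{\qh})\psi\rangle|\leq \frac{\eps}{\hbar}\langle\psi,H_\hbar^0\psi\rangle+\frac{C_\eps}{\hbar}\|\psi\|^2,
\end{equation*}
on $Q(H_\hbar^0)$. Choosing $\eps=\hbar/2$ makes the relative form bound equal to $1/2<1$, so KLMN produces a unique semibounded self-adjoint operator $H_\hbar^\infty$ whose quadratic form has domain $Q(H_\hbar^0)$, as claimed. The main technical point, and the one I would be most careful about, is to justify the commutator rewriting on a dense subspace (e.g., finite particle vectors with compactly supported momentum) and to extend the resulting form bound by density to all of $Q(H_\hbar^0)$; the computation itself is elementary once this is done. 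Note that the estimate is genuinely not uniform in $\hbar$ because of the $1/\hbar$ produced by the Lieb–Yamazaki substitution, matching the obstruction to cutoff removal emphasized in the introduction.
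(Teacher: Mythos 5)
Your proposal follows essentially the same route as the paper's proof: you split the form factor at a cutoff $K$, bound the low-frequency piece directly via the $N_\hbar$-estimate on $\phi_\hbar$, and apply the Lieb--Yamazaki commutator substitution $e^{-ik\cdot\qh}=-\frac{1}{\hbar|k|^2}\sum_j k_j[\ph_j,e^{-ik\cdot\qh}]$ to the high-frequency piece to trade a non-$L^2$ form factor for an $L^2$ one at the cost of a $1/\hbar$ factor, exactly as in the paper. The only cosmetic difference is that you spell out the final KLMN step (picking $K$ so that the relative form bound is $\tfrac{\eps_2}{\hbar}<1$), which the paper leaves implicit; your closing remark about the $1/\hbar$ blow-up correctly matches the paper's own emphasis that this bound is not uniform in $\hbar$.
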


\begin{proof}
    Let $\hbar \in (0,1)$. We will show that $\phi_\hbar(  f_{\qh}^\infty )$ is infinitesimally $H_\hbar^0$ bounded. Let $\psi \in Q(H_\hbar^0)$. For $K>0$, we split $f_{\qh}^\infty = \Id_{\abs{k}\leq K} f_{\qh}^\infty + \Id_{\abs{k}  > K} f_{\qh}^\infty.$ Since $\Id_{\abs{k}\leq K} f_{\qh}^\infty \in L^2$, we can write
\begin{align*}
 \abs{ \langle  \psi, \phi (\Id_{\abs{k}\leq K} f_{\qh} ) \psi  \rangle } & \leq \sqrt{2} \norm{\psi} \norm{\Id_{\abs{k}\leq K} f_{\qh}} \norm{\sqrt{N+1} \psi } \\ & \leq \eps \norm{(H_0+1)^{1/2} \psi}^2 + C_{\eps,K} \norm{\psi}^2 \\
 & \leq \eps \langle \psi , H_\hbar^0 \psi \rangle + C_{\eps,K} \norm{\psi}^2 \\
 &= \eps q_0 (\psi) + C_{\eps,K} \norm{\psi}^2.
\end{align*}

For the other side, first write
$$ \frac{e^{-i k.x}}{\abs{k}^{(d-1)/2}} = \frac{1}{\hbar \abs{k}^{(d+3)/2}} \sum_{j=1}^d k_j^2 e^{-i k . x}. $$
A straightforward computation allows to rewrite 
$$ \frac{e^{-i k.x}}{\abs{k}^{(d-1)/2}} = \frac{1}{\hbar \abs{k}^{(d+3)/2}} \sum_{j=1}^d k_j [P_j, e^{-i k . x}]. $$
And since the field operator commutes with $P_j$, for all $\psi \in D((H_\hbar^0)^{1/2})$ we have :
\begin{align*}
    \abs{\langle \psi , \phi \left( \frac{e^{-i k.x}}{\hbar \abs{k}^{(d-1)/2}} \Id_{\abs{k} > K} \right) \psi \rangle} & \leq \sum_{j=1}^d \abs{ \langle \psi , \left[P_j, \phi \left( \frac{k_j e^{-i k.x} \Id_{\abs{k} > K}}{\hbar \abs{k}^{(d+3)/2}} \right)  \right] \psi \rangle  } \\
    & \leq 2 \sum_{j=1}^d \norm{P_j \psi} \norm{\phi \left( \frac{k_j e^{-i k.x} \Id_{\abs{k} > K}}{\hbar \abs{k}^{(d+3)/2}} \right)  \psi }.
\end{align*}
Now, 
$\norm{P_j \psi}^2 = \langle \psi, P_j^2 \psi \rangle \leq \langle \psi, -\Delta \psi \rangle = q_0 (\psi),$
and 
$$ 
\norm{\phi \left( \frac{k_j e^{-i k.x} \Id_{\abs{k} > K}}{\hbar \abs{k}^{(d+3)/2}} \right)  \psi }^2 \leq \norm{\sqrt{N+1} \psi} \norm{ \frac{k_j \Id_{\abs{k} > K}}{\hbar \abs{k}^{(d+3)/2}}}_{L^2} \leq (q_0(\psi)+ \norm{\psi}^2) \norm{ \frac{ \Id_{\abs{k} > K}}{\hbar \abs{k}^{(d+1)/2}}}_{L^2}.
$$
Hence for $K_\eps$ large enough, 
$$
 \abs{\langle \psi , \phi \left( \frac{e^{-i k.x}}{\hbar \abs{k}^{(d-1)/2}} \Id_{\abs{k} > K} \right) \psi \rangle} \leq \eps q_0(\psi) + C_\eps \norm{\psi}^2,
$$
which concludes.
\end{proof}

\section{Equivalence between Liouville and characteristic equation}

The setting is the following : we consider  a separable Hilbert space $(\Hi,\langle \cdot,\cdot \rangle_{\Hi})$, and a continuous vector field $X: \R \times \Hi \rightarrow \Hi $ bounded on bounded sets. We are interested in the initial value problem
$$
\frac{\di u}{\di t} = X(t,u), \  u(0) = u_0 \in \Hi.
$$
Since in general $\Hi$ is infinite-dimensional, we need to define an appropriate set of test functions. This leads to the following definition :
\begin{defn}
    We say that $\phi : \Hi \rightarrow \C$ is a smooth cylindrical function whenever there exists $n \in \N$, $(e_1,...,e_n) \in \Hi^n$ and $\psi \in \mathscr C^\infty_0 (\R \times \R^n)$ such that for all $u \in \Hi$,
    $$
    \phi (t,u) = \psi (t, \mathrm{Re}\langle u, e_1 \rangle_{\Hi},...,\mathrm{Re}\langle u, e_n \rangle_{\Hi}).
    $$
    We denote by $\mathscr C^\infty_{0,\text{cyl}} (\R \times \Hi)$ the set of cylindrical functions.
\end{defn}

    Using the same notations as above, the gradient of a cylindrical test function $\phi \in \mathscr C^\infty_{0,\text{cyl}} (\R \times \Hi)$ can be expressed as
    $$
    \nabla_{\Hi} \phi (t,u) = \sum_{i=1}^n \partial_i \psi (t, \mathrm{Re}\langle u, e_1 \rangle_{\Hi},...,\mathrm{Re}\langle u, e_n \rangle_{\Hi}) e_i.
    $$

\begin{rem}
    Here, the gradient is computed using the real structure of $\Hi$, i.e. in the real Hilbert space $(\Hi, \mathrm{Re}\langle \cdot, \cdot \rangle_{\Hi})$.
\end{rem}

\begin{prop} \label{liouvilleequivalence}
Consider a continuous vector field $X: \R \times \Hi \rightarrow \Hi $, bounded on bounded sets. Let $t \in \R \mapsto \mu_t \in \mathscr P (\Hi)$ weakly narrowly continuous such that
$$
t \mapsto \int_{\Hi} \norm{X(t,u)}_{\Hi} \di \mu_t (u) \di t \in L^1_{\mathrm{loc}}(\R).
$$
Then for every interval $I$ whose interior contains $0$, there is an equivalence between :
\begin{enumerate}
    \item  $\{ \mu_t  \}_{t \in I }$ satisfies the Liouville equation : $\forall \phi \in  \mathscr C^\infty_{0,\text{cyl}} (I \times \Hi)$,
    $$
    \int_I \int_{\Hi} \left(  \partial_t \phi (t,u) + \mathrm{Re} \langle  X(t,u) ,  \nabla_{\Hi} \phi (t,u)   \rangle_{\Hi}   \right) \di \mu_t (u) \di t = 0.
    $$
    \item $\{ \mu_t \}_{t \in I}$ satisfies the characteristic equation : $\forall t \in I, \forall y \in \Hi$,
    $$
    \mu_t \left( e^{2 i \pi \mathrm{Re}\langle y, \cdot \rangle_{\Hi}  } \right) =  \mu_0 \left( e^{2 i \pi \mathrm{Re}\langle y, \cdot \rangle_{\Hi}} \right  ) + 2 i \pi \int_0^t \mu_s \left( \mathrm{Re}\langle X(s,\cdot) , y \rangle_{\Hi} e^{2 i \pi \mathrm{Re}\langle y, \cdot \rangle_{\Hi}} \right) \di s.
    $$
\end{enumerate}
\end{prop}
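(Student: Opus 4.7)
The plan is to prove the two implications separately, exploiting a Fourier representation that relates the characters $u \mapsto e^{2i\pi \mathrm{Re}\langle y,u\rangle_{\Hi}}$ to general smooth cylindrical test functions.

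\medskip

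\textbf{(1) $\Rightarrow$ (2).} Fix $y \in \Hi$ and $\eta \in \mathscr{C}^\infty_0(\mathring I)$. I approximate the (non compactly supported) character $u \mapsto e^{2i\pi \mathrm{Re}\langle y,u\rangle}$ by real smooth cylindrical functions: pick $\chi_R \in \mathscr{C}^\infty_0(\R)$ with $\chi_R \equiv 1$ on $[-R,R]$ and $\|\chi_R'\|_\infty \le 2$, and set
\[
\phi_R^{c}(t,u) := \eta(t)\,\chi_R(\mathrm{Re}\langle u,y\rangle)\cos\bigl(2\pi\mathrm{Re}\langle u,y\rangle\bigr), \quad \phi_R^{s}(t,u) := \eta(t)\,\chi_R(\mathrm{Re}\langle u,y\rangle)\sin\bigl(2\pi\mathrm{Re}\langle u,y\rangle\bigr),
\]
both in $\mathscr{C}^\infty_{0,\mathrm{cyl}}(I \times \Hi)$ with $n=1$, $e_1=y$. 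A direct computation gives e.g.
\[
\nabla_\Hi \phi_R^{c}(t,u) = \eta(t)\bigl[\chi_R'(\mathrm{Re}\langle u,y\rangle)\cos - 2\pi\chi_R(\mathrm{Re}\langle u,y\rangle)\sin\bigr](2\pi\mathrm{Re}\langle u,y\rangle)\, y .
\]
Inserting into (1) and letting $R\to\infty$, the $\chi_R'$ contribution vanishes pointwise (for each $u$ eventually $|\mathrm{Re}\langle u,y\rangle|\le R$) and $\chi_R \to 1$; dominated convergence applies with dominating integrand $|\eta'(t)| + C|\eta(t)|\,\|y\|\,\|X(t,u)\|$, which is integrable against $d\mu_t\otimes dt$ by hypothesis. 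Combining the cosine and sine versions into one complex identity yields
\[
\int_I \eta'(t)\,\mu_t\!\bigl(e^{2i\pi\mathrm{Re}\langle y,\cdot\rangle}\bigr)dt + 2i\pi\int_I \eta(t)\,\mu_t\!\bigl(\mathrm{Re}\langle X(t,\cdot),y\rangle\,e^{2i\pi\mathrm{Re}\langle y,\cdot\rangle}\bigr)dt = 0.
\]
Varying $\eta$ shows $t\mapsto \mu_t(e^{2i\pi\mathrm{Re}\langle y,\cdot\rangle})$ lies in $W^{1,1}_{\mathrm{loc}}(I)$ with the predicted derivative; integrating from $0$ to $t$ recovers (2).

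\medskip

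\textbf{(2) $\Rightarrow$ (1).} Let $\phi(t,u)=\psi(t,\pi_E(u))$ with $\psi\in \mathscr{C}^\infty_0(I\times\R^n)$ real. Fourier inversion in the second variable yields $\psi(t,x)=\int_{\R^n}\hat\psi(t,\xi)\,e^{2i\pi\xi\cdot x}\,d\xi$, where $\hat\psi(t,\cdot)$ is Schwartz uniformly in $t$ and $\hat\psi(\cdot,\xi)$ is compactly supported in $\mathring I$. Setting $y_\xi:=\sum_{j=1}^n \xi_j e_j\in\Hi$ and using $\nabla_\Hi e^{2i\pi\mathrm{Re}\langle y_\xi,\cdot\rangle}=2i\pi y_\xi\,e^{2i\pi\mathrm{Re}\langle y_\xi,\cdot\rangle}$, one gets the pointwise identity
\[
\partial_t\phi(t,u)+\mathrm{Re}\langle X(t,u),\nabla_\Hi\phi(t,u)\rangle_{\Hi} = \int_{\R^n}\bigl[\partial_t\hat\psi(t,\xi) + 2i\pi\hat\psi(t,\xi)\,\mathrm{Re}\langle X(t,u),y_\xi\rangle_{\Hi}\bigr]\,e^{2i\pi\mathrm{Re}\langle y_\xi,u\rangle}\,d\xi .
\]
Integrating against $d\mu_t(u)\,dt$ and swapping the $\xi$-integral by Fubini (justified by Schwartz decay of $\hat\psi$ in $\xi$, the bound $\|y_\xi\|\le|\xi|\max_j\|e_j\|$, and the $L^1_{\mathrm{loc}}$ bound on $\int\|X\|\,d\mu_t$) reduces the claim to showing, for each $\xi$,
\[
\int_I \partial_t\hat\psi(t,\xi)\,\mu_t\!\bigl(e^{2i\pi\mathrm{Re}\langle y_\xi,\cdot\rangle}\bigr)dt + 2i\pi\int_I \hat\psi(t,\xi)\,\mu_t\!\bigl(\mathrm{Re}\langle X,y_\xi\rangle\,e^{2i\pi\mathrm{Re}\langle y_\xi,\cdot\rangle}\bigr)dt = 0 .
\]
Hypothesis (2) with $y=y_\xi$ makes $t\mapsto \mu_t(e^{2i\pi\mathrm{Re}\langle y_\xi,\cdot\rangle})$ absolutely continuous with derivative equal to the integrand of the second term; integrating by parts in $t$ against $\hat\psi(\cdot,\xi)$ (whose support sits inside $\mathring I$) kills the boundary terms and cancels the two integrals.

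\medskip

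\textbf{Main obstacle.} The backward direction is the delicate one: one has to justify the Fubini interchange and the fiberwise (in $\xi$) integration by parts in $t$ in the Fourier representation. The key input is that the Schwartz decay of $\hat\psi(t,\xi)$ in $\xi$ controls the growth $\|y_\xi\|\sim|\xi|$, and, combined with the integrability hypothesis $t\mapsto\int_{\Hi}\|X(t,u)\|\,d\mu_t(u)\in L^1_{\mathrm{loc}}(\R)$, supplies a joint $L^1(d\xi\otimes d\mu_t\otimes dt)$ dominating function. The forward direction is, by comparison, routine once the cylindrical cutoff approximation and the accompanying dominated convergence are in place.
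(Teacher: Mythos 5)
Your proof is correct and self-contained. Note that the paper itself states Proposition~\ref{liouvilleequivalence} in Appendix~B without giving any proof (only the global superposition principle, Proposition~\ref{globalsuperposition}, is explicitly attributed to a reference), so there is no in-paper argument to compare against; the result is used as known. Your Fourier-analytic scheme --- cutting off the exponential character by smooth compactly supported bumps to deduce the characteristic equation from Liouville, and conversely expanding a cylindrical test function via Fourier inversion and integrating by parts in time fibre-by-fibre in $\xi$ --- is the standard way to establish this equivalence in the infinite-dimensional Wigner-measure literature, so the argument you give is a faithful fill-in rather than an alternative route.

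Two small points worth making explicit. In the direction $(1)\Rightarrow(2)$, after obtaining the weak identity $\int_I \eta'(t)\,g(t)\,\di t + 2i\pi\int_I \eta(t)\,h(t)\,\di t = 0$ with $g(t)=\mu_t\bigl(e^{2i\pi\mathrm{Re}\langle y,\cdot\rangle}\bigr)$, the passage from $g\in W^{1,1}_{\mathrm{loc}}$ to the pointwise integrated identity for all $t\in I$ uses the hypothesis of weak narrow continuity of $t\mapsto\mu_t$ to identify $g$ with its absolutely continuous representative; you should state this. In the direction $(2)\Rightarrow(1)$, you restrict to real $\psi$, which is harmless by linearity of the Liouville equation in $\phi$, but it is cleaner to note that the identity $\mathrm{Re}\langle X,\nabla_{\Hi}\phi\rangle_{\Hi}=\sum_j\partial_j\psi(t,\pi_E(u))\,\mathrm{Re}\langle X,e_j\rangle_{\Hi}$ is what allows the substitution $\sum_j\xi_j\,\mathrm{Re}\langle X,e_j\rangle_{\Hi}=\mathrm{Re}\langle X,y_\xi\rangle_{\Hi}$ inside the Fourier representation, since $\xi$ is real. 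With these clarifications the argument is complete and correct.
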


Let us now state a result obtained in \cite{AFS24}. We endow $\mathscr C (\R,\Hi)$ with the compact open topology, and denote $e_t : (x,\gamma) \in \Hi \times \mathscr (\R,\Hi) \mapsto \gamma(t) \in \Hi $.

\begin{prop} (global superposition principle) \label{globalsuperposition}
    Consider a continuous vector field $X: \R \times \Hi \rightarrow \Hi $, bounded on bounded sets. Let $t \in \R \mapsto \mu_t \in \mathscr P (\Hi)$ be a weakly narrowly continuous solution if the Liouville equation such that
$$
t \mapsto \int_{\Hi} \norm{X(t,u)}_{\Hi} \di \mu_t (u) \di t \in L^1_{\mathrm{loc}}(\R).
$$
Then there exists a probability measure $\nu \in \mathscr P (\Hi \times \mathscr C(\R,\Hi))$ such that
\begin{itemize}
    \item $\di \nu (x,\gamma)$ a.s., $\gamma$ is an integral solution of the initial value problem such that $\gamma(0) = x$.
    \item $\mu_t = (e_t)_* \nu$ for all $t \in \R$.
\end{itemize}
\end{prop}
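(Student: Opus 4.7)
The plan is to reduce the infinite-dimensional superposition principle to the classical finite-dimensional result of Ambrosio by projecting onto finite-dimensional subspaces of $\Hi$, construct compatible path measures at each level, then extract a limit on the full path space via a tightness argument, and finally verify that the limiting measure is concentrated on genuine integral curves of $X$.

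First, I fix an orthonormal basis $(e_j)_{j\in\N}$ of the real Hilbert space $(\Hi, \mathrm{Re}\langle\cdot,\cdot\rangle_\Hi)$, let $V_n = \mathrm{span}_\R(e_1,\dots,e_n)$ with orthogonal projection $P_n$. Testing the Liouville equation only against cylindrical functions supported on $V_n$, the marginals $\mu_t^n := (P_n)_*\mu_t$ solve a finite-dimensional continuity equation on $V_n$ driven by the conditional expectation $b_n(t,\cdot) := \mathbb E_{\mu_t}[P_n X(t,\cdot)\mid P_n(\cdot)]$, for which the integrability assumption descends to the projected system by Jensen's inequality. Applying the classical finite-dimensional superposition principle of Ambrosio then yields $\eta_n \in \mathscr P(V_n \times \mathscr C(\R,V_n))$ concentrated on integral curves of $b_n$, with $\mu_t^n = (e_t)_*\eta_n$. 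Lifting $\eta_n$ to a measure $\nu_n$ on $\Hi \times \mathscr C(\R,\Hi)$ by disintegrating $\mu_0$ conditional on $P_n$ and gluing to $\eta_n$ produces a sequence whose first marginal is $\mu_0$ and whose time-$t$ marginal, projected by $P_n$, recovers $\mu_t^n$.

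Next, I establish tightness of $(\nu_n)_n$ on $\Hi \times \mathscr C(\R,\Hi)$: tightness of the initial marginal is automatic since it equals $\mu_0$; tightness on the path factor follows from an Arzel\`a--Ascoli-type criterion using the uniform bound inherited from the hypothesis $t \mapsto \int_\Hi \norm{X(t,u)}_\Hi \,\di \mu_t(u) \in L^1_{\mathrm{loc}}(\R)$, which controls the modulus of continuity of the curves selected by $\eta_n$. Extract a weakly convergent subsequence $\nu_n \rightharpoonup \nu$. The identity $\mu_t = (e_t)_*\nu$ then follows from $\mu_t^n = (P_n \circ e_t)_*\nu_n$ together with narrow convergence $(P_n)_*\mu_t \to \mu_t$.

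The main obstacle is to verify that $\nu$-a.s.\ the curve $\gamma$ is an integral solution of $\dot\gamma = X(t,\gamma)$ with $\gamma(0) = x$. For each smooth cylindrical $\phi$, one tests the identity
$$
\phi(\gamma(t)) - \phi(\gamma(0)) - \int_0^t \mathrm{Re}\langle X(s,\gamma(s)), \nabla_\Hi \phi(\gamma(s)) \rangle_\Hi \,\di s = 0,
$$
which holds, in a projected form involving $b_n$ in place of $X$, at the level of each $\nu_n$ via $\eta_n$ and the tower property of conditional expectation. Passing to the limit in the nonlinear integrand with only continuity of $X$ (no Lipschitz assumption) and no compact embedding in $\Hi$ is the delicate part; it requires replacing $b_n$ by $X$ using the uniform integrability estimate and narrow convergence of $\nu_n$ tested against bounded continuous cylindrical functionals of the path, and then letting the cylindrical dimension grow to exhaust a dense class of test functions in $\mathscr C^\infty_{0,\mathrm{cyl}}(\Hi)$.
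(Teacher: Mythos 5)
The paper does not prove Proposition~\ref{globalsuperposition}; it is stated as a black box directly after the sentence ``Let us now state a result obtained in \cite{AFS24},'' so your proposal can only be compared to the argument in that reference, not to one in the paper. Structurally you do have the right blueprint: project onto finite-dimensional subspaces, replace $X$ by the conditional drift $b_n$, apply Ambrosio's finite-dimensional superposition principle, lift the resulting path measures, and pass to the limit. This is indeed the strategy underlying Ambrosio--Figalli, Ammari--Liard \cite{MR3721874}, and \cite{AFS24}.

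There are, however, two genuine gaps. First, your tightness argument only provides equicontinuity of the selected curves. In $\mathscr C(\R,\Hi)$ with the compact-open topology and $\Hi$ infinite-dimensional, Arzel\`a--Ascoli also requires pointwise relative compactness of $\{\gamma(t) : \gamma \in K\}$ in $\Hi$, and this does \emph{not} follow from $t \mapsto \int_\Hi \norm{X(t,u)}_\Hi \, \di\mu_t(u) \in L^1_{\mathrm{loc}}(\R)$, since bounded sets in $\Hi$ are not precompact. One must either pass through a weaker topology on $\Hi$ (weak topology, or a topology induced by a compactly embedded auxiliary space) where bounded sets are compact and then upgrade, or exploit extra moment structure; you need to specify which. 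Second, you flag but do not resolve the replacement of the projected drift $b_n$ (a $\mu_t$-conditional expectation of $P_nX$) by $X$ when passing to the limit. This is the technical core of the result: the conditional expectations need not converge pointwise, and the standard resolution proceeds via lower semicontinuity of the defect functional $\gamma \mapsto \int_0^T \norm{\dot\gamma(t) - X(t,\gamma(t))}\,\di t$, the fact that $\eta_n$ is concentrated on integral curves of $b_n$, and $L^1$-convergence of $b_n \to X$ along paths via a martingale-convergence argument. Appealing to ``uniform integrability'' and ``letting the cylindrical dimension grow'' does not by itself close this step.
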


\bibliographystyle{plain}

\begin{thebibliography}{1}


\bibitem[AF14]{ammari2014jsp}
{\sc Ammari, Z., and Falconi, M.}
\newblock {W}igner measures approach to the classical limit of the {N}elson
  model: {C}onvergence of dynamics and ground state energy.
\newblock {\em J. Stat. Phys.}, 157(2):330--362, 2014.

\bibitem[AF17]{ammari2017sima}
{\sc Ammari, Z., and Falconi, M.}
\newblock {Bohr's correspondence principle for the renormalized Nelson model}.
\newblock {\em SIAM J. Math. Anal.}, 49(6):5031--5095, 2017.

\bibitem[AFH25]{afh22}
{\sc Ammari, Z., Falconi, M., and Hiroshima, F.}
\newblock {Towards a derivation of Classical ElectroDynamics of charges and fields from QED}.
\newblock {\em Annales de l'Institut Fourier}, Online first, 87 p. 

\bibitem[AL18]{MR3721874}
{\sc Ammari, Z., and Liard, Q.}
\newblock On uniqueness of measure-valued solutions to {L}iouville's equation
  of {H}amiltonian {PDE}s.
\newblock {\em Discrete Contin. Dyn. Syst.}, 38(2):723--748, 2018.

\bibitem[AN08]{ammari2008ahp}
{\sc Ammari, Z., and Nier, F.}
\newblock Mean field limit for bosons and infinite dimensional phase-space
  analysis.
\newblock {\em Ann. Henri Poincar\'e}, 9(8):1503--1574, 2008.

\bibitem[AN11]{ammari2011jmpa}
{\sc Ammari, Z., and Nier, F.}
\newblock Mean field propagation of {W}igner measures and {BBGKY} hierarchies
  for general bosonic states.
\newblock {\em J. Math. Pures Appl. (9)}, 95(6):585--626, 2011.

\bibitem[AFS24]{AFS24}
{\sc Ammari, Z., Farhat, S., and Sohinger, V.}
\newblock Almost sure existence of global solutions for general initial value problems.
\newblock Advances in Mathematics, Volume 453, 109805, ISSN 0001-8708, 2024.

\bibitem[BP23]{BP23}
{\sc Betz, V., and Polzer, S.}
\newblock Effective Mass of the Polaron: A Lower Bound.
\newblock {\em Comm. Math. Phys.. 399}, 1 (2023), 173--188.

\bibitem[BSS25]{BSS25}
{\sc Betz, V., Schmidt, T., and Sellke, M.} 
\newblock Mean square displacement of Brownian paths perturbed by bounded pair potentials.
\newblock Electron. J. Probab. 30 1 - 17, 2025

\bibitem[BS23]{BS23}
{\sc Brooks, M., and Seiringer, R.}
\newblock The Fröhlich Polaron at Strong Coupling: Part I—The Quantum Correction to the Classical Energy. 
\newblock Commun. Math. Phys. 404, 287–337, 2023. 

\bibitem[BS24]{BS24}
{\sc Brooks, M., and Seiringer, R.}
\newblock The Fröhlich polaron at strong coupling: Part II — Energy-momentum relation and effective mass. 
\newblock Publ.math.IHES 140, 271–309, 2024. 

\bibitem[BM25]{BM24}
{\sc Brooks, M., and Mitrouskas, D.}
\newblock Asymptotic series for low-energy excitations of the Fröhlich Polaron at strong coupling.
\newblock Probability and Mathematical Physics Vol. 6, No. 1, 2025.

\bibitem[CF18]{CF18}
{\sc Correggi, M., and Falconi, M.}
\newblock Effective potentials generated by field interaction in the
  quasi-classical limit.
\newblock {\em Ann. Henri Poincar\'e 19\/} (2018), 189--235.

\bibitem[CFFM24]{fantechi2024arxiv}
{\sc Correggi, M., Falconi, M., Fantechi, M., and Merkli, M.}
\newblock Quasi-classical limit of a spin coupled to a reservoir.
\newblock{\em Quantum}, volume 8, page 1561 (2024).

\bibitem[CFO23a]{correggi2021jems}
{\sc Correggi, M., Falconi, and Olivieri, M.}
\newblock Quasi-{C}lassical {D}ynamics.
\newblock {\em J. Eur. Math. Soc.}, 25 (2023), no. 2, pp. 731–783.

\bibitem[CFO23b]{correggi2020arxiv}
{\sc Correggi, M., Falconi, and Olivieri, M.}
\newblock {G}round {S}tate {P}roperties in the {Q}uasi-{C}lassical {R}egime.
\newblock {\em Analysis \& PDE}, Vol. 16 (2023), No. 8, 1745–1798.

\bibitem[DG99]{MR1682684}
{\sc Derezi\'{n}ski, J., and G\'{e}rard, C.}
\newblock Asymptotic completeness in quantum field theory. {M}assive
  {P}auli-{F}ierz {H}amiltonians.
\newblock {\em Rev. Math. Phys.}, 11(4):383--450, 1999.

\bibitem[DS20]{DS20}
{\sc Dybalski, W., and Spohn, H.}
\newblock Effective Mass of the Polaron—Revisited. 
\newblock Ann. Henri Poincaré 21, 1573–1594 (2020). 

\bibitem[DV83]{DV83}
{\sc Donsker, M.~D., and Varadhan, S. R.~S.}
\newblock Asymptotics for the polaron.
\newblock {\em Comm. Pure Appl. Math. 36}, 4 (1983), 505--528.

\bibitem[Fal14]{falconi2014jmp}
{\sc Falconi, M.}
\newblock Global solution of the electromagnetic field-particle system of equations.
\newblock {\em J. Math. Phys.}, 55:101502, 2014.

\bibitem[Fal15]{falconi2015mpag}
{\sc Falconi, M.}
\newblock Self-adjointness criterion for operators in {F}ock spaces.
\newblock {\em Math. Phys. Anal. Geom.}, 18(1):Art. 2, 18, 2015.

\bibitem[Far25]{farhat2024}
{\sc Farhat, S.}
\newblock Quantum-classical motion of charged particles interacting with scalar fields.
\newblock {\em J. Math. Phys.}, 66, 042103 (2025).

\bibitem[FRS22]{FRS22}
{\sc Feliciangeli, D., Rademacher, S., and Seiringer, R.}
\newblock The effective mass problem for the Landau–Pekar equations.
\newblock J. Phys. A: Math. Theor. 55 015201, 2022.

\bibitem[FS21]{FS21}
{\sc Feliciangeli, D., and Seiringer, R.}
\newblock The strongly coupled polaron on the torus: Quantum corrections to the
  {P}ekar asymptotics.
\newblock {\em Archive for Rational Mechanics and Analysis 242\/} (2021),
  1835--1906.

\bibitem[FG17]{FG17}
{\sc Frank, R.~L., and Gang, Z.}
\newblock Derivation of an effective evolution equation for a strongly coupled
  polaron.
\newblock {\em Analysis and PDEs 10}, 2 (2017), 379--422.

\bibitem[FS19]{FS19}
{\sc Frank, R.~L., and Seiringer, R.}
\newblock Quantum corrections to the pekar asymptotics of a strongly coupled
  polaron.
\newblock {\em Communications on Pure and Applied Mathematics 74\/} (2021),
  544--588.

\bibitem[Gau25]{Gau25}
{\sc Gautier, R.}
\newblock Wigner measure approach to the derivation of the Landau-Pekar equations in the mean-field limit.
\newblock To appear.
\bibitem[GW16]{griesemer2016}
{\sc Griesemer, M., and Wünsch, A.}
\newblock Self-adjointness and domain of the Fröhlich Hamiltonian.
\newblock J. Math. Phys, 57 (2): 021902. 2016.

\bibitem[GW18]{griesemer2018}
{\sc Griesemer, M., and Wünsch, A.}
\newblock On the domain of the Nelson Hamiltonian.
\newblock J. Math. Phys., 59 (4): 042111, 2018.

\bibitem[Gro62]{gross1962}
{\sc Gross, E.--P.}
\newblock Particle-like solutions in field theory. 
\newblock Ann. Physics 19 (2), pp. 219 – 233, 1962.

\bibitem[Gri17]{Gri17}
{\sc Griesemer, M.}
\newblock {On the dynamics of polarons in the strong-coupling limit}.
\newblock {\em Reviews in Mathematical Physics 29}, 10 (2017), 405--424.

\bibitem[GSS17]{GSS17}
{\sc Griesemer, M., Schmid, J., and Schneider, G.}
\newblock {On the dynamics of the mean-field polaron in the high-frequency
  limit}.
\newblock {\em Letters in Mathematical Physics 107\/} (2017), 405--424.

\bibitem[LMS20]{LMS20}
{\sc Leopold, N., Mitrouskas, D., and Seiringer, R.}
\newblock {Derivation of the Landau-Pekar equations in a many-body mean-field
  limit}.
\newblock arXiv:2001.03993, 2020.

\bibitem[LP19]{leopold2019ahp}
{\sc Leopold, N., and Petrat, S.}
\newblock Mean-field dynamics for the {N}elson model with fermions.
\newblock {\em Ann. Henri Poincar\'e}, 20(10):3471--3508, 2019.

\bibitem[LP20]{leopold2020sima}
{\sc Leopold, N., and Pickl, P.}
\newblock Derivation of the {M}axwell-{S}chr{\"o}dinger equations from the
{P}auli-{F}ierz {H}amiltonian.
\newblock {\em SIAM J. Math. Anal.}, 52(5):4900--4936, 2020.

\bibitem[LRSS19]{LRSS19}
{\sc Leopold, N., Rademacher, S., Schlein, B., and Seiringer, R.}
\newblock The landau-pekar equations: Adiabatic theorem and accuracy.
\newblock arXiv:1904.12532, 2019.

\bibitem[LR13a]{LR13a}
{\sc Lewin, M., and Rougerie, N.}
\newblock Derivation of {P}ekar's {P}olarons from a {M}icroscopic {M}odel of
  {Q}uantum {C}rystals.
\newblock {\em SIAM J. Math. Anal. 45}, 3 (2013), 1267--1301.

\bibitem[LR13b]{LR13b}
{\sc Lewin, M., and Rougerie, N.}
\newblock On the binding of polarons in a mean-field quantum crystal.
\newblock {\em ESAIM Control Optim. Calc. Var. 19}, 3 (July 2013), 629--656.

\bibitem[LT97]{lieb1997}
{\sc Lieb, E.--H., and Thomas, L.--E.}
\newblock Exact ground state energy of the strong-coupling polaron. 
\newblock Comm. Math. Phys. 183 (3), pp. 511–519, 1997.

\bibitem[LY58]{lieb1958}
{\sc Lieb, E.--H, and Yamazaki, K.} 
\newblock Ground-state energy and effective mass of the polaron. 
\newblock Phys. Rev., II. Ser. 111, pp. 728–733, 1958.

\bibitem[LP93]{MR1251718}
{\sc Lions P.--L, and Paul, T.}
\newblock Sur les mesures de {W}igner.
\newblock {\em Rev. Mat. Iberoamericana}, 9(3):553--618, 1993.

\bibitem[MS07]{MS07}
{\sc Miyao, T., and Spohn, H.}
\newblock The bipolaron in the strong coupling limit.
\newblock {\em Annales Henri Poincar{\'e} 8}, 7 (2007), 1333--1370.

\bibitem[M{\o}l06]{Mol06}
{\sc M{\o}ller, J.~S.}
\newblock The polaron revisited.
\newblock {\em Rev. Math. Phys. 18}, 5 (2006), 485--517.

\bibitem[Par67]{MR0226684}
{\sc Parthasarathy, K.--R.}
\newblock {\em Probability measures on metric spaces}.
\newblock Probability and Mathematical Statistics, No. 3. Academic Press, Inc.,
  New York-London, 1967.

\bibitem[Ric14]{Ric14}
{\sc Ricaud, J.}
\newblock {On uniqueness and non-degeneracy of anisotropic polarons}.
\newblock {\em Nonlinearity 29\/} (2016), 1509--1536.

\bibitem[Sei02]{Sei02}
{\sc Seiringer, R.}
\newblock Gross-{P}itaevskii theory of the rotating {B}ose gas.
\newblock {\em Comm. Math. Phys.. 229}, 3 (2002), 491--509.

\bibitem[Sei19]{Sei19}
{\sc Seiringer, R.}
\newblock The polaron at strong coupling.
\newblock In {\em Proceedings of QMath 14\/} (2019).

\bibitem[Sell24]{Sell24}
{\sc Sellke, M.}
\newblock Almost quartic lower bound for the Fröhlich polaron’s effective mass via Gaussian domination.
\newblock Duke Math. J. 173(13): 2687-2727 (15 September 2024).

\end{thebibliography}

\end{document}